\newcolumntype{H}{>{\setbox0=\hbox\bgroup}c<{\egroup}@{}}
\definecolor{myrefcolor}{RGB}{242, 10, 10}  
\definecolor{myurlcolor}{RGB}{255, 138, 48}  
\definecolor{antonio}{rgb}{.2,.5,.1}
\newtheorem{theorem}{Theorem}
\newtheorem{proposition}[theorem]{Proposition}
\newtheorem{definition}[theorem]{Definition}
\newtheorem{lemma}[theorem]{Lemma}
\newtheorem{corollary}[theorem]{Corollary}
\newtheorem{remark}[theorem]{Remark}
\newcommand{\MatC}[1]{\mathcal{L}\!\left(\mathbb{C}^{#1}\right)}
\newcommand{\hs}[2]{\langle #1, #2\rangle_{HS}}
\newcommand{\fu}{Dahlem Center for Complex Quantum Systems, Freie Universit\"{a}t Berlin, 14195 Berlin, Germany}
\begin{document}

\title{Efficient learning of quantum states prepared with few fermionic non-Gaussian gates}

\date{\today}

\author{Antonio Anna Mele}
\email[]{a.mele@fu-berlin.de}
\affiliation{\fu}

\author{Yaroslav Herasymenko}
\email[]{yaroslav@cwi.nl}
\affiliation{QuSoft and CWI, Science Park 123, 1098 XG Amsterdam, The Netherlands
}
\affiliation{QuTech, TU Delft, P.O. Box 5046, 2600 GA Delft, The Netherlands}
\affiliation{Delft Institute of Applied Mathematics, TU Delft, 2628 CD Delft, The Netherlands}

\begin{abstract}
The experimental realization of increasingly complex quantum states underscores the pressing need for new methods of state learning and verification. In one such framework, quantum state tomography, the aim is to learn the full quantum state from data obtained by measurements. Without prior assumptions on the state, this task is prohibitively hard.
Here, we present an efficient algorithm for learning states on $n$ fermion modes prepared by any number of Gaussian and at most $t$ non-Gaussian gates. By Jordan-Wigner mapping, this also includes $n$-qubit states prepared by nearest-neighbour matchgate circuits with at most $t$ SWAP-gates. Our algorithm is based exclusively on single-copy measurements and produces a classical representation of a state, guaranteed to be close in trace distance to the target state. The sample and time complexity of our algorithm is $\mathrm{poly}(n,2^t)$; 
thus if $t=\mathcal{O}(\log(n))$, it is efficient. We also show that, if $t$ scales \emph{slightly} more than logarithmically, any learning algorithm to solve the same task must be inefficient, under common cryptographic assumptions. 
We also provide an efficient property testing algorithm that, given access to copies of a state, determines whether such a state is far or close to the set of states for which our learning algorithm works. In addition to the outputs of quantum circuits, our tomography algorithm is efficient for some physical target states, such as those arising in time dynamics and low-energy physics of impurity models. 
Beyond tomography, our work sheds light on the structure of states prepared with few non-Gaussian gates and offers an improved upper bound on their circuit complexity, enabling an efficient circuit compilation method.
\end{abstract}
\maketitle

\section{Introduction}
Quantum state tomography is the task of reconstructing a classical description of a quantum state from experimental data~\cite{anshu2023survey,Cramer_2010}. Beyond its foundational significance in quantum information theory, it stands as the gold standard for verification and benchmarking of quantum devices~\cite{Cramer_2010}.
However, in the absence of any prior assumptions on the state to be learned, one encounters necessarily the \emph{curse of dimensionality} of the Hilbert space: learning the classical description of a generic quantum state demands resources that grow exponentially with the number of qubits~\cite{anshu2023survey,Haah_2017}. Simply storing and outputting the density matrix of a state already results in an exponential cost in time.
This raises the crucial question of identifying interesting classes of quantum states that can be efficiently learned using a number of state copies and time scaling at most \emph{polynomially} with the system size. Only a few classes of states are currently known to be efficiently learnable --- in particular, matrix product states~\cite{Cramer_2010,Lanyon_2017}, finitely-correlated states~\cite{fanizza2023learning}, high-temperature Gibbs states~\cite{rouzé2023learning}, states prepared by shallow quantum circuits~\cite{huang2024learning,landau2024learningquantumstatesprepared,kim2024learningstatepreparationcircuits}, stabilizer states~\cite{montanaro2017learning}, quantum phase states~\cite{arunachalam2023optimal}, and fermionic Gaussian states~\cite{aaronson2023efficient,Gluza_2018}.
The latter class of states comprises those prepared by fermionic Gaussian circuits~\cite{Surace_2022}, also referred to as free fermionic (non-interacting) evolutions or fermionic linear-optics circuits~\cite{Terhal_2002, knill2001fermionic}. Via Jordan-Wigner mapping, such states on $n$ fermionic modes can also be viewed as $n$-qubit states, prepared by generalized matchgate circuits~\cite{knill2001fermionic,Jozsa_2008,Valiant}.
Fermionic Gaussian states states play a key role in condensed matter physics and quantum chemistry, via the Hartree-Fock method and in the context of Fermi Liquid and Bardeen-Cooper-Schrieffer theories~\cite{Echenique_2007,Martin_2004,giuliani2008quantum, schrieffer2018theory}. These states are also essential in understanding many exactly solvable spin models~\cite{Baxter:1982zz,PhysRevB.83.075103,Kitaev_2006}.
In quantum computing, fermionic Gaussian states are primarily recognized for their efficient classical simulability~\cite{Terhal_2002,Valiant,Jozsa_2008}.
As in the case of Clifford circuits, for which the introduction of magic gates, such as T-gates, allows to reach universal quantum computation~\cite{Nielsen_Chuang_2010}, also for the case of Gaussian circuits the inclusion of certain magic gates~\cite{Hebenstreit_2019,Brod_2011,brod2014computational}, for example SWAP gates~\cite{Brod_2011}, allows to reach universality. If the number $t$ of T-gates in a Clifford circuit is low, the resulting states can still be efficiently simulated classically~\cite{gottesman1998heisenberg,Bravyi_2016,AarGot}; it has also been recently demonstrated that such states, termed as $t$-doped stabilizer states~\cite{Oliviero_2021,Leone_2021Chaos}, are still efficiently learnable~\cite{grewal2023efficient,leone2023learning,hangleiter2024bell}.
Similarly, in the past year, it has been shown that Gaussian circuits with a few magic gates are also classically simulable~\cite{dias2023classical,reardonsmith2023improved,cudby2023gaussian}. However, the learnability of such ``$t$-doped fermionic Gaussian states" remains unknown and this motivates the core-question of our work:
\begin{quote}\centering
    {\em Can we efficiently learn states prepared by Gaussian operations (e.g. matchgates) and a few magic gates?}
\end{quote}
We answer it by proposing a quantum algorithm of polynomial time and sample complexity that uses only single-copy measurements and learns a succint classical description of a $t$-doped fermionic Gaussian state; the learned state is guaranteed to be close to the true state in trace distance. Our presentation is framed in the language of qubits, but the results seamlessly translate into the fermionic formalism. Our learning algorithm may also be feasible to implement in near-term fermionic analog quantum simulators~\cite{Vijayan_2020,Mazurenko}, like cold atoms in optical lattices~\cite{Fermi-HUB}, since we only utilize time evolutions of simple few-body fermionic Hamiltonians~\cite{Naldesi_2023}.
The core of our algorithm relies on a result of independent interest, elucidating the structure of states in question. In particular, for any $t$-doped fermionic Gaussian state $\ket{\psi}$ we show that there exists a Gaussian operation $G$ such that
$
    G^\dag \ket{\psi} = \ket{\phi}\otimes\ket{0^{n-\kappa t}},
$
where $\ket{\phi}$ is supported on $\kappa t$ qubits and $\kappa$ is a small constant.
Informally, this says that all the magic (non-Gaussianity) of such states can be \emph{compressed} to a few qubits via a Gaussian operation. 
The proof of our compression theorem is constructive, which has implications for the circuit complexity of $\ket{\psi}$ and for improved preparation of doped fermionic Gaussian states.

\begin{figure*}[t]
    \centering
    \includegraphics[width=0.97\textwidth]{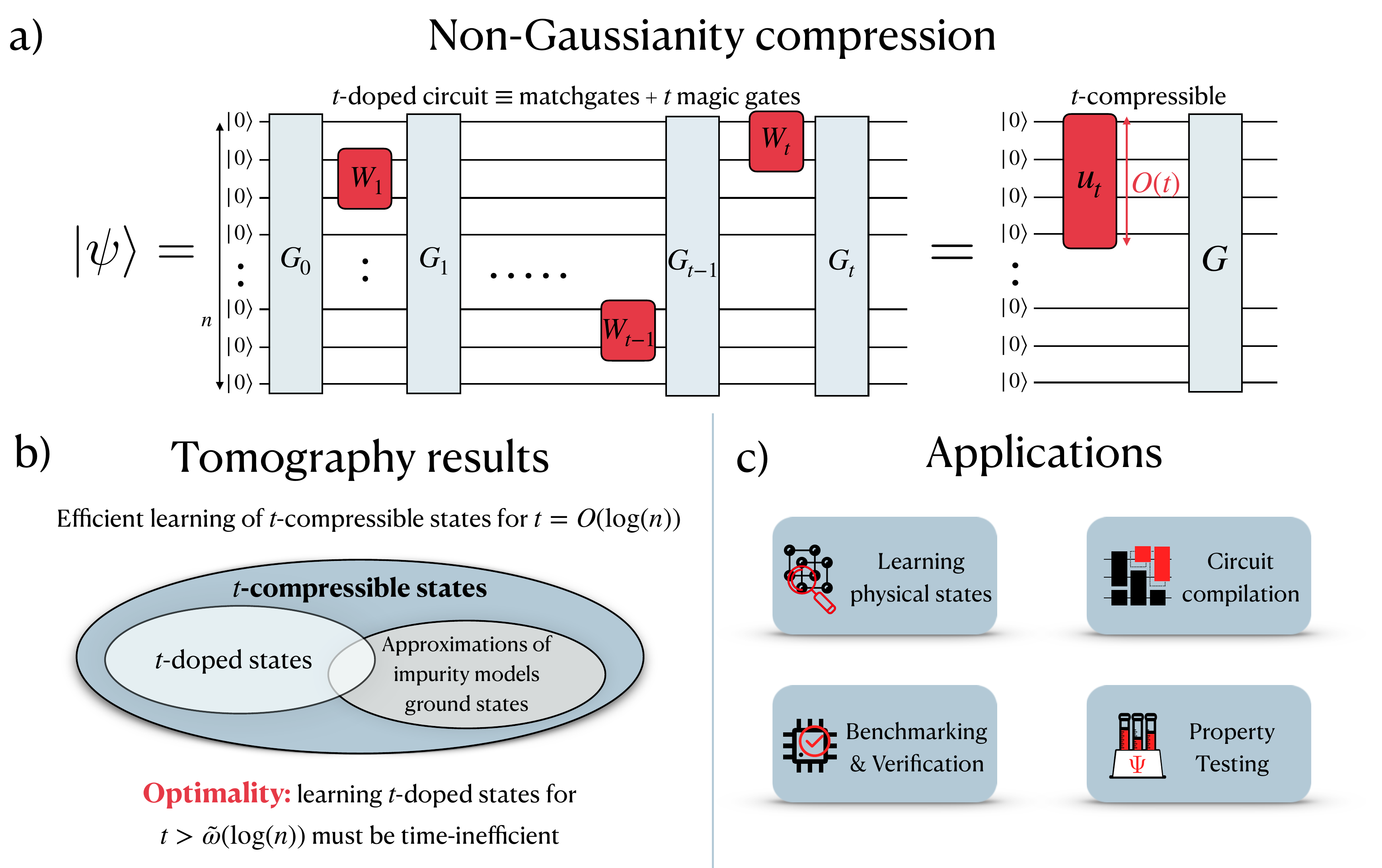}
    \caption{A visual summary of our work. a) We show that any $t$-doped state, i.e., a state prepared by an arbitrary number of Gaussian layers and at most $t$ local fermionic non-Gaussian gates (i.e., a $t$-doped matchgate circuit), can be equivalently prepared by a single non-Gaussian unitary $u_t$ acting solely on $O(t)$ fermionic modes (or qubits) and a global Gaussian unitary $G$. This shows that non-Gaussianity can be compressed into a localized part of the system. b) We provide a tomography algorithm to learn such $t$-compressible states, i.e., states of the form $G(\ket{\phi}\otimes\ket{0^{n-t}} )$, where $G$ is an arbitrary Gaussian unitary and $\ket{\phi}$ is an arbitrary quantum state supported solely on $t$ modes. The algorithm is efficient as long as $t = O(\log(n))$. This also directly implies an algorithm to learn $t$-doped states. Furthermore, up to common cryptographic assumptions, we prove that any tomography algorithm capable to learn a state prepared by slightly more than $\log(n)$ gates must be computationally inefficient. Moreover, we show that physical states, such as ground states of impurity models~\cite{Bravyi_2017}, are approximately $O(\log(n))$-compressible and thus efficiently learnable by our algorithm. c) Our results have various applications, including learning physical states; efficient circuit compilation (implied by the non-Gaussianity compression); benchmarking and verification of digital and analog quantum devices; and testing if the underlying unknown state is $t$-compressible --- of the form for which our algorithm is efficient.}
    \label{figmain}
\end{figure*}

The high level idea of the learning algorithm is to first learn a Gaussian unitary $G$ which \emph{compresses} the magic, apply it to the state, and then perform full state state tomography on the first few qubits alone.
Our learning algorithm has a time complexity $\mathcal{O}(\mathrm{poly}(n,2^t))$, i.e. it scales polynomially in the system size $n$ and exponentially in the number of non-Gaussian gates $t$. Thus it is efficient as long as the number of non-Gaussian gates is $t=\mathcal{O}(\log(n))$. Furthermore, we establish that the task of learning such states  is computationally intractable when the number of non-Gaussian gates scales \emph{slightly} more than logarithmically, under a common cryptography assumption~\cite{ShallowSrini,diakonikolas2022cryptographic,ananth2023revocable,gupte2022continuous}. We show the latter result using the theory of pseudorandom quantum states~\cite{Ji_2018,brakerski2019pseudo} and qubit-to-fermion mappings~\cite{Kitaev_2006}. In doing that, we bring pseudorandom quantum states, so far explored only for qubit-based systems, to the fermionic realm.
Our learning algorithm generalizes the one presented by Aaronson et al.~\cite{aaronson2023efficient}, which is tailored to learn only those states prepared by \emph{particle-number conserving} Gaussian gates and $t=0$ (in our work we relax both of these assumptions). 

Furthermore, our algorithm extends to all \emph{Gaussian compressible} states, i.e., those states which can be written as \(G(\sigma \otimes \ketbra{0^{n-t}}{0^{n-t}})G^{\dag}\), where \(G\) is a Gaussian unitary and \(\sigma\) is a possibly mixed quantum state supported on the first \(t\) qubits. We also propose an efficient method to \emph{test} if a given state is close or far from the set of compressible states, by showing an efficiently estimatable quantity that lower bounds the distance to this set. 
Moreover, we also demonstrate that our learning algorithm can learn states that are close to being compressible -- this feature is particularly significant as it emphasizes the noise-robustness of the algorithm. Additionally, we note that ground states of impurity models, a well-regarded class of quantum states in condensed matter physics~\cite{Kondo1964ResistanceMI,AndKond,AndersonLoc,Bravyi_2017}, are \emph{approximately} compressible, making our algorithm suitable for efficient tomography of such physically relevant states. We further provide numerical evidence showing that states prepared by time evolutions governed by impurity model Hamiltonians~\cite{Bravyi_2017} remain approximately compressible until constant evolution times, thus making them learnable by our tomography algorithm.

It should be noted that the concept of magic compression was first introduced in the context of Clifford+T circuits by Leone, Oliviero et al.~\cite{leone2023learning22PUBL1,leone2023learning22PUBL2} and later exploited for learning $t$-doped stabilizer states~\cite{grewal2023efficient,leone2023learning}. Our strategy of proving non-Gaussianity compression and applying it to quantum state tomography was inspired by these earlier works. It is an intriguing fact that a similar compression theorem holds in our context, even though the mathematical structures of stabilizer states and fermionic Gaussian states appear quite different.

See Figure \ref{figmain} for the visual summary of our results. In the next sections we summarize our findings, stating more precisely our results and the essential ideas that underlie them. In the Supplementary Material, we provide the technical details.

\subsection{Preliminaries}
Our work can be applied to two distinct and naively separate settings: a system of $n$ qubits with 1D matchgates circuits and their magic gates (e.g., SWAP gates), or a native fermionic system of $n$ modes with states prepared by fermionic Gaussian evolutions and local non-Gaussian evolutions. These two perspectives are mathematically related through the Jordan-Wigner mapping. We will use it now as a \emph{definition} of Majorana operators, thus directly aligning our discussion with the qubit language.
Majorana operators, denoted as $\gamma_{2k-1}$ and $\gamma_{2k}$ for $k \in [n]\coloneqq \{1,\dots,n\}$, are defined in terms of standard Pauli operators as $\gamma_{2k-1} \coloneqq (\prod_{j=1}^{k-1} Z_j) X_k$ and $\gamma_{2k} \coloneqq (\prod_{j=1}^{k-1} Z_j) Y_k$. Alternatively, they can be defined in the fermionic language through their anticommutation relations~\cite{Terhal_2002,Bravyi_2002}.
A fermionic Gaussian unitary $G$ is a unitary that satisfies $G^\dagger\gamma_\mu G= \sum^{2n}_{\nu=1} O_{\mu,\nu} \gamma_\nu$ for any $\mu \in [2n]$, where $O\in \mathrm{O}(2n)$ is an orthogonal matrix. The product of two Gaussian unitaries is Gaussian. Notably, a one-to-one correspondence exists between Gaussian unitaries up to a global phase and $\mathrm{O}(2n)$ orthogonal matrices. Given an orthogonal matrix, it is known how to exactly implement the associated Gaussian unitary using $\mathcal{O}(n^2)$ $2$-local qubits or $2$-local fermionic Gaussian operations~\cite{zhao2023PhD,dias2023classical,Zhao_2024}.
A pure fermionic Gaussian state can be defined as $\ket{\psi}=G\ket{0^n}$, where $G$ is a Gaussian unitary and $\ket{0^n}$ denotes the zero computational basis state.
Given a quantum state $\rho$, its correlation matrix $C(\rho)$ is defined as the real anti-symmetric $2n \times 2n$ matrix with elements $[C(\rho)]_{j,k} \coloneqq  -\frac{i}{2}\Tr\left(\gamma_j\gamma_k \rho\right)$, for any $j<k\in[2n]$.
We have that $C(G\rho G^{\dagger})= O  C(\rho) O^{T}$, for any Gaussian unitary $G$ associated with $O\in \mathrm{O}(2n)$.
A well-known result in linear algebra~\cite{BookLinAlg} asserts that any real anti-symmetric matrix $C$ can be decomposed in the so-called `normal form': 
\begin{align}
    C=O\bigoplus_{j = 1}^{n} \begin{pmatrix} 0 &  \lambda_j \\ -\lambda_j & 0 \end{pmatrix}O^T,
    \label{eq:decomAntisym}
\end{align}
where $O$ is an orthogonal matrix in $\mathrm{O}(2n)$ and $\lambda_j\ge 0 $, for any $j\in[n]$, are dubbed as `normal' eigenvalues, ordered in increasing order.
We denote the trace distance between two quantum states $\ket{\psi}$ and $\ket{\phi}$ as $d_{\mathrm{tr}}(\ket{\psi},\ket{\phi})  \coloneqq   \frac{1}{2}\norm{\ketbra{\psi}-\ketbra{\phi}}_1$. Given a matrix $A$, its operator norm $\norm{A}_{\infty}$ is defined as its largest singular value.
We refer to the Supplementary Material (SM) for more preliminaries.

\subsection{Structure of $t$-doped Gaussian states}
States prepared by Gaussian circuits applied to a computational basis state are efficiently simulable classically. However, by incorporating `non-Gaussian', or `magic' operations, such as $\operatorname{SWAP}$-gates~\cite{Brod_2011,Jozsa_2008}, one can render Gaussian circuits universal for quantum computation. The term `magic gate' comes from a loose parallel to Clifford circuits, which are efficiently simulable per se but become universal upon introduction of `magic' non-Clifford $\operatorname{T}$-gates.

Here we consider non-Gaussian operations generated by $\kappa$ Majorana operators $\{\gamma_{\mu(r)}\}^{\kappa}_{r=1}$, where $\mu(1),\dots,\mu(\kappa) \in [2n]$. Examples of such non-Gaussian operations for $\kappa=4$ are the $\operatorname{SWAP}$-gate or a unitary $\exp\!\left(i \theta \gamma_{1}\gamma_{5}\gamma_{6}\gamma_{8}\right)$ for $\theta \in \mathbb{R}$; for $\kappa=3$, an example is $\exp\!\left(\theta \gamma_{2}\gamma_{6}\gamma_{7}\right)$. We refer to $\kappa$ as the maximum Majorana locality of the employed non-Gaussian gates.

\begin{definition}[$t$-doped fermionic Gaussian state]
\label{def:tdoped}
A state $\ket{\psi}$ is a $(t,\kappa)$-doped Gaussian state if it can be prepared by Gaussian unitaries $\{G_i\}^{t}_{i=0}$ and $t$ non-Gaussian $\kappa$-local gates $\{W_i\}^{t}_{i=1}$, specifically
\begin{align}
    \ket{\psi} = G_{t}W_t\cdots G_1 W_1 G_0\ket{0^n},
\end{align}
where $\kappa$-local means that each non-Gaussian gate involves at most $\kappa$ Majorana operators. Informally, a state is $t$-doped Gaussian if it is $(t,\kappa)$-doped Gaussian for some fixed constant $\kappa$.
\end{definition}
Similarly, we denote the unitary $U_t  \coloneqq   G_{t}W_t \cdots G_1 W_1 G_0$ as a $t$-doped Gaussian unitary.
We now present our main result concerning the structure of $t$-doped Gaussian states: it is possible to compress all the `non-Gaussianity' of the state into a localized region of the system via a Gaussian operation. This motivates the following definition.
\begin{definition}[$t$-compressible Gaussian state]
\label{def:tcompr_maintext}
Let $t\in [n]$. A state $\ket{\psi}$ is (Gaussian) $t$-compressible if and only if 
\begin{align}
    \ket{\psi}=G(\ket{\phi}\otimes\ket{0^{n- t}}),
\end{align} 
where $G$ is a Gaussian operation, and $\ket{\phi}$ is a state supported solely on the first $t$ qubits.
\end{definition}
In the following, we assume $\kappa t \le n$.
\begin{theorem}[Magic compression in $t$-doped Gaussian states]
\label{th:1compr} 
Any $(t,\kappa)$-doped Gaussian state is $\kappa t$-compressible. 
\end{theorem}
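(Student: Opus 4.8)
The plan is to make the compression explicit in three moves: collect all the Gaussian operations into a single Gaussian unitary on the left; observe that the leftover non-Gaussian content is then supported on a Majorana subspace of dimension at most $\kappa t$; and rotate that subspace onto the first $\kappa t$ qubits by a Gaussian unitary that \emph{preserves the vacuum}, so that the state literally factorizes.

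First I would push every $G_i$ to the left of the circuit. Writing $\Gamma_i\coloneqq G_{i-1}\cdots G_0$ and $A_i\coloneqq \Gamma_i^\dagger W_i\Gamma_i$, a routine telescoping yields $\ket{\psi}=G_{\mathrm{tot}}\,A_t\cdots A_1\ket{0^n}$ with $G_{\mathrm{tot}}\coloneqq G_t\cdots G_0$ Gaussian. Since conjugation by a Gaussian unitary sends each $\gamma_\mu$ to a real linear combination of Majorana operators (it acts as the associated orthogonal matrix on the labels), $A_i$ is generated by the $\le\kappa$ operators $\Gamma_i^\dagger\gamma_{\mu(r)}\Gamma_i$, which span a subspace $V_i$ of the $2n$-dimensional real Majorana space with $\dim V_i\le\kappa$. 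Writing $\gamma(v)\coloneqq\sum_\mu v_\mu\gamma_\mu$, the product $W'\coloneqq A_t\cdots A_1$ therefore lies in the algebra generated by $\{\gamma(v):v\in V\}$, where $V\coloneqq\sum_i V_i$ has $\dim V\le\kappa t$. Thus all non-Gaussianity has been concentrated, up to the global Gaussian $G_{\mathrm{tot}}$, onto a Majorana subspace of dimension at most $\kappa t$.

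The main obstacle is the final alignment. The naive move is to pick a Gaussian whose orthogonal matrix sends $V$ into $\mathrm{span}(\gamma_1,\dots,\gamma_{2\kappa t})$ (the first $\kappa t$ qubits); but a generic such Gaussian does not fix $\ket{0^n}$, it maps it to an entangled Gaussian state, and the factorization is lost. The fix is to require the aligning Gaussian $G'$ to preserve the vacuum, which forces its orthogonal matrix to commute with the complex structure $J$ (the pairing $\gamma_{2k-1}\leftrightarrow\gamma_{2k}$ with $J^2=-I$), i.e.\ to lie in the subgroup $\mathrm{U}(n)\subset\mathrm{O}(2n)$. Such rotations can only bring $J$-invariant (complex) subspaces onto coordinate modes, on which $\mathrm{U}(n)$ acts transitively. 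I would therefore first enlarge $V$ to $\tilde V\coloneqq V+JV$, which is $J$-invariant, has even real dimension at most $2\kappa t$, and hence is a complex subspace of complex dimension at most $\kappa t$. A unitary rotation then maps $\tilde V$, and with it $V$, into the first $\kappa t$ modes while fixing $\ket{0^n}$ up to a phase; the doubling costs a factor of two but stays within $\kappa t$ qubits because $\dim V\le\kappa t$.

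Finally I would conclude. Let $G'\in\mathrm{U}(n)$ realize this alignment, so $\tilde W\coloneqq G'^\dagger W'G'$ lies in the algebra of $\gamma_1,\dots,\gamma_{2\kappa t}$ and therefore acts only on the first $\kappa t$ qubits, while $G'\ket{0^n}=\ket{0^n}$ up to phase. Then
\begin{align}
\ket{\psi}=G_{\mathrm{tot}}W'\ket{0^n}=G_{\mathrm{tot}}G'\,\tilde W\ket{0^n}=G\big(\ket{\phi}\otimes\ket{0^{n-\kappa t}}\big),
\end{align}
with $G\coloneqq G_{\mathrm{tot}}G'$ Gaussian and $\ket{\phi}\coloneqq\tilde W\ket{0^{\kappa t}}$ supported on the first $\kappa t$ qubits, which is exactly $\kappa t$-compressibility. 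Note that this argument never invokes the parity of the $W_i$, so it also covers odd (parity-violating) gates such as the $\kappa=3$ example. As an independent consistency check, the same count is visible through the correlation matrix: a $\kappa$-local non-Gaussian gate only modifies the $O(\kappa)$ rows and columns it touches, so the number of normal eigenvalues deviating from their extremal value $\tfrac12$ stays $O(\kappa t)$, and an extremal normal eigenvalue pins the corresponding mode (in normal form) to an eigenstate of a single-qubit $Z$, hence disentangled — recovering the factorized structure.
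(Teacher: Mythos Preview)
Your proof is correct and follows essentially the same route as the paper's. The paper performs the identical telescoping (their $\tilde W_{t'}=\tilde G_{t'-1}^\dagger W_{t'}\tilde G_{t'-1}$ is your $A_{t'}$), identifies the same obstacle (a generic orthogonal rotation of the Majorana subspace does not fix $\ket{0^n}$), and resolves it the same way: by demanding the aligning Gaussian have its orthogonal matrix in $\mathrm{O}(2n)\cap\mathrm{Sp}(2n,\mathbb{R})\cong\mathrm{U}(n)$, which is exactly your condition ``commutes with the complex structure $J$''. Your enlargement $\tilde V=V+JV$ and appeal to transitivity of $\mathrm{U}(n)$ on complex subspaces is the content of the paper's Lemma on compression via symplectic orthogonal transformations, phrased in complex-structure language rather than via the explicit bijection $\mathbf{w}\mapsto\mathbf{f}(\mathbf{w})$; the dimension count (real $\dim V\le\kappa t\Rightarrow$ complex $\dim\tilde V\le\kappa t\Rightarrow$ fits in the first $\kappa t$ modes) is identical in both.
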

\begin{proof}[Proof sketch]
Let $U_t \ket{0^n}$ be the $t$-doped state, where $U_t=(\prod^t_{t'=1}G_{t'}W_{t'})G_0$ is the $t$-doped unitary. We rearrange $U_t$ as $U_t=\tilde{G}_tG_{\mathrm{aux}}\prod^t_{t'=1}(G^{\dag}_{\mathrm{aux}}\tilde{W}_{t'}G_{\mathrm{aux}})G^{\dag}_{\mathrm{aux}}$, introducing a Gaussian operation $G_{\mathrm{aux}}$ to be fixed and defining $\tilde{W}_{t'} \coloneqq  \tilde{G}^\dag_{t'-1} W_{t'}\tilde{G}_{t'-1}$ and $\tilde{G}_{t'} \coloneqq  G_{t'}..G_0$. 
We require that $G_{\mathrm{aux}}$ satisfies $G^{\dag}_{\mathrm{aux}}\ket{0^n}=\ket{0^n}$, and that each $G^{\dag}_{\mathrm{aux}}\tilde{W}_{t'}G_{\mathrm{aux}}$ is supported non-trivially only on the first $\kappa t$ qubits. The latter is enforced by demanding that the Heisenberg evolution, via $\tilde{G}_{t'-1}G_{\mathrm{aux}}$, of each Majorana operator involved in the Hamiltonian generating $W_{t'}$, has non-trivial support exclusively on the first $\kappa t$ qubits.
The existence of $G_{\mathrm{aux}}$ is shown by demonstrating the existence of its associated orthogonal matrix $O_{\mathrm{aux}}$. The requirements on $G_{\mathrm{aux}}$ translate into the demand that $O_{\mathrm{aux}}$ must be symplectic and such that it sends $\kappa t$ fixed vectors to the span of the first $2\kappa t$ canonical basis vectors. The existence of such $O_{\mathrm{aux}}$ can be proven via the isomorphism between real $2n\times 2n$ symplectic orthogonal matrices and $n\times n$ unitaries~\cite{serafini2017quantum}. Additional details are provided in the Supplementary Material (see Theorem~\ref{th:comprStAPP}).
\end{proof}
This compressibility should not be confused with the compressed quantum computation result described in Ref.~\cite{jozsa2010matchgate}, which pertains to free-fermionic circuits ($t=0$) and demonstrates how such computations can be equivalently expressed in logarithmic space.

Note that while a $(t,\kappa)$-doped Gaussian state is a $\kappa t$-compressible Gaussian state, the reverse implication does not hold due to circuit complexity arguments.
Similarly to Theorem~\ref{th:1compr}, we show that any $t$-doped Gaussian unitary can be represented as:
\begin{align}
    U_t  =  G_A (u_t \otimes I) G_B,
    \label{eq:Ut}
\end{align}
where $G_A$ and $G_B$ denote Gaussian operations, and $u_t$ is a unitary operator supported on $\lceil\kappa t/2\rceil$ qubits (with $\lceil\cdot \rceil$ denoting rounding to the next integer), as elaborated in the Supplementary Material (Theorem~\ref{th:comprUni}). Notably, if $U_t$ is a particle number conserving unitary~\cite{Terhal_2002}, then $G_A$, $u_t$ and $G_B$ can also be chosen as such.

Our proof of Theorem~\ref{th:1compr} is constructive, i.e., given a classical description of the circuit that prepares $\ket{\psi}$, it provides an efficient procedure for finding the compressing Gaussian circuit $G$ and the state $\ket{\phi}$. 
The decomposition of $t$-doped Gaussian states (unitaries) reveals also that they have a circuit complexity, i.e., number of local gates needed for implementing the state (unitary), upper bounded by $\mathcal{O}(n^2+t^3)$ (Proposition~\ref{prop:complexity} in SM). This provides a better circuit complexity upper bound compared to the naive $\mathcal{O}(n^2 t)$ implied by definition~\ref{def:tdoped} for $\kappa=\mathcal{O}(1)$. Hence, our construction reveals also a method to compress the circuit depth (and not only  the magic), which might be used in practice for efficient circuit compilation of doped matchgate circuits. Remarkably, analogous results hold for the Clifford+T gates circuits~\cite{leone2023learning22PUBL2}.

In the context of Clifford operations, the notions of \emph{stabilizer dimension}~\cite{grewal2023efficient} and \emph{stabilizer nullity}~\cite{Beverland_2020,Jiang_2023} were introduced to quantify the degree of stabilizerness of a quantum state, and their connection to magic monotones was investigated. Analogously, we define the \emph{Gaussian dimension} of a state as the number of normal eigenvalues of its correlation matrix that are equal to one, and the \emph{Gaussian nullity} as the number of normal eigenvalues of its correlation matrix that are strictly less than one.

Using Eq.~\eqref{eq:decomAntisym}, we can show that a state has a Gaussian nullity of at most \(t\) (a Gaussian dimension of at least \(n-t\)) if and only if it is \(t\)-compressible. Consequently, \((t,\kappa)\)-doped Gaussian states have a Gaussian nullity of at most \(\kappa t\) (a Gaussian dimension of at least \(n-\kappa t\)).

\subsection{Learning Algorithm}
We present an algorithm for learning $t$-compressible Gaussian states, or, equivalently, quantum states with at least $n-t$ Gaussian dimension. 
Note that this is a broader class than $t$-doped Gaussian states; as an example unrelated to $t$-doped states, ground states of quantum impurity models are approximately of this form (as shown in~\cite{Bravyi_2017} and elaborated further in the subsequent subsections).
By definition, any $t$-compressible Gaussian state $\ket{\psi}$ can be factorized as $G^{\dag}\ket{\psi}=\ket{\phi}\otimes \ket{0^{n-t}}$, where $G^{\dag}$ is Gaussian and $\ket{\phi}$ is a state on $t$ qubits. At a high level, our strategy is to learn the Gaussian unitary $G^{\dag}$, apply it to $\ket{\psi}$, and then perform full state tomography solely on the first $t$ qubits to learn $\ket{\phi}$. Since full state tomography algorithms scale exponentially with the number of qubits~\cite{FastFranca}, for $t=\mathcal{O}(\log(n))$ our algorithm will be efficient.

\begin{algorithm}
\label{alg:algo}
\caption{Learning algorithm for $t$-compressible fermionic Gaussian states}
\KwIn{Accuracy $\varepsilon$, failure probability $\delta$, $N \coloneqq  \lceil \frac{256 n^5}{\varepsilon^{4}}\log\!\left(\frac{12n^2}{\delta}\right) + 2N_{\mathrm{tom}}\!\left(t,\frac{\varepsilon}{2},\frac{\delta}{3}\right) + 24 \log\!\left(\frac{3}{\delta}\right)\rceil$ copies of the $t$-compressible state $\ket{\psi}$, where $N_{\mathrm{tom}}$ is the number of copies needed for $t$-qubit pure state tomography with accuracy $\frac{\varepsilon}{2}$ and failure probability $\frac{\delta}{3}$.}
\KwOut{A classical description of $|\hat{\psi}\rangle$, ensuring $d_{\mathrm{tr}}(|\hat{\psi}\rangle, \ket{\psi})\le \varepsilon$ with probability at least $1-\delta$.}

Estimate the correlation matrix of $\ket{\psi}$ using $\lceil \frac{256 n^5}{\varepsilon^{4}}\log\!\left(\frac{12n^2}{\delta}\right)\rceil$ single-copy measurements (see Lemma~\ref{le:samplecompAPPcommuting}), obtaining $\hat{C}$\;

Express $\hat{C}$ in its normal form $\hat{C}=\hat{O}\hat{\Lambda}\hat{O}^T$ (Eq.\eqref{eq:decomAntisym}) and find the Gaussian unitary $G_{\hat{O}}$ associated with $\hat{O} \in \mathrm{O}(2n)$\;

\For{$i \leftarrow 1$
\KwTo $\lceil2N_{\mathrm{tom}}(t,\frac{\varepsilon}{2},\frac{\delta}{3})+24\log(\frac{3}{\delta})\rceil$}{
    Apply $G_{\hat{O}}^\dag$ to $\ket{\psi}$\;
    Measure the last $n-t$ qubits in the computational basis\;

    \If{the outcome corresponds to $\ket{0^{n-t}}$}{
        Proceed\;
    }
    \Else{
        Discard and move to the next iteration\;
    }

    Perform a step of pure state tomography~\cite{FastFranca} on the remaining $t$ qubits\;
}

Obtain the $t$-qubit state $|\hat{\phi}\rangle$ from tomography\;

\Return $\hat{O}$ and $|\hat{\phi}\rangle$, which identify $|\hat{\psi}\rangle \coloneqq  G_{\hat{O}}(|\hat{\phi}\rangle\otimes \ket{0^{n-t}})$\;
\end{algorithm}

To delve deeper, the initial phase of our learning algorithm entails estimating the correlation matrix entries through single-copy measurements. This can be achieved using different methods outlined in the Supplementary Material, such as measurements in the Pauli basis, global Clifford Gaussian measurements~\cite{PartitionBabbush}, or fermionic classical shadows~\cite{zhao2023PhD, Zhao_2021, wan2023matchgate}.
The estimated correlation matrix \(\hat{C}\) is subsequently transformed into its normal form in Eq.\eqref{eq:decomAntisym} to yield the corresponding orthogonal matrix \(\hat{O}\) associated with the Gaussian operation \(\hat{G}\). (We use the hat symbol to denote the objects estimated from the measurements.) Applying the inverse operation \(\hat{G}^{\dag}\) to $\ket{\psi}$ results in a state that exhibits high fidelity with a state, which is tensor product of an arbitrary state on the first \(t\) qubits and the zero computational basis state on the remaining \(n-t\) qubits. 
Consequently, the learning algorithm queries multiple copies of $\ket{\psi}$ (one at a time), applies $\hat{G}^{\dag}$ to them and measures the last $n-t$ qubits. If the outcome of such measurements correspond to $\ket{0^{n-t}}$, then the algorithm proceeds with a step of pure state tomography~\cite{FastFranca,guta2018fast} on the \(t\)-qubits state. The state tomography routine performed in the compressed space yields the state $|\hat{\phi}\rangle$. The final output of the learning algorithm is $|\hat{\psi}\rangle  \coloneqq   \hat{G}(|\hat{\phi}\rangle \otimes \ket{0^{n-t}})$, and an efficient classical representation can be provided if $t = \mathcal{O}(\log(n))$. Namely, to specify $|\hat{\psi}\rangle$, it is sufficient to provide the complete description
of the $t$-qubit state $|\hat{\phi}\rangle$ and the orthogonal matrix $\hat{O} \in \mathrm{O}(2n)$ associated with $\hat{G}$.

We now present a theorem which formalizes and proves the efficiency of the discussed procedure, outlined in Algorithm~\ref{alg:algo}, to learn doped Gaussian states or, more generally, $t$-compressible Gaussian states. 
\begin{theorem}[Learning algorithm guarantees]
\label{th:joiningpieces_informal}
Let $\ket{\psi}$ be a $t$-compressible Gaussian state, and $\varepsilon, \delta \in (0,1]$. Utilizing $\mathcal{O}\!\left(\mathrm{poly}\!\left(n,2^{t}\right)\right)$ single-copy measurements and computational time, Algorithm~\ref{alg:algo} outputs a classical representation of a state $|\hat{\psi}\rangle$, such that $d_{\mathrm{tr}}(|\hat{\psi}\rangle, \ket{\psi}) \le \varepsilon$ with probability $\ge 1-\delta$.
\end{theorem}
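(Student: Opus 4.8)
The plan is to track how the statistical errors of the single-copy measurements propagate through the three stages of Algorithm~\ref{alg:algo}---correlation-matrix estimation, extraction of the compressing Gaussian $\hat{G}$, and tomography on the compressed block---and then to combine the resulting error contributions via the triangle inequality together with unitary invariance of the trace distance. I would first establish a concentration bound for Step~1. Each entry $[C(\ket{\psi})]_{j,k} = -\tfrac{i}{2}\Tr(\gamma_j\gamma_k\ketbra{\psi})$ is, up to the factor $\tfrac12$, the expectation of the Hermitian observable $i\gamma_j\gamma_k$ whose eigenvalues are $\pm 1$; hence Hoeffding's inequality estimates it to additive precision $\eta$ from $\mathcal{O}(\eta^{-2}\log(n^2/\delta))$ single-copy measurements. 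A matrix-concentration argument (over the $\mathcal{O}(n^2)$ entries) then upgrades this to an operator-norm guarantee $\|\hat{C}-C(\ket{\psi})\|_\infty\le\xi$ with probability $\ge 1-\delta/3$, and the copy budget of Step~1 is precisely the one for which $\xi=\mathcal{O}(\varepsilon^2/n)$.

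The crux---and the step I expect to be the main obstacle---is Step~2: showing that the Gaussian $\hat{G}$ read off from the normal form of the \emph{noisy} $\hat{C}$ still compresses $\ket{\psi}$ well, i.e.\ that post-selecting the last $n-t$ qubits on $\ket{0^{n-t}}$ succeeds with probability $p$ close to $1$. The danger is that the normal-form decomposition in Eq.~\eqref{eq:decomAntisym} is generically \emph{unstable}: the $n-t$ normal eigenvalues of $C(\ket{\psi})$ equal to one are degenerate, so the individual Majorana modes obtained by diagonalising $\hat{C}$ need not be close to the true ones. The observation that sidesteps this is that we never need eigenvector stability, only eigenvalue stability. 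By Weyl's inequality applied to the Hermitian matrix $i\hat{C}$, the $n-t$ largest normal eigenvalues of $\hat{C}$ lie within $\xi$ of $1$. Writing the correlation matrix of $\hat{G}^\dag\ket{\psi}$ as $\hat{O}^T C(\ket{\psi})\hat{O}=D+E$, where $D=\hat{O}^T\hat{C}\hat{O}$ is exactly block-diagonal and $\|E\|_\infty=\|C(\ket{\psi})-\hat{C}\|_\infty\le\xi$, the relevant diagonal $2\times2$ blocks differ from the pure-mode form by at most $\mathcal{O}(\xi)$. Since each such block encodes the occupation $\langle n_j\rangle$, which vanishes exactly when the normal eigenvalue is one, each of the last $n-t$ modes of $\hat{G}^\dag\ket{\psi}$ satisfies $\langle n_j\rangle\le\mathcal{O}(\xi)$, and a union bound over these modes gives $1-p\le\sum_j\langle n_j\rangle=\mathcal{O}(n\xi)=\mathcal{O}(\varepsilon^2)$. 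Because $\hat{G}^\dag\ket{\psi}$ is pure, the normalised post-selected state $\ket{\chi}=\Pi_0\hat{G}^\dag\ket{\psi}/\|\Pi_0\hat{G}^\dag\ket{\psi}\|$, with $\Pi_0=I_t\otimes\ketbra{0^{n-t}}$, is exactly a product $\ket{\phi'}\otimes\ket{0^{n-t}}$ and obeys $d_{\mathrm{tr}}(\hat{G}^\dag\ket{\psi},\ket{\chi})=\sqrt{1-p}\le\varepsilon/2$ for suitably tuned constants.

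With compression established, I would treat Step~3 and the tomography step together. Each loop iteration post-selects successfully with probability $p\ge\tfrac12$ and, upon success, collapses the first $t$ qubits onto exactly $\ket{\phi'}$; a Chernoff bound shows that $N_t=\lceil 2N_{\mathrm{tom}}(t,\tfrac{\varepsilon}{2},\tfrac{\delta}{3})+24\log(\tfrac{3}{\delta})\rceil$ iterations yield at least $N_{\mathrm{tom}}(t,\tfrac{\varepsilon}{2},\tfrac{\delta}{3})$ successes with probability $\ge 1-\delta/3$, which by the guarantee of the pure-state tomography subroutine produce $\ket{\hat\phi}$ with $d_{\mathrm{tr}}(\ket{\hat\phi},\ket{\phi'})\le\varepsilon/2$ with probability $\ge 1-\delta/3$. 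Finally, using $\ket{\psi}=\hat{G}\hat{G}^\dag\ket{\psi}$, $\ket{\hat\psi}=\hat{G}(\ket{\hat\phi}\otimes\ket{0^{n-t}})$, and the unitary invariance of the trace distance,
\begin{align*}
 d_{\mathrm{tr}}(\ket{\hat\psi},\ket{\psi})
 &= d_{\mathrm{tr}}(\ket{\hat\phi}\otimes\ket{0^{n-t}},\,\hat{G}^\dag\ket{\psi}) \\
 &\le d_{\mathrm{tr}}(\ket{\hat\phi},\ket{\phi'}) + d_{\mathrm{tr}}(\ket{\chi},\hat{G}^\dag\ket{\psi})
 \le \tfrac{\varepsilon}{2}+\tfrac{\varepsilon}{2}=\varepsilon ,
\end{align*}
where the three failure events (correlation-matrix estimation, insufficiently many successes, tomography) are each controlled at level $\delta/3$, so a union bound yields overall success probability $\ge 1-\delta$. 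The complexity claim follows by inspection: Steps~1--2 cost $\mathrm{poly}(n)$ copies and classical time, while $N_{\mathrm{tom}}(t,\cdot,\cdot)=\mathrm{poly}(2^t)$ for $t$-qubit pure-state tomography, giving a total of $\mathrm{poly}(n,2^t)$.
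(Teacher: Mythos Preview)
Your proposal is correct and follows essentially the same route as the paper: Hoeffding plus a union bound to estimate $C(\ket{\psi})$ in operator norm, Weyl's inequality on $i\hat{C}$ to show the last $n-t$ normal eigenvalues of $\hat{C}$ are $\xi$-close to $1$, then reading off $\bra{\psi'}Z_k\ket{\psi'}\ge 1-\mathcal{O}(\xi)$ from the correlation matrix $\hat{O}^T C(\ket{\psi})\hat{O}$ of $\ket{\psi'}=\hat{G}^\dag\ket{\psi}$, applying the Quantum Union Bound (your $d_{\mathrm{tr}}=\sqrt{1-p}$), Chernoff for the post-selection loop, the tomography guarantee, and finally the triangle inequality with unitary invariance and a union bound over the three failure events. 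Your explicit remark that eigenvector stability is not needed---only eigenvalue stability---is exactly the point the paper exploits implicitly in Lemma~\ref{le:learningLEMMA}.
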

\begin{proof}[Proof sketch]
Using $\mathcal{O}(\mathrm{poly}(n))$ copies of $\ket{\psi}$, we estimate its correlation matrix $C$, yielding $\hat{C}$ such that $\|\hat{C}-C\|_\infty\le\varepsilon_c$ with a failure probability $\le \frac{\delta}{3}$, where $\varepsilon_c \coloneqq  \varepsilon^2/(4(n-t))$.
Expressing $\hat{C}$ in its normal form (Eq.\eqref{eq:decomAntisym}), we find the Gaussian unitary $\hat{G}$ associated to $\hat{O} \in \mathrm{O}(2n)$. 
Let $\ket{\psi^{\prime}} \coloneqq  \hat{G}^{\dag}\ket{\psi}$. As detailed in the Supplementary Material, we derive $\bra{\psi^{\prime}}Z_k\ket{\psi^{\prime}}\ge 1-2 \varepsilon_c$ for each $k \in \{t+1,\dots, n\}$ and, by Quantum Union Bound~\cite{Gao_2015} we get $d_{\mathrm{tr}}(\ket{\phi}\otimes\ket{0^{n-t}},\ket{\psi^{\prime}}) \le \frac{\varepsilon}{2}$,
where $\ket{\phi}\otimes\ket{0^{n-t}}$ corresponds to the state obtained by measuring the last $n-t$ qubits of $\hat{G}^{\dag}\ket{\psi}$ in the computational basis and obtaining the outcome corresponding to $\ket{0^{n-t}}$, an event which occurs with probability $\ge 1-\varepsilon^2/4$.
By querying $\lceil 2N_{\mathrm{tom}}(t,\frac{\varepsilon}{2},\frac{\delta}{3}) + 24 \log(\frac{3}{\delta})\rceil$ copies of $\ket{\psi}$, and, for each copy, applying $\hat{G}^{\dag}$ 
and measuring the last $n-t$ qubits, we get the outcome $\ket{0^{n-t}}$ at least $N_{\mathrm{tom}}(t,\frac{\varepsilon}{2},\frac{\delta}{3})$ times, with failure probability $\le \frac{\delta}{3}$ due to Chernoff bound. Here, $N_{\mathrm{tom}}(t,\frac{\varepsilon}{2},\frac{\delta}{3})$ is the number of copies sufficient for full state tomography~\cite{FastFranca} of a $t$-qubit state with an $\frac{\varepsilon}{2}$ accuracy and a failure probability $\le \frac{\delta}{3}$. 
Performing the $t$-qubit tomography on all the copies where the outcome $\ket{0^{n-t}}$ occurred yields $|\hat{\phi}\rangle$ such that $d_{\mathrm{tr}}(|\hat{\phi}\rangle,\ket{\phi})\le \frac{\varepsilon}{2}$, with a failure probability $\le \frac{\delta}{3}$.
Defining \(|\hat{\psi}\rangle \coloneqq  \hat{G}(|\hat{\phi}\rangle\otimes \ket{0^{n-t}})\), we have $
    d_{\mathrm{tr}}(|\hat{\psi}\rangle, \ket{\psi}) \le d_{\mathrm{tr}}(|\hat{\phi}\rangle,\ket{\phi}) + d_{\mathrm{tr}}(\ket{\phi}\otimes \ket{0^{n-t}},\hat{G}^{\dag}\ket{\psi})$.
This is $\le \varepsilon$ if the algorithm does not fail, an event occurring with probability $\ge 1-\delta$ due to the union bound.
\end{proof}
Theorem~\ref{th:joiningpieces_informal} is re-stated and rigorously proven in the Supplemental Material as Theorem~\ref{th:joiningpieces}. The sample, time and memory complexity of our algorithm for learning $t$-compressible states exhibits a polynomial dependence on \(n\) and an exponential dependence on \(t\): specifically, the $\mathrm{poly}(n)$ contribution (specifically an $\mathcal{O}(n^5)$ scaling) arises solely from estimating and post-processing the correlation matrix, while the $\mathrm{exp}(t)$ contribution arises from full state tomography on $t$-qubits.
It is easy to see that the dependence on \(t\) is optimal, because learning $t$-compressible states is at least as hard as learning an arbitrary pure state on \(t\) qubits and thus requires at least \(\exp(\Omega(t))\) copies of the state~\cite{Haah_2017}.

However, if we focus on the subclass of $t$-doped Gaussian states, a classical shadow tomography based algorithm presented in~\cite{abbas2023quantum,zhao2023learning} achieves $\mathcal{O}(\mathrm{poly}(n,t))$ sample complexity. Specifically, this algorithm requires a number of copies that scales polynomially with the circuit complexity of the state, and $t$-doped states have a circuit complexity $\mathcal{O}(\mathrm{poly}(n,t))$. However, the time complexity of the algorithm in~\cite{abbas2023quantum,zhao2023learning} scales exponentially with the number of qubits $n$, while our algorithm's time complexity scales only polynomially (although always exponentially in $t$). This observation also applies to $t$-doped stabilizer states learning analyzed in recent works~\cite{leone2023learning,grewal2023efficient}.

In our Supplementary Material, we extend our learning algorithm to handle mixed states. Specifically, we provide an algorithm to learn, in trace distance, possibly mixed quantum states that have at least \( n - t \) normal eigenvalues of their correlation matrix equal to one. 
In the more general mixed state scenario, the algorithm in Table~\ref{alg:algo} becomes significantly simpler. Notably, measuring the last \( n - t \) qubits and post-selecting on the outcome \( \ket{0^{n-t}} \) is not necessary, as we do not require the output state to be pure. Thus, it is sufficient to perform full state tomography~\cite{anshu2023survey} on the first \( t \) qubits right after the Gaussian operation \( \hat{G}^{\dagger} \) is applied to the state \( \ket{\psi} \). More details are given in Section~\ref{sec:genmixed} of the appendix.

Additionally, in Section~\ref{sec:genmixed}, we analyze the noise robustness of our algorithm, which is crucial for practical experimental scenarios where the unknown state may not be exactly \( t \)-compressible, but approximately so. Specifically, our analysis reveals that our algorithm allows for efficient tomography of states that are possibly mixed and (sufficiently) approximately compressible, which are the types of states one would expect to get when running a $t$-doped matchgate circuit on a noisy quantum device.


\subsection{Time complexity lower  bound}

It is natural to wonder whether there exist algorithms for learning $t$-doped Gaussian states with time complexity scaling in $t$ as $\mathcal{O}(\mathrm{poly}(t))$. We establish that the answer is no (see Proposition~\ref{prop:nopolyt} in SM), relying on a widely-believed cryptography assumption. 
Specifically, we show that certain families of pseudorandom quantum states~\cite{brakerski2019pseudo,Ji_2018} can be generated using a polynomial number of local non-Gaussian gates. This implies that if there were an algorithm with polynomial time complexity in $t$ for learning $t$-doped Gaussian states, quantum computers could solve \class{RingLWE}~\cite{LWE0} in polynomial time, which is considered unlikely~\cite{LWE0,regev2024lattices,ShallowSrini,diakonikolas2022cryptographic,aggarwal2022lattice,ananth2023revocable}.
While this rules out the existence of efficient algorithms if $t$ scales polynomially with the number of qubits $n$, it does not yet preclude the existence of efficient algorithms if $t$ grows \emph{slightly} more than logarithmically, for example $t=\mathcal{O}((\log n)^2)$. However, we can rule out this possibility by making the stronger assumption that quantum computers cannot solve \class{RingLWE} in sub-exponential time~\cite{ShallowSrini,diakonikolas2022cryptographic,ananth2023revocable,gupte2022continuous}. This implies that the time complexity of any algorithm to learn $\tilde{\mathcal{O}}(t)$-doped Gaussian states (where $\tilde{\mathcal{O}}(\cdot)$ hides polylogarithmic factors) would necessarily be $\exp(\Omega(t))$. In other words, the following holds. 
\begin{theorem}[Time-complexity lower bound, informal] 
\label{thm:lower_bound_informal}
Assuming that quantum computers cannot solve \class{RingLWE} in sub-exponential time, then there is no time efficient algorithm to learn $\tilde{\omega}(\log(n))$-doped Gaussian state which outputs a description of an efficiently preparable quantum state. Here, $\tilde{\omega}(\log(n))\coloneqq\omega\!\left(\log(n)\mathrm{polyloglog}(n)\right)$.
\end{theorem}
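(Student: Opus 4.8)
The plan is to argue by contradiction, converting a hypothetical efficient learner into a sub-exponential-time distinguisher between a pseudorandom state ensemble and the Haar measure, and thereby into a sub-exponential-time algorithm for \class{RingLWE}. Concretely, suppose there is an algorithm $\mathcal{A}$ that learns every $\tilde\omega(\log n)$-doped Gaussian state to constant trace distance (say $\varepsilon=1/3$) using $\mathrm{poly}(n)$ copies and $\mathrm{poly}(n)$ time, and that outputs a classical description of an \emph{efficiently preparable} state $|\hat\psi\rangle$. I want to show this contradicts the assumed sub-exponential hardness of \class{RingLWE}.

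The first ingredient is a fermionic pseudorandom state (PRS) ensemble. Starting from a standard qubit PRS ensemble $\{|\psi_k\rangle\}_k$ whose security rests on sub-exponential \class{RingLWE}, I would note that each $|\psi_k\rangle$ is prepared by a circuit of size $\mathrm{poly}(\lambda)$, where $\lambda$ is the security parameter and the ensemble lives on $m=\mathrm{poly}(\lambda)$ qubits. Via the Jordan--Wigner mapping, and after routing all gates to act on neighbouring modes, any such circuit is a product of Gaussian (matchgate) operations and at most $t_{\mathrm{PRS}}=\mathrm{poly}(\lambda)$ non-Gaussian gates, each generated by $\mathcal{O}(1)$ Majorana operators (a two-qubit gate on adjacent modes is generated by four Majoranas), so $\kappa=\mathcal{O}(1)$. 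Hence each $|\psi_k\rangle$ is a $(t_{\mathrm{PRS}},\kappa)$-doped Gaussian state on which $\mathcal{A}$ is promised to succeed. This step is what "brings PRS to the fermionic realm" and is the conceptual heart of the argument.

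Next I would fix the padding and the parameter balance. Given any doping budget $t(n)=\tilde\omega(\log n)$, I embed the PRS into an $n$-qubit register by tensoring with $|0\rangle$'s; the padded state remains $(t_{\mathrm{PRS}},\kappa)$-doped, with $t_{\mathrm{PRS}}$ unchanged. I then choose $\lambda$ as a function of $n$ so that two constraints hold at once: (i) $t_{\mathrm{PRS}}\le t(n)$, so that the padded PRS lies within the budget that $\mathcal{A}$ handles; and (ii) $\mathrm{poly}(n)=2^{o(\lambda)}$, i.e.\ $\log n=o(\lambda)$, so that any $\mathrm{poly}(n)$-time procedure is sub-exponential in $\lambda$. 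Balancing $\log n$ against $\lambda$ under these two requirements is precisely what forces the learnability threshold to sit \emph{slightly} above $\log n$: relating $\log\log\log n$ to $\log\log\lambda$ shows that the residual slack is a $\mathrm{polylogloglog}(n)$ factor, which is the origin of the $\tilde\omega$ in the statement. I would relegate this elementary bookkeeping to the Supplementary Material rather than grind it out here.

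Finally I would build the distinguisher and close the loop. On input copies of an unknown state --- either a random $|\psi_k\rangle$ or a Haar-random $n$-qubit state --- run $\mathcal{A}$ to obtain $|\hat\psi\rangle$, prepare $|\hat\psi\rangle$ using the efficient-preparability hypothesis, and estimate its overlap with fresh copies of the input via a SWAP test. On a genuine PRS the input is $t(n)$-doped, so $\mathcal{A}$ returns $|\hat\psi\rangle$ with $d_{\mathrm{tr}}(|\hat\psi\rangle,|\psi_k\rangle)\le 1/3$ and the test accepts with high probability; on a Haar-random state the information-theoretic tomography lower bound~\cite{Haah_2017} guarantees that no $\mathrm{poly}(n)$-copy procedure can output a description within constant trace distance, so the test rejects with high probability. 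This yields a $2^{o(\lambda)}$-time distinguisher between the PRS ensemble and Haar, contradicting sub-exponential pseudorandomness and hence the sub-exponential hardness of \class{RingLWE}. The main obstacle I anticipate is exactly the interplay in step three: one must keep the non-Gaussian gate count $t_{\mathrm{PRS}}$ small enough to fit the $\tilde\omega(\log n)$ budget while keeping $n$ large enough that $\mathrm{poly}(n)$ stays sub-exponential in $\lambda$, and it is the tension between these that pins down the precise threshold.
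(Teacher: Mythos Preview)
Your high-level strategy---embed a pseudorandom state ensemble into the class of doped Gaussian states and invoke sub-exponential hardness of \class{RingLWE}---matches the paper's. The gap is in the gate-counting, and it is exactly the point where the parameter balancing breaks.

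You take a generic PRS of circuit size $\mathrm{poly}(\lambda)$ on $m=\mathrm{poly}(\lambda)$ qubits and route gates to adjacent modes under Jordan--Wigner. But the routing is done with nearest-neighbour SWAPs, which are themselves non-Gaussian ($\kappa=4$), so each two-qubit gate contributes $\mathcal{O}(m)$ non-Gaussian gates. The result is $t_{\mathrm{PRS}}=\lambda^c$ for some $c>1$. Now your two constraints become incompatible: (i) $\lambda^c\le t(n)$ with $t(n)=\tilde\omega(\log n)$ forces $\lambda=O((\log n)^{1/c})=o(\log n)$, while (ii) demands $\lambda=\omega(\log n)$. Your argument therefore establishes hardness only once $t(n)$ exceeds roughly $(\log n)^c$, not at the $\tilde\omega(\log n)$ threshold claimed.

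The paper closes this gap with two ingredients you are missing. First, it invokes a specific PRS construction (from~\cite{zhao2023learning}) whose circuit uses only $\tilde{\mathcal{O}}(k)$ Toffoli and Hadamard gates, not a generic polynomial count. Second, and more importantly, it abandons Jordan--Wigner in favour of a Kitaev-style qubit-to-fermion encoding that maps each local qubit gate to a $\kappa\le 4$ local fermionic gate \emph{without any SWAP overhead}. Together these give $t=\tilde{\mathcal{O}}(k)$, so choosing $k=\omega(\log n)$ yields $t=\omega(\log n\cdot\mathrm{polyloglog}(n))=\tilde\omega(\log n)$ while keeping $\mathrm{poly}(n)$ sub-exponential in $k$. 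The paper in fact remarks explicitly that the Jordan--Wigner route you take suffices only for the weaker ``no $\mathrm{poly}(t)$ algorithm'' statement (Proposition~\ref{prop:nopolyt}), and that the Kitaev encoding is what is needed to push the threshold down to $\tilde\omega(\log n)$.
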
 
This would prove that the time complexity in $t$ of our algorithm is essentially optimal, because our algorithm is efficient as long as $t=\mathcal{O}(\log(n))$. We show Theorem~\ref{thm:lower_bound_informal} by efficiently 
encoding the pseudorandom quantum states constructions \cite{zhao2023learning} via a specific qubits-to-fermions mapping~\cite{Kitaev_2006} into other states produced by the same number of gates, all of which are now local non-Gaussian. Crucially for our construction, this mapping sends local qubit operations to local fermionic operations with only a constant overhead in the number of qubits. We refer the reader to the Supplementary Material (Proposition~\ref{prop:exp_lower_bound}) for more details.

\subsection{Testing Gaussian dimension}
We have introduced an algorithm for efficiently learning states with a high Gaussian dimension (or, equivalently, small Gaussian nullity), specifically those promised to be $t$-compressible with a small $t$.
A natural question arises: How can we test the Gaussian dimension of a state? In other words, how can we determine if the underlying state is close or far from the set of $t$-compressible states? In our Supplementary Material, using ideas developed in~\cite{Bittel2024testing} for the case of Gaussian states ($t=0$), we establish that the minimum trace distance between a quantum state $\ket{\psi}$ and the set of $t$-compressible Gaussian states, denoted by $\mathcal{G}_t$, satisfies:
\begin{align}
    \frac{1-\lambda_{t+1}}{2} \le \min_{\ket{\phi_t} \in \mathcal{G}_t} d_{\mathrm{tr}}(\ket{\psi}, \ket{\phi_t}) \leq \sqrt{\sum^{n}_{k=t+1}\frac{1-\lambda_k}{2} },
\end{align}
where $\{\lambda_k\}^n_{k=1}$ represents the normal eigenvalues of the correlation matrix of $\ket{\psi}$ ordered in increasing order. These inequalities imply that $\ket{\psi}$ is close in trace distance to the set $\mathcal{G}_t$ if and only if $\lambda_{t+1}$ is close to one.
In particular, assuming that $\ket{\psi}$ is either a state in $\mathcal{G}_t$ or $\min_{\ket{\phi_t} \in \mathcal{G}_t} d_{\mathrm{tr}}(\ket{\psi}, \ket{\phi_t})\ge \varepsilon$, we can determine with at least $1-\delta$ probability which of the two cases is true by accurately estimating $\lambda_{t+1}$. Specifically, $\mathcal{O}((n^5/\varepsilon^4)\log(n^2/\delta))$ copies of the state suffice for this purpose. Notably, this complexity scaling for testing is independent from $t$, in contrast to learning. 
More details are provided in section~\ref{sec:testing} of the Supplementary Material, along with a generalization to the mixed-state scenario.


\subsection{Numerical simulations: $t$-compressible states in many-body physics}
\label{sec:numericsMAIN}
In the previous section, we introduced an algorithm to learn quantum states that are \( t \)-compressible. In Section~\ref{sec:genmixed} of the Supplementary Material, we extended the algorithm to learn states that are not exactly \( t \)-compressible but \emph{approximately}. This connects to the work of Bravyi and Gosset~\cite{Bravyi_2017}, who demonstrated that ground states of so-called quantum impurity models can be well-approximated by compressible states (see Ref.~\cite{Bravyi_2017}, Corollary 1). In this section we discuss the learnability of such states and of the states produced by the time dynamics of impurity models. 

Quantum impurity models represent a bath of free fermions coupled with a small localized interacting subsystem referred to as an impurity. These models are of paramount importance in condensed matter physics and have been extensively studied over the past half-century. The exploration of such Hamiltonians gained significant traction in the 1960s and 70s, largely due to the pioneering work of Anderson, Kondo, and Wilson. Their focus was on the behavior of magnetic impurities within metals~\cite{AndersonLoc,Kondo1964ResistanceMI,Wilson,AndKond}, providing a theoretical basis for the Kondo effect~\cite{AndKond}, which had been experimentally observed much earlier.
Quantum impurity models are also crucial computational tools for analyzing the electronic structure of strongly correlated materials, including transition metal compounds and high-temperature superconductors. They are integral to the dynamical mean field theory (DMFT)~\cite{Vollhardt_2011}, a method used to study these materials. In DMFT, the impurity typically represents a cluster of atoms within a unit cell, while the bath models the bulk of the material.

Formally, a quantum impurity model Hamiltonian is a Hamiltonian \( H_m \) that can be expressed as \( H_m = H_{\mathrm{free}} + H_{\mathrm{imp},m} \), where \( H_{\mathrm{free}} \) is a free-fermionic Hamiltonian (i.e., a Hamiltonian quadratic in the Majorana operators \( H_{\mathrm{free}} = i\sum_{p,q=1}^{2n} h_{p,q} \gamma_p \gamma_q \), with \( h \) being a real anti-symmetric matrix) and \( H_{\mathrm{imp},m} \) is an arbitrary Hamiltonian, potentially non-free fermionic, acting non-trivially on at most \( m \) distinct Majorana operators, where \( m \) is chosen to be \( O(1) \) relative to the number of fermionic modes \( n \).

Restating Corollary 1 from Ref.~\cite{Bravyi_2017}, Bravyi and Gosset demonstrated that ground states \( \ket{\psi_{\mathrm{GS}}} \) of impurity models can be approximated, for any \( \varepsilon > 0 \), by a \( t \)-compressible Gaussian state with \( t \coloneqq O(\log(\varepsilon^{-1})) \). Specifically, there exists a Gaussian unitary \( G \) and a state \( \ket{\phi} \) supported on \( t \) modes such that \( \norm{\psi_{\mathrm{GS}} - \psi_t}_1 \le \varepsilon \), where \( \ket{\psi_t} \coloneqq G (\ket{\phi} \otimes \ket{0^{n-t}}) \)~\footnote{The asymptotic notation \( t = O(\log(\varepsilon^{-1})) \) hides a logarithmic dependence on the inverse of the spectral gap of the free-fermionic Hamiltonian \( H_{\mathrm{free}} \). For more details, see Ref.~\cite{Bravyi_2017}.}. Thus, our algorithm can effectively be applied to learn ground states of impurity models with arbitrary \(\varepsilon\)-precision, with a sample complexity scaling as \(\mathrm{poly}(n,\varepsilon^{-1})\).

An intriguing question arises: can \textit{evolutions} governed by such Hamiltonians maintain approximate $t$-compressibility up to certain evolution times, thus making our algorithm applicable to such non-equilibrium quantum states? To explore this, we conducted numerical simulations with an interacting impurity embedded in the background of two distinct free models: the 1D \emph{transverse-field Ising} model (TFIM) and an \emph{expander graph} model. The respective Hamiltonians are \( H_{\mathrm{TFIM}} = w Z_1Z_2 + \sum_{j=1}^{n} (g X_j X_{\text{mod}(j+1,n)} + Z_j) \), where \( w \) and \( g \) are real numbers, and \( H_{\mathrm{graph}} = w \gamma_1 \gamma_2 \gamma_3 \gamma_4 + \sum_{e \in \mathcal{G}} i v_e \gamma_{e(1)} \gamma_{e(2)} \), where \(\mathcal{G}\) represents a random 4-regular graph, with coupling constants \( v_e \) uniformly set to 1 for simplicity. Without the impurity term (i.e., for \( w=0 \)), both Hamiltonians become quadratic in the Majorana operators (up to a long \( Z \)-string term, corresponding to a fixed parity contribution), implying that the continuous evolution they realize is Gaussian. We focus on the cases \( g=w=1 \), so that the energy scales of the impurity term and the quadratic part are comparable.

We consider the continuous evolution \( \ket{\psi(T)} = e^{-iHT}\ket{0^n} \) with \( H = H_{\mathrm{graph}} \) and \( H_{\mathrm{TFIM}} \) and investigate whether the resulting state \( \ket{\psi(T)} \) is approximately \( t \)-compressible. To determine this, we check if there exists a \( t \) such that
\begin{equation}
    \sqrt{\sum_{m=t+1}^{n} \frac{1}{2} (1 - \lambda_m)} \leq \varepsilon_{\mathrm{trunc}},
    \label{eq:epsilon_tr_defMAIN}
\end{equation}
where \(\{\lambda_j\}_{j=1}^{n}\) are the normal eigenvalues of the correlation matrix of \(\ket{\psi(T)}\) ordered in increasing order, and $\varepsilon_{\mathrm{trunc}}$ is a fixed truncation error, which captures the approximate $t$-compressiblity of the underlying quantum state. This implies that the state \( \ket{\psi(T)} \) can be learned with our algorithm up to an error \( \varepsilon + \varepsilon_{\mathrm{trunc}} \) in trace distance, where \( \varepsilon \) is the accuracy chosen in the tomography algorithm (see Theorem~\ref{th:joiningpiecesMIXEDApp} of the Supplementary material for more details).  This learning error can be brought closer to $\varepsilon_{\mathrm{trunc}}$ by decreasing $\varepsilon$ at the respective cost. In Figure~\ref{fig:rank_dynamics}, we present the dynamics of the minimal Gaussian nullity \( t \) that allows a compression error below \( \varepsilon_{\mathrm{trunc}} = 0.05 \), as a function of the evolution time \( T \).

For small evolution times \( T \), as shown in Figure~\ref{fig:rank_dynamics} (left panel), \( \ket{\psi(T)} \) exhibits small approximate Gaussian nullity. One might wonder if this approximate $t$-compressibility of impurity dynamics scales well with the system size, or whether, on the contrary, maintaining a fixed error in trace distance becomes increasingly difficult for larger $n$. Our numerical analysis suggests the former. For instance, consider the maximal time $T_4$ that allows maintaining approximate $t$-compressibility for $t=4$ and a fixed $\varepsilon_{\mathrm{trunc}}$. The question is whether $T_4$ decreases to zero with increasing $n$ or stabilizes at a constant value. We find that $T_4$ indeed saturates at a certain constant for both the TFIM and the expander models (Figure~\ref{fig:rank_dynamics}, right panel).

\begin{figure}[t]
    \centering 
    {\includegraphics[width=0.48\linewidth]{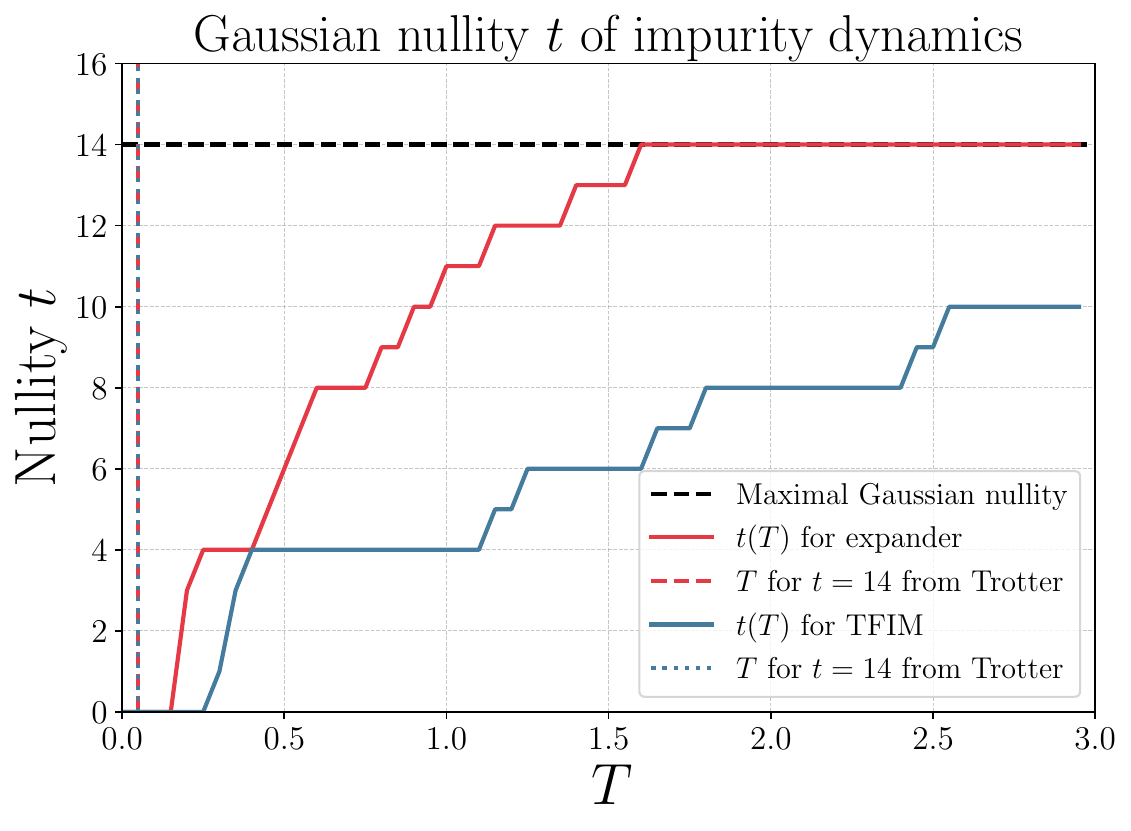}}
    {\includegraphics[width=0.48\linewidth]{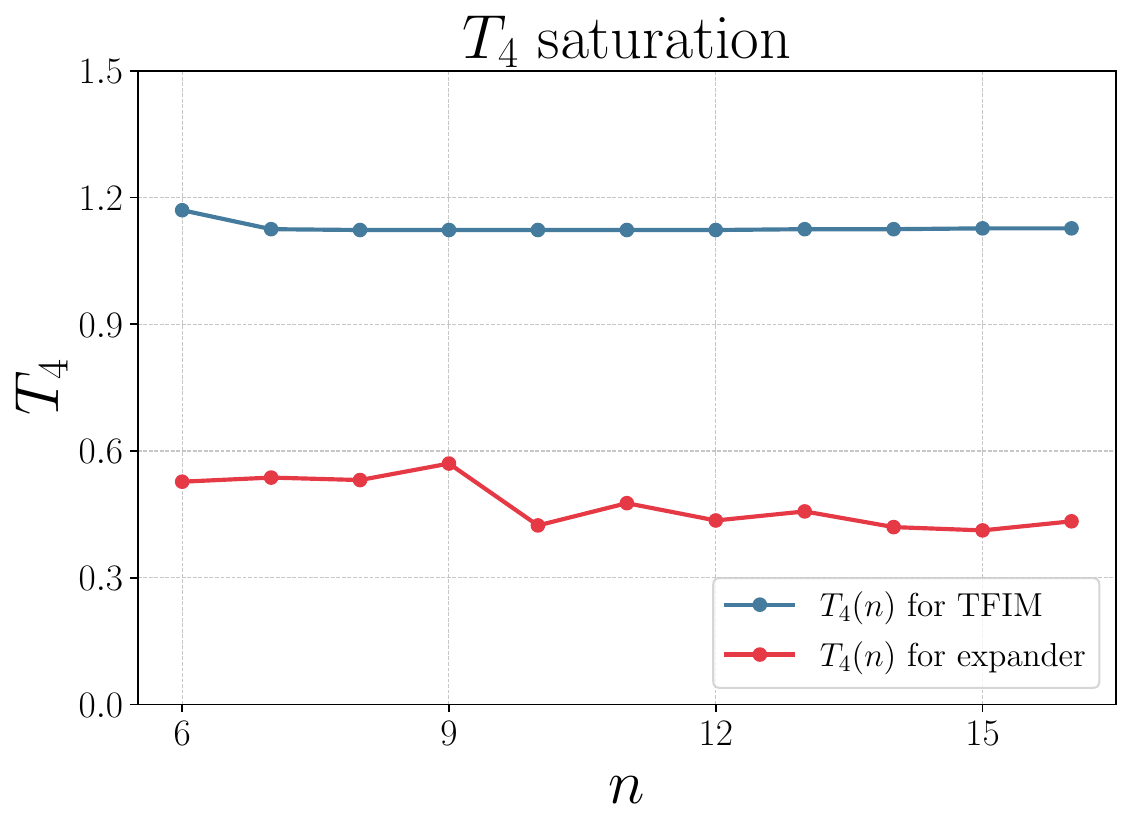}}
    \caption{(left) Dynamics of the Gaussian nullity in an impurity model with expander graph and 1D TFIM backgrounds as defined in the text. The admissible deviation from $t$-compressibility, $\varepsilon_{\mathrm{trunc}}$, is set to $0.05$; system size is $n=14$. The deviation from $t$-compressibility is upper bounded using the normal eigenvalues of the correlation matrix in the state $\ket{\psi(T)}$, as in Eq.~\eqref{eq:epsilon_tr_defMAIN}. Vertical dotted lines show times $T$ which are too large to derive $t<14$ nullity from Trotter approximation and Theorem~\ref{th:1compr}.
    The time axis is resolved up to $\Delta T=0.05$. (right) Maximal evolution time $T_4$ allowing approximate $t=4$ compression (up to $\varepsilon_{\mathrm{trunc}}=0.05$). One observes that $T_4$ saturates at a constant value as a function of $n$ for both models.}
    \label{fig:rank_dynamics}
\end{figure}

We also examine whether the compressibility shown in Figure~\ref{fig:rank_dynamics} can be derived from our Theorem~\ref{th:1compr} for \( t \)-doped circuits using the Trotter approximation. Our numerical analysis rules out this gate-counting hypothesis, because this hypothesis predicts high approximate nullity already for minuscule times (vertical dotted lines in Figure~\ref{fig:rank_dynamics}, left panel). A detailed explanation of this analysis is given in Section~\ref{sec:numerics} of the Supplementary Material. We do not have an analytical derivation of approximate $t$-compressibility of impurity dynamics as observed in our numerical investigation, and leave this question open for future work. The failure of explanation via Trotter expansion highlights the difficulty of this problem. For the TFIM model, the approximate $t$-compressibility may be derivable using 1D Lieb-Robinson bounds. However, such a derivation should not work for the expander graph model -- and at present we do not know how this case could be analytically handled. This open problem also connects to the open questions about the classical simulation of impurity dynamics, raised earlier by Bravyi and Gosset~\cite{Bravyi_2017}.

To run the simulations of \( \ket{\psi(T)} = e^{-iHT} \ket{0^n} \) presented in Figure~\ref{fig:rank_dynamics}, we employed sparse matrix multiplication routines. More details about our numerical simulations are given in Section~\ref{sec:numerics} of the Supplementary Material.

\subsection{Conclusions}
In this work, we have presented an algorithm for efficiently learning $t$-doped fermionic Gaussian states, with sample and time complexity scaling as $\mathcal{O}(\mathrm{poly}(n,2^t))$. Additionally, we have established, under standard cryptography assumptions, that there is no learning algorithm for such class of states with a polynomial dependence on $t$ in the time complexity.
Crucially, our algorithm utilizes solely experimentally feasible single-copy measurements. Its working idea is based on a theorem that we prove, which says that all the non-Gaussianity in a $t$-doped fermionic Gaussian state can be efficiently compressed onto $\mathcal{O}(t)$ qubits through a Gaussian operation. This observation carries potential significance beyond the scope of learning, particularly within the context of quantum many-body theory or within efficient circuit compilation. 
Thus, the results presented in this work, besides being directly relevant to device verification and benchmarking, among other tasks, hold fundamental significance for quantum information theory, as they reveal more about the structure of Gaussian states with fermionic magic gates. Additionally, we introduce a variety of useful analytical techniques, such as new ways to leverage pseudo-random quantum states~\cite{brakerski2019pseudo,Ji_2018} in the context of fermionic systems, which are likely to find applications in future research.
{We also observe that the results presented in our study for fermionic systems have already found application in continuous variable systems and bosonic Gaussian states~\cite{mele2024learning}, where the concept of $t$-doped bosonic Gaussian states is explored, highlighting the broader applicability of our ideas and techniques to a different context.}

Our work offers new directions for further research. 
An open question arising from this work is whether \( t \)-doped Gaussian unitaries can be efficiently learned in a scenario where both the input states to the unitary and the measurements at the end can be chosen. The specific case of \( t = 0 \) has already been addressed in Ref.~\cite{MauroPaper}, and it would be interesting to generalize this to \( t > 0 \).
Another promising direction is the study of the resource theory of fermionic non-Gaussianity, where the notion of Gaussian-nullity we introduced may play a significant role, similar to stabilizer nullity in the Clifford context~\cite{Beverland_2020,Jiang_2023}.
Finally, an analytical justification of the approximate \( t \)-compressibility of impurity model dynamics remains an intriguing open problem, which we intend to explore in future research.





\subsection{Acknowledgments}

We thank Mauro Morales, Lorenzo Leone, Lennart Bittel, Janek Denzler, Tommaso Guaita, Ansgar Burchards, Julio Magdalena de la Fuente, Daniel Liang, Matthias Caro, Alexander Nietner, Marten Folkertsma, Francisco Escudero Gutiérrez, Simona Etinski, Stefano Polla, Jonas Helsen, Barbara Terhal, and Jens Eisert for insightful discussions. We acknowledge the PCMI 2023 program for providing a collaborative space where the authors brainstormed this project.
AAM acknowledges fundings from BMBF (FermiQP, MuniQC-Atoms, DAQC), BMWK (EniQmA), the Quantum Flagship (Millenion, PasQuans2).

\bibliography{ref}
\let\oldaddcontentsline\addcontentsline
\renewcommand{\addcontentsline}[3]{}
\medskip

\bibliographystyle{apsrev4-2}

\let\addcontentsline\oldaddcontentsline

\onecolumngrid

\setcounter{secnumdepth}{2}
\setcounter{equation}{0}
\renewcommand{\thetable}{S\arabic{table}}
\renewcommand{\theequation}{S\arabic{equation}}
\titleformat{\section}[hang]{\normalfont\bfseries}
{Supplementary Material \thesection:}{0.5em}{\centering}
\clearpage
\begin{center}
\textbf{\large Supplementary Material}
\end{center}
\setcounter{equation}{0}
\setcounter{table}{0}

\renewcommand{\theequation}{S\arabic{equation}}
\renewcommand{\bibnumfmt}[1]{[S#1]}
\newtheorem{thmS}{Theorem S\ignorespaces}

\newtheorem{claimS}{Claim S\ignorespaces}

In this supplementary material, we provide a more comprehensive level of detail and explanation for certain statements covered in the main text.

\tableofcontents
\section{Preliminaries}

\subsection{Notation and basics}
In this work, we employ the following notation.
$\mathcal{L}(\mathbb{C}^d)$ denotes the set of linear operators acting on the $d$-dimensional complex vector space $\mathbb{C}^d$. Additionally, we use $[d]$ to represent the set of integers from 1 to $d$, i.e., $[d]  \coloneqq   \{1,\dots, d\}$. We denote with $\mathrm{Mat}(d,\mathbb{F})$ the set of $d \times d$ matrices over the field $\mathbb{F}$.
Let $v \in \mathbb{C}^d$ be a vector, and let $p \in [1,\infty]$. The $p$-norm of $v$ is denoted by $\norm{v}_p$, defined as $\norm{v}_p  \coloneqq    (\sum_{i=1}^d |v_i|^p)^{1/p}$.
The Schatten $p$-norm of a matrix $A\in \mathbb{C}^d$, with $p\in [1,\infty]$, is given by $\norm{A}_p \coloneqq   \Tr((\sqrt{A^\dagger A})^p)^{1/p}$, corresponding to the $p$-norm of the vector of singular values of $A$.
The trace norm and the Hilbert-Schmidt norm are important instances of Schatten $p$-norms, denoted as $\norm{\cdot}_1$ and $\norm{\cdot}_2$ respectively. The Hilbert-Schmidt norm is induced by the Hilbert-Schmidt scalar product $\hs{A}{B} \coloneqq   \Tr(A^\dagger B)$ for $A,B \in \mathcal{L}(\mathbb{C}^d)$.
The infinity norm, $\norm{\cdot}_\infty$, of a matrix is defined as its largest singular value. This norm can be interpreted as the limit of the Schatten $p$-norm of the matrix as $p$ approaches infinity.
For any unitaries $U$ and $V$, and a matrix $A$, we have the unitary invariance property $\norm{UAV}_p=\norm{A}_p$. Also, $\norm{A\otimes B}_p = \norm{A}_p \norm{B}_p$ for $A,B \in \mathcal{L}(\mathbb{C}^d)$. 
We denote with $\mathrm{U}(n)$ the group of $n \times n$ unitary matrices. We denote $\mathrm{O}(2n)$ as the group of real orthogonal $2n \times 2n$ matrices. 
$\mathrm{Sp}(2n,\mathbb{R})$ denotes the group of symplectic matrices over the real field, defined as 
\begin{align}
    \mathrm{Sp}(2n, \mathbb{R}) \coloneqq  \{S \in \mathrm{Mat}(2n,\mathbb{R}) \,:\, S\Omega S^{T}=\Omega\},
\end{align} 
where $\Omega  \coloneqq   \bigoplus_{i = 1}^{n} \begin{pmatrix} 0 & 1 \\ -1 & 0 \end{pmatrix}$. 
The $n$-qubits Pauli operators are represented as elements of the set $\{I,X,Y,Z\}^{\otimes n}$, where $I,X,Y,Z$ represent the standard single qubit Pauli. Pauli operators are traceless, Hermitian, they square to the identity, and form an orthogonal basis with respect to the Hilbert-Schmidt scalar product for the space of linear operators.
We define the set of quantum states as $\mathcal{S}(\mathbb{C}^d) \coloneqq   \{\rho \in \mathcal{L}(\mathbb{C}^d) :\rho \ge 0,\,\Tr(\rho)=1\}$. 
The trace distance between two quantum states $\rho, \sigma$ is defined as $
    d_{\mathrm{tr}}(\rho,\sigma)  \coloneqq   \frac{1}{2}\norm{\rho-\sigma}_1$.
For a function \(f(n)\), if there exists a constant \(c\) and a specific input size \(n_0\) such that $f(n) \leq c \cdot g(n)$ for all \(n \geq n_0\), where \(g(n)\) is a well-defined function, then we express it as \(f(n) = \mathcal{O}(g(n))\). This notation signifies the upper limit of how fast a function grows in relation to \(g(n)\).

For a function \(f(n)\), if there exists a constant \(c\) and a specific input size \(n_0\) such that $f(n) \geq c \cdot g(n)$ for all \(n \geq n_0\), where \(g(n)\) is a well-defined function, then we express it as \(f(n) = \Omega(g(n))\). 
This notation signifies the lower limit of how fast a function grows in relation to \(g(n)\).

For a function \(f(n)\), if for any constant $c$, there exists an input size $n_0$ such that $f(n) > c \cdot g(n)$ for all \(n \geq n_0\), where \(g(n)\) is a well-defined function, then then we express it as \(f(n) = \omega(g(n))\). 
This notation implies that the function grows strictly faster than the provided lower bound.

\subsection{Basics of probability theory}
In this section, we present fundamental results from probability theory useful in our work.
\begin{lemma}[Union bound]
    Let $A_1, A_2, \ldots, A_M$ be events in a probability space. The probability of the union of these events is bounded by the sum of their individual probabilities:
    \[
    \operatorname{Pr}\!\left(\bigcup_{i=1}^{M} A_i\right) \leq \sum_{i=1}^{M} \operatorname{Pr}(A_i). 
    \]
\end{lemma}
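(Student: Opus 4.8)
The plan is to prove the bound by a standard \emph{disjointification} argument that rewrites the union as a disjoint union, so that finite additivity of the probability measure applies directly. The inclusion-exclusion terms that make $\operatorname{Pr}(\bigcup_i A_i)$ generally \emph{smaller} than $\sum_i \operatorname{Pr}(A_i)$ are exactly what we are allowed to throw away to obtain an upper bound, so the inequality should come essentially for free once the events are made disjoint.

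First I would replace $A_1,\dots,A_M$ by a family of pairwise disjoint events with the same union. Concretely, set $B_1 \coloneqq A_1$ and, for $i \ge 2$, define $B_i \coloneqq A_i \setminus \bigcup_{j=1}^{i-1} A_j$. By construction the $B_i$ are pairwise disjoint, each satisfies $B_i \subseteq A_i$, and their union coincides with $\bigcup_{i=1}^{M} A_i$: indeed, any point lying in the union first belongs to some $A_i$ at a smallest index $i$, and that point is then captured precisely by $B_i$. Next I would invoke finite additivity on the disjoint family $\{B_i\}$ to write $\operatorname{Pr}\!\left(\bigcup_i A_i\right) = \operatorname{Pr}\!\left(\bigcup_i B_i\right) = \sum_{i=1}^{M}\operatorname{Pr}(B_i)$, and then bound each term by monotonicity, $\operatorname{Pr}(B_i) \le \operatorname{Pr}(A_i)$, which follows from $B_i \subseteq A_i$ together with non-negativity of the measure. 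Summing these inequalities yields the claim.

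Alternatively, one may argue by induction on $M$: the base case $M=1$ is immediate, and the inductive step uses the two-set identity $\operatorname{Pr}(A \cup B) = \operatorname{Pr}(A) + \operatorname{Pr}(B) - \operatorname{Pr}(A \cap B) \le \operatorname{Pr}(A) + \operatorname{Pr}(B)$ applied with $A = \bigcup_{i=1}^{M-1} A_i$ and $B = A_M$, combined with the inductive hypothesis on the first $M-1$ events. I expect no genuine obstacle here, as the result rests only on the axioms of a probability measure; the single point requiring mild care is verifying that the disjointified family $\{B_i\}$ really has the same union as $\{A_i\}$, which is a set-theoretic bookkeeping check rather than a measure-theoretic subtlety.
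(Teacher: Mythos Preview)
Your proof is correct: both the disjointification argument and the induction variant are standard and complete derivations of Boole's inequality from the axioms of a probability measure. The paper itself states this lemma as a well-known preliminary without proof, so there is no approach to compare against; your write-up would serve perfectly well as the omitted justification.
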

\begin{lemma}[Chernoff bound]
\label{th:ChernoffB}
    Consider a set of independent and identically distributed random variables $\{X_i\}_{i=1}^{N}$ with binary outcomes, taking values in $\{0,1\}$. Define $X  \coloneqq   \sum_{i=1}^{N} X_i$ and $\mu  \coloneqq   \mathbb{E}[X]$. For any $\alpha \in (0,1)$, the probability of $X$ being less than $(1-\alpha)$ times its expected value is exponentially bounded as follows:
    \[
    \operatorname{Pr}\left[X \leq (1-\alpha) \mu\right] \leq \exp\!\left(-\frac{\alpha^2 \mu}{2}\right).
    \]
\end{lemma}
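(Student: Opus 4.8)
The plan is to apply the standard exponential-moment (Chernoff) method: introduce a free parameter $s>0$, rewrite the lower-tail event in terms of the nonnegative random variable $e^{-sX}$, and then invoke the elementary Markov inequality. Concretely, since $x\mapsto e^{-sx}$ is decreasing for $s>0$, the event $\{X\le (1-\alpha)\mu\}$ coincides with $\{e^{-sX}\ge e^{-s(1-\alpha)\mu}\}$, so Markov's inequality gives $\Pr[X\le(1-\alpha)\mu]\le e^{s(1-\alpha)\mu}\,\mathbb{E}[e^{-sX}]$. The task thus reduces to controlling the moment generating function $\mathbb{E}[e^{-sX}]$ and then optimizing the resulting bound over $s$.

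First I would exploit independence. Writing $X=\sum_{i=1}^N X_i$ with the $X_i$ mutually independent, the moment generating function factorizes as $\mathbb{E}[e^{-sX}]=\prod_{i=1}^N \mathbb{E}[e^{-sX_i}]$. Denoting $p_i\coloneqq \Pr[X_i=1]$, each factor equals $1+p_i(e^{-s}-1)$, which I bound using the elementary inequality $1+x\le e^x$ to obtain $\mathbb{E}[e^{-sX_i}]\le \exp(p_i(e^{-s}-1))$. Multiplying these and using $\mu=\sum_i p_i=\mathbb{E}[X]$ yields $\mathbb{E}[e^{-sX}]\le \exp(\mu(e^{-s}-1))$, hence $\Pr[X\le(1-\alpha)\mu]\le \exp(\mu(e^{-s}-1)+s(1-\alpha)\mu)$ for every $s>0$.

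Next I would optimize the exponent over $s$. Differentiating $g(s)\coloneqq \mu(e^{-s}-1)+s(1-\alpha)\mu$ and setting $g'(s)=0$ gives $e^{-s}=1-\alpha$, i.e. $s=-\ln(1-\alpha)$, which is strictly positive precisely because $\alpha\in(0,1)$. Substituting this back produces the sharp intermediate bound $\Pr[X\le(1-\alpha)\mu]\le \exp\!\left(\mu\left[-\alpha-(1-\alpha)\ln(1-\alpha)\right]\right)$.

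The remaining step, which I expect to be the main obstacle, is to simplify this exact exponent to the clean Gaussian-type form claimed in the statement, i.e. to establish the scalar inequality $-\alpha-(1-\alpha)\ln(1-\alpha)\le -\tfrac{\alpha^2}{2}$ for all $\alpha\in(0,1)$. I would prove this via a one-variable calculus argument: set $f(\alpha)\coloneqq \alpha+(1-\alpha)\ln(1-\alpha)-\tfrac{\alpha^2}{2}$, observe $f(0)=0$, and compute $f'(\alpha)=-\ln(1-\alpha)-\alpha$, which is nonnegative on $(0,1)$ since the series $-\ln(1-\alpha)=\alpha+\tfrac{\alpha^2}{2}+\cdots$ dominates $\alpha$. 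Thus $f$ is nondecreasing and $f\ge f(0)=0$, which is exactly the desired inequality and completes the proof. The only points requiring care are using the monotonicity of $x\mapsto e^{-sx}$ in the correct direction when passing to the complementary tail, and keeping $s>0$ throughout so that Markov's inequality is applied in the intended direction.
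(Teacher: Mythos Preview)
Your argument is correct and is the standard Chernoff/Cram\'er exponential-moment derivation of the lower-tail bound; each step (Markov on $e^{-sX}$, factorization via independence, the $1+x\le e^x$ bound, optimization at $s=-\ln(1-\alpha)$, and the final scalar inequality via $f'(\alpha)=-\ln(1-\alpha)-\alpha\ge 0$) is valid. The paper itself does not prove this lemma at all: it is quoted as a standard probability fact in the preliminaries and used later as a black box, so there is no paper-proof to compare against. Your write-up therefore supplies strictly more detail than the paper, and in the textbook way.
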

\begin{lemma}[Hoeffding's inequality]
    Let $\{X_i\}_{i=1}^{N}$ be independent and identically distributed (i.i.d) random variables with values in $[a,b]\subseteq \mathbb{R}$. For any $\varepsilon>0$, the probability of the deviation of $\hat{X}:= (\sum_{i=1}^{N}X_i)/N$ from its expected value is exponentially bounded as follows:
    \[
    \operatorname{Pr}\!\left(\left| \frac{1}{N}\sum_{i=1}^{N}X_i - \mathbb{E}[\hat{X}]\right| \ge \varepsilon \right) \le 2 \exp\!\left(-\frac{2N \varepsilon^2}{(b-a)^2}\right).
    \]
\end{lemma}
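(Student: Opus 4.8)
The plan is to prove the two-sided tail bound by the standard exponential-moment (Chernoff) method, reducing the problem to controlling the moment generating function of each centered summand. First I would set $Y_i \coloneqq X_i - \mathbb{E}[X_i]$, so that each $Y_i$ has zero mean and lies in an interval of width $\ell \coloneqq b-a$, and write $S \coloneqq \sum_{i=1}^{N} Y_i = N(\hat{X}-\mathbb{E}[\hat{X}])$. By the union bound $\Pr(|S| \ge N\varepsilon) \le \Pr(S \ge N\varepsilon) + \Pr(-S \ge N\varepsilon)$, and since $\{Y_i\}$ and $\{-Y_i\}$ play symmetric roles, it suffices to bound the single upper tail $\Pr(S \ge N\varepsilon)$ and then double. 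For any $s>0$, the exponential Markov inequality gives $\Pr(S \ge N\varepsilon) = \Pr(e^{sS} \ge e^{sN\varepsilon}) \le e^{-sN\varepsilon}\,\mathbb{E}[e^{sS}]$, and independence of the $X_i$ factorizes the moment generating function as $\mathbb{E}[e^{sS}] = \prod_{i=1}^{N}\mathbb{E}[e^{sY_i}]$.

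The crux of the argument — and the step I expect to be the main obstacle — is Hoeffding's lemma: for any mean-zero random variable $Y$ supported on an interval of width $\ell$, one has $\mathbb{E}[e^{sY}] \le e^{s^2\ell^2/8}$. I would prove this by analyzing the cumulant generating function $\phi(s) \coloneqq \log\mathbb{E}[e^{sY}]$ directly. One checks $\phi(0)=0$ and $\phi'(0)=\mathbb{E}[Y]=0$, and that $\phi''(s)$ equals the variance of $Y$ under the exponentially tilted probability measure $d\tilde{P} \propto e^{sY}\,dP$. Because this tilted measure is still supported on an interval of width $\ell$, Popoviciu's inequality bounds its variance by $\ell^2/4$, so $\phi''(s) \le \ell^2/4$ for all $s$. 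Taylor's theorem with Lagrange remainder then gives $\phi(s) = \tfrac{1}{2}\phi''(\xi)s^2 \le s^2\ell^2/8$, which is the claimed bound.

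Combining the lemma with the factorization yields $\mathbb{E}[e^{sS}] \le e^{N s^2\ell^2/8}$, hence $\Pr(S \ge N\varepsilon) \le \exp(-sN\varepsilon + N s^2\ell^2/8)$ for every $s>0$. The final step is to optimize the free parameter: the exponent is minimized at $s = 4\varepsilon/\ell^2$, producing $\Pr(S \ge N\varepsilon) \le \exp(-2N\varepsilon^2/\ell^2)$. Doubling to account for the lower tail gives $\Pr(|\hat{X}-\mathbb{E}[\hat{X}]| \ge \varepsilon) \le 2\exp(-2N\varepsilon^2/(b-a)^2)$, as claimed. Everything except Hoeffding's lemma is routine bookkeeping; the lemma is where the sharp constant $1/8$ — and thus the factor $2$ in the final exponent after optimization — originates, and getting that constant right (rather than a looser one) is the only delicate point.
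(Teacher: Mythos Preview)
Your proposal is a correct, standard proof of Hoeffding's inequality via the exponential-moment method and Hoeffding's lemma; the optimization and constants are handled properly. The paper, however, does not prove this statement at all --- it simply states Hoeffding's inequality as a known preliminary fact without proof and then uses it to derive the subsequent corollary on sample sizes --- so there is no paper proof to compare against, and your argument supplies exactly the textbook derivation one would cite.
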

\begin{corollary}
\label{cor:hoff}
    For any $\varepsilon>0$ and $\delta>0$, let $\{X_i\}_{i=1}^{N}$ be i.i.d. random variables with values in $[a,b]\subseteq \mathbb{R}$ and $\hat{X}:= (\sum_{i=1}^{N}X_i)/N$. According to Hoeffding's inequality, a sample size $N$ satisfying
    \[
    N \ge \frac{(b-a)^2}{2\varepsilon^2}\log\!\left(\frac{2}{\delta}\right)
    \]
    suffices to guarantee that $|\frac{1}{N}\sum_{i=1}^{N}X_i- \mathbb{E}[\hat{X}] |< \varepsilon $ with a probability of at least $1-\delta$.
\end{corollary}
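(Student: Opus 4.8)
The plan is to obtain the corollary as an immediate quantitative consequence of the Hoeffding inequality stated just above, by forcing its exponential tail bound to be at most $\delta$ and then inverting that condition to isolate the sample size $N$. This is essentially a one-line algebraic inversion, so the only work is bookkeeping of inequality directions.

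First I would invoke Hoeffding's inequality for the i.i.d. variables $\{X_i\}_{i=1}^N$ taking values in $[a,b]$, which controls the probability that the empirical mean $\hat{X}=(\sum_{i=1}^N X_i)/N$ deviates from its expectation by at least $\varepsilon$:
\[
\operatorname{Pr}\!\left(\left| \frac{1}{N}\sum_{i=1}^{N}X_i - \mathbb{E}[\hat{X}]\right| \ge \varepsilon \right) \le 2 \exp\!\left(-\frac{2N \varepsilon^2}{(b-a)^2}\right).
\]
It then suffices to choose $N$ so that the right-hand side is at most $\delta$, since the complementary event $\{\,|\hat{X}-\mathbb{E}[\hat{X}]| < \varepsilon\,\}$ is exactly the negation of the non-strict deviation event bounded above, and hence carries probability at least $1-\delta$, which is precisely the claimed guarantee.

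Next I would impose the condition $2 \exp(-2N\varepsilon^2/(b-a)^2) \le \delta$ and solve for $N$. Dividing by $2$ gives $\exp(-2N\varepsilon^2/(b-a)^2) \le \delta/2$; applying the (monotone) logarithm and multiplying by the negative reciprocal of the exponent's coefficient flips the inequality and yields
\[
N \ge \frac{(b-a)^2}{2\varepsilon^2}\log\!\left(\frac{2}{\delta}\right),
\]
which is exactly the stated sample size. The only point requiring mild care --- not a genuine obstacle --- is the sign bookkeeping: since $\delta/2 < 1$ one has $\log(\delta/2) < 0$, so rewriting $-\log(\delta/2)=\log(2/\delta)>0$ produces the positive threshold above, and one must note that the strict inequality $<\varepsilon$ in the conclusion is the exact complement of the non-strict event $\ge \varepsilon$ that Hoeffding bounds. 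With these observations the corollary follows directly.
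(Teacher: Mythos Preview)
Your proposal is correct and matches the paper's approach: the paper states this corollary without proof, treating it as an immediate algebraic inversion of the preceding Hoeffding inequality, which is exactly what you carry out. Your care about the strict/non-strict complement is appropriate and there is nothing to add.
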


\subsection{Fermionic Gaussian states}
In this section, we explore the definitions and essential properties of fermionic Gaussian states. We focus on a system consisting of \(n\) qubits or \(n\) fermionic modes, resulting in a Hilbert space dimension of \(2^n\). More precisely, our work is approached from two perspectives: examining a system of $n$ qubits with 1D matchgates circuits and their magic gates (e.g., SWAP gates), or an equivalent native fermionic system of $n$ modes with states prepared by fermionic Gaussian evolutions and local non-Gaussian evolutions. These perspectives are mathematically connected through the Jordan-Wigner mapping, which we use now for defining Majorana operators in terms of Pauli operators.

\begin{definition}[Majorana operators]
    For each \(k \in \left[n\right]\), Majorana operators are defined as:
    \begin{align}
        \gamma_{2 k-1} \coloneqq  \left(\prod_{j=1}^{k-1} Z_j\right) X_k, \quad \gamma_{2 k} \coloneqq  \left(\prod_{j=1}^{k-1} Z_j\right) Y_k.
    \end{align}
    \label{def:majo}
\end{definition}
Majorana operators can also be defined directly in the fermionic language through their anticommutation relations~\cite{Terhal_2002,Bravyi_2002}.
Majorana operators are Hermitian, traceless, and their squares yield the identity, as deducible from their definitions. Moreover, distinct Majorana operators exhibit anticommutativity and orthogonality with respect to the Hilbert-Schmidt inner product.
\begin{definition}[Majorana ordered products]
    Given a set \(S  \coloneqq   \{\mu_1, \dots, \mu_{|S|}\} \subseteq [2n]\) with \(1 \le \mu_1 < \dots < \mu_{|S|} \le 2n\), we define the Majorana product operator as \(\gamma_S = \gamma_{\mu_1} \cdots \gamma_{\mu_{|S|}}\) if \(S \neq \emptyset\), and \(\gamma_{\emptyset} = I\) otherwise. 
\end{definition}

The \(4^n\) distinct ordered Majorana products are orthogonal to each other with respect the Hilbert-Schmidt inner product, therefore they form a basis for the linear operators $\MatC{2^n}$.
\begin{definition}[Fermionic Gaussian Unitary (FGU)]
    A fermionic Gaussian unitary \(G_O\) is a unitary operator satisfying:
    \begin{align}
        G^\dagger_O\gamma_\mu G_O= \sum^{2n}_{\nu=1} O_{\mu,\nu} \gamma_\nu
    \end{align}
    for any \(\mu \in [2n]\), where \(O\in \mathrm{O}(2n)\) is an orthogonal matrix.
\end{definition}
Since the ordered products of Majorana operators $\gamma_{\mu}$ with $\mu \in [2n]$ form a basis for the linear operators, it is sufficient to specify how a unitary acts under conjugation on the $2n$ Majorana operators $\gamma_{\mu}$, where $\mu \in [2n]$, to uniquely determine the unitary up to a phase. Thus, there is a one-to-one mapping between \(n\)-qubit fermionic Gaussian unitaries (up to a global phase) and orthogonal matrices \(\mathrm{O}(2n)\). In particular, given \(O\in\mathrm{O}(2n)\), it is possible to build the associated unitary using at most \(\mathcal{O}(n(n-1)/2)\) \(2\)-qubit FGU operations. For a more detailed explanation on how to map an \(\mathrm{O}(2n)\) matrix to a fermionic Gaussian unitary, refer to~\cite{dias2023classical,zhao2023PhD,Jiang_2018}.
From the previous definition, it readily follows that \(G^{\dagger}_{O}=G_{O^T}\).
Moreover we have that the product of two Gaussian unitaries is Gaussian, namely $(G_{O_1}G_{O_2})^\dagger \gamma_\mu G_{O_1}G_{O_2}= \sum^{2n}_{\nu=1} (O_1O_2)_{\mu,\nu} \gamma_\nu$.
To streamline notation, we will frequently refer to fermionic Gaussian unitaries $G_O$ simply as $G$ when there is no need to specify the associated orthogonal matrix.
Now, we define a fermionic Gaussian state.
\begin{definition}[Fermionic Gaussian state]
    An $n$-qubit state $\ket{\psi}$ is a (pure) fermionic Gaussian state if it can be expressed as $\ket{\psi}=G \ket{0^n}$, where $G$ is a fermionic Gaussian unitary.
\end{definition}
It is noteworthy that any computational basis state $\ket{x}$ is a Gaussian state. This stems from the observation that each Pauli $X_i$ gate acting on the $i$-th qubit, where $i\in [n]$, is a fermionic Gaussian unitary.
An additional useful identity is $Z_j=-i\gamma_{2j-1}\gamma_{2j}$. 
Thus, the density matrix of a pure fermionic Gaussian state associated to an orthogonal matrix $O\in \mathrm{O}(2n)$ can be written as:
\begin{align}
    \label{eq:densityGaus}G_O\ketbra{0^n}G_O^{\dagger}=G_O\left(\prod^{n}_{j=1}\frac{I-i \gamma_{2j-1}\gamma_{2j}}{2}\right)G_O^\dag=\prod^{n}_{j=1}\left(\frac{I-i \tilde{\gamma}_{2j-1}\tilde{\gamma}_{2j}}{2}\right),
\end{align}
where $\tilde{\gamma}_{\mu}:=G_O \gamma_{\mu} G^{\dag}_O =\sum^{2n}_{\nu=1} O^T_{\mu,\nu} \gamma_\nu $ for each $\mu \in [2n]$.

It can be shown that free-fermionic Hamiltonians, which are quadratic Hamiltonians in the Majorana operators, i.e., $H_{\mathrm{free}} \coloneqq i\sum_{p,q=1}^{2n} h_{p,q} \gamma_p \gamma_q$, where $h$ is a real anti-symmetric matrix, have fermionic Gaussian states as their ground states. Moreover, it can be shown that their time evolutions are fermionic Gaussian unitaries.

We now proceed to define the correlation matrix for any (possibly non-Gaussian) state.
\begin{definition}[Correlation Matrix]
    For any $n$-qubit quantum state $\rho$, its correlation matrix $C(\rho)$ is defined as:
    \begin{align}
        [C(\rho)]_{j,k} \coloneqq  -\frac{i}{2}\Tr\left(\left[\gamma_j,\gamma_k\right]\rho\right),
    \end{align}
    where $j,k\in[2n].$ 
\end{definition}
The correlation matrix of any state is real and anti-symmetric, possessing eigenvalues in pairs $\pm i \lambda_j$ for $j\in[2n]$, where $\lambda_j$ are real numbers such that $|\lambda_j|\le 1$.
The correlation matrix of a quantum state, when evolved using fermionic Gaussian unitaries, undergoes a transformation through conjugation with the corresponding orthogonal matrix, as articulated in the following lemma.
\begin{lemma}[Transformation of the Correlation Matrix under FGU]
    \label{prop:transfFGU}
    For a given $n$-qubit state $\rho$, we have:
    \begin{align}
        C(G_O\rho G_O^{\dagger})= O  C(\rho) O^{T},
    \end{align}
    for any orthogonal matrix $O\in \mathrm{O}(2n)$ and associated fermionic Gaussian unitary $G_O$.
\end{lemma}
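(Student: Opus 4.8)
The plan is to compute the matrix element $[C(G_O\rho G_O^\dagger)]_{j,k}$ directly from the definition and to recognize it as a congruence transformation of $[C(\rho)]_{a,b}$ by $O$. First I would unfold the definition,
\begin{align}
[C(G_O\rho G_O^\dagger)]_{j,k} = -\frac{i}{2}\Tr\!\left([\gamma_j,\gamma_k]\,G_O\rho G_O^\dagger\right),
\end{align}
and then use the cyclicity of the trace to move the Gaussian unitaries onto the commutator, arriving at $-\frac{i}{2}\Tr\!\left(G_O^\dagger[\gamma_j,\gamma_k]G_O\,\rho\right)$.

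The key step is to exploit that conjugation by the unitary $G_O$ is an algebra homomorphism, so it passes through the commutator: $G_O^\dagger[\gamma_j,\gamma_k]G_O = [G_O^\dagger\gamma_j G_O,\,G_O^\dagger\gamma_k G_O]$. I would then substitute the defining property of a fermionic Gaussian unitary, $G_O^\dagger\gamma_\mu G_O = \sum_\nu O_{\mu,\nu}\gamma_\nu$, for $\mu=j$ and $\mu=k$, and invoke the bilinearity of the commutator to pull the (scalar) orthogonal matrix entries out of the double sum:
\begin{align}
G_O^\dagger[\gamma_j,\gamma_k]G_O = \sum_{a,b=1}^{2n} O_{j,a}\,O_{k,b}\,[\gamma_a,\gamma_b].
\end{align}

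Substituting this back and recognizing that $-\frac{i}{2}\Tr([\gamma_a,\gamma_b]\rho) = [C(\rho)]_{a,b}$ by definition, I obtain
\begin{align}
[C(G_O\rho G_O^\dagger)]_{j,k} = \sum_{a,b=1}^{2n} O_{j,a}\,[C(\rho)]_{a,b}\,O_{k,b} = \left[O\,C(\rho)\,O^{T}\right]_{j,k},
\end{align}
which is exactly the claimed identity, since the second factor $O_{k,b}$ is precisely the $(b,k)$ entry of $O^T$. There is no genuine obstacle in this argument; the only points that require care are the homomorphism property of conjugation (which converts the conjugated commutator into a commutator of conjugated operators) and the correct index bookkeeping that produces the transpose $O^T$ rather than $O$ on the right. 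The diagonal entries are automatic, as $[\gamma_j,\gamma_j]=0$ makes both sides vanish there.
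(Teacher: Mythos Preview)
Your proof is correct and is precisely the direct verification the paper alludes to when it says the result ``is readily verified through the definitions of the correlation matrix and fermionic Gaussian unitary.'' The paper gives no further detail, so your computation via cyclicity of the trace, the homomorphism property of conjugation, and the defining relation $G_O^\dagger\gamma_\mu G_O=\sum_\nu O_{\mu,\nu}\gamma_\nu$ is exactly the intended argument.
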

This result is readily verified through the definitions of the correlation matrix and fermionic Gaussian unitary.
The state $\ket{x}$ is characterized by a correlation matrix of the form: 
\begin{align}
    \label{eq:compbasisCORR}
    C(\ketbra{x}{x})=\bigoplus_{j = 1}^{n} \begin{pmatrix} 0 &  (-1)^{x_i} \\ -(-1)^{x_i} & 0 \end{pmatrix}.
\end{align}
Hence, for a fermionic Gaussian state $\ket{\psi} \coloneqq  G_O\ket{0^n}$, the correlation matrix takes the form: 
\begin{align}
    C(\ketbra{\psi}{\psi})=O\bigoplus_{j = 1}^{n} \begin{pmatrix} 0 &  1 \\ -1 & 0 \end{pmatrix}O^T.
\end{align}
In the subsequent discussion, we will use $ C(\ket{\psi})$ to denote the correlation matrix of a pure state $\ket{\psi}$. If the state $\ket{\psi}$ is a pure Gaussian state, then each of the eigenvalues of $ C(\ket{\psi})$ is one in absolute value.
Moreover, it is worth noting that every real anti-symmetric matrix can be decomposed in the following form:
\begin{lemma}[Normal form of real anti-symmetric matrices \cite{BookLinAlg}]
\label{le:decSKSYM}
    Any real anti-symmetric matrix $C$ can be decomposed in the so-called `normal-form': 
    \begin{align}
        C=O\bigoplus_{j = 1}^{n} \begin{pmatrix} 0 &  \lambda_j \\ -\lambda_j & 0 \end{pmatrix}O^T,
    \end{align}
    where $O$ is an orthogonal matrix in $\mathrm{O}(2n)$ and $\lambda_j\ge 0 \in \mathbb{R}$, for any $j\in [n]$, are ordered in increasing order. The eigenvalues of $C$ are $\pm i \lambda_j$ where $\lambda_j \in \mathbb{R}$ for any $j\in [n]$.
\end{lemma}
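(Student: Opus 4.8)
The plan is to prove this classical normal-form result via the complex spectral theorem, exploiting the fact that a real anti-symmetric matrix is normal. First I would observe that $iC$ is Hermitian: since $C^T=-C$ and $C$ is real, $(iC)^\dagger=-iC^\dagger=-iC^T=iC$. Hence $iC$ is unitarily diagonalizable with real eigenvalues, so the eigenvalues of $C$ are purely imaginary. Because $C$ is real, its spectrum is invariant under complex conjugation, so the eigenvalues occur in conjugate pairs $\pm i\lambda_j$ with $\lambda_j\in\mathbb{R}$; after relabeling we may take every $\lambda_j\ge 0$. This already settles the eigenvalue claim, since the target block $\begin{pmatrix} 0 & \lambda_j \\ -\lambda_j & 0\end{pmatrix}$ has eigenvalues $\pm i\lambda_j$ and orthogonal similarity preserves the spectrum.

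The core of the argument is to manufacture a real orthogonal $O$ out of the complex eigenvectors. For each pair with $\lambda_j>0$, I would fix a unit eigenvector $v_j=a_j+ib_j$ of $C$ for the eigenvalue $i\lambda_j$, with $a_j,b_j$ real; then $\bar v_j$ is an eigenvector for $-i\lambda_j$. Writing out $Cv_j=i\lambda_j v_j$ and separating real and imaginary parts yields $Ca_j=-\lambda_j b_j$ and $Cb_j=\lambda_j a_j$, so $\mathrm{span}_{\mathbb{R}}\{a_j,b_j\}$ is a $C$-invariant real plane on which $C$ acts exactly as the desired $2\times 2$ block.

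The key step --- and the main thing to get right --- is that these real vectors can be chosen orthonormal after rescaling. Since $C$ is normal, eigenvectors for distinct eigenvalues are orthogonal in the Hermitian inner product; applying this to $v_j$ and $\bar v_j$ (distinct when $\lambda_j\neq 0$) gives $v_j^\dagger \bar v_j=\|a_j\|^2-\|b_j\|^2-2i\,a_j^T b_j=0$, i.e.\ $\|a_j\|=\|b_j\|$ and $a_j^T b_j=0$. With the normalization $v_j^\dagger v_j=\|a_j\|^2+\|b_j\|^2=1$, the rescaled vectors $\sqrt 2\,a_j,\ \sqrt 2\,b_j$ are orthonormal, and in this basis the block is precisely $\begin{pmatrix} 0 & \lambda_j \\ -\lambda_j & 0\end{pmatrix}$. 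Normality also guarantees that the planes belonging to different pairs are mutually orthogonal, so assembling all these orthonormal real vectors as columns produces $O\in\mathrm{O}(2n)$.

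Finally I would handle the kernel: the zero eigenvalue ($\lambda_j=0$) contributes a real subspace of even dimension (even because the total dimension $2n$ is even while the nonzero eigenvalues span an even-dimensional part through their conjugate pairs), on which I pick any real orthonormal basis and pair the vectors into trivial $\begin{pmatrix} 0 & 0 \\ 0 & 0\end{pmatrix}$ blocks. A concluding reordering of blocks --- an orthogonal permutation that can be absorbed into $O$ --- arranges the $\lambda_j$ in increasing order. The only real subtlety to watch is the normalization bookkeeping in the $\lambda_j>0$ blocks; everything else is forced by the normality of $C$.
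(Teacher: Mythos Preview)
Your proof is correct and follows the standard route via the spectral theorem for normal matrices. The paper does not actually prove this lemma: it is stated as a classical linear-algebra fact with a citation to a textbook, so there is no paper proof to compare against. One minor point worth making explicit in your write-up is the degenerate case: when an eigenvalue $i\lambda$ has multiplicity $m>1$, you should pick an orthonormal basis $\{v_1,\dots,v_m\}$ of its eigenspace; then $\{\bar v_1,\dots,\bar v_m\}$ is automatically an orthonormal basis of the $-i\lambda$ eigenspace, and orthogonality of the two eigenspaces (distinct eigenvalues of a normal operator) gives $v_k^\dagger \bar v_l=0$ for all $k,l$, which is what you need to conclude that the associated real planes are mutually orthogonal. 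With that clarification your argument is complete.
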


\begin{definition}[Normal eigenvalues]
Given a real-antisymmetric matrix decomposed as in the previous Lemma~\ref{le:decSKSYM}, $\{\lambda_j\}^n_{j=1}$ are dubbed as the `normal eigenvalues' of the matrix.
\end{definition}

\subsection{Particle-number preserving unitaries}
In this section, we introduce the concept of particle-number preserving fermionic unitaries and establish definitions and facts useful for subsequent discussions. We begin by defining creation and annihilation operators.
\begin{definition}[Creation and annihilation operators]
The annihilation operators are defined as:
\begin{align}
    a_{j}  \coloneqq   \frac{\gamma_{2j-1} + i\gamma_{2j}}{2},
\end{align}
for any $j \in [n]$. The creation operators $\{a^{\dagger}_{j}\}^n_{j=1}$ are defined as the adjoints of the annihilation operators.
\end{definition}

\begin{definition}[Particle number operator]
The operator $\hat{N}  \coloneqq   \sum^n_{i=1} a^{\dag}_ia_i$ is denoted as the particle number operator.
\end{definition}

The computational basis forms a set of eigenstates for the particle number operator:
\begin{align}
    \hat{N}\ket{x_1,\dots, x_n} = (x_1+\dots +x_n)\ket{x_1,\dots,x_n},
\end{align}
where $x_1,\dots, x_n \in \{0,1\}$.

\begin{definition}[Particle number preserving unitaries]
\label{def:particle}
A unitary $U$ is said to be particle number preserving if and only if
\begin{align}
    U^{\dag}\hat{N} U = \hat{N},
\end{align}
where $\hat{N}  \coloneqq   \sum^n_{i=1} a^{\dag}_ia_i$ is the particle number operator.
\end{definition}
\begin{definition}[Symplectic group]
The group of real symplectic matrices, denoted as $\mathrm{Sp}(2n,\mathbb{R})$, is defined as
\begin{align}
    \mathrm{Sp}(2n, \mathbb{R})  \coloneqq   \{S \in \mathrm{Mat}(2n,\mathbb{R}) \,:\, S\Omega S^{T}=\Omega\},
\end{align}
where $\Omega  \coloneqq   \bigoplus_{i = 1}^{n} \begin{pmatrix} 0 & 1 \\ -1 & 0 \end{pmatrix}=\bigoplus_{i = 1}^{n} i Y$.
\end{definition}

It is often convenient to express $\Omega$ as $\Omega= I_n \otimes i Y$, where $I_n$ denotes the $n \times n$ identity matrix.
Note that $\Omega$ in the literature is sometimes defined (see, e.g.,~\cite{serafini2017quantum}) as $iY \otimes I_n=\begin{pmatrix} 0_n & I_n \\ -I_n & 0_n \end{pmatrix}$, but the two definitions are equivalent up to orthogonal transformation.

Now, we state an important proposition that will be useful in the subsequent section.
\begin{proposition}[$\mathrm{U}(n)$ is isomorphic to $\mathrm{O}(2n)\cap \mathrm{Sp}(2n,\mathbb{R})$]
\label{prop:isom}
The set of unitaries $\mathrm{U}(n)$ is isomorphic to the set of real symplectic orthogonal matrices $\mathrm{O}(2n)\cap \mathrm{Sp}(2n,\mathbb{R})$.

In particular, any orthogonal symplectic matrix $O \in \mathrm{O}(2n)\cap \mathrm{Sp}(2n,\mathbb{R})$ can be written as follows:
\begin{align}
    O = \Re(u) \otimes I + \Im(u) \otimes iY,
\end{align}
where $u \in \mathrm{U}(n)$ is an $n\times n$ unitary.
\end{proposition}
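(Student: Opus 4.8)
The plan is to exhibit an explicit map and verify it is the claimed isomorphism. Set $J \coloneqq iY$, so that $J^2 = -I_2$, $J^T = -J$, and $\Omega = I_n \otimes J$, and consider the $\mathbb{R}$-linear map $\Phi : \mathrm{Mat}(n,\mathbb{C}) \to \mathrm{Mat}(2n,\mathbb{R})$ defined by $\Phi(M) \coloneqq \Re(M) \otimes I_2 + \Im(M) \otimes J$, which is precisely the parametrization appearing in the statement. The entire argument rests on the single observation that $\{a I_2 + b J : a,b \in \mathbb{R}\}$ is a faithful matrix model of $\mathbb{C}$: the relations $I_2^2 = I_2$, $I_2 J = J I_2 = J$, and $J^2 = -I_2$ reproduce exactly the multiplication of $1$ and $i$.

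First I would check that $\Phi$ is an injective unital $\mathbb{R}$-algebra homomorphism, i.e. $\Phi(I_n) = I_{2n}$, $\Phi(MN) = \Phi(M)\Phi(N)$, and $\Phi(M) = 0 \Rightarrow M = 0$; multiplicativity is a one-line expansion using the relations above, and injectivity is immediate from reading off $\Re(M)$ and $\Im(M)$ from the blocks of $\Phi(M)$. Next, since $M^\dagger = \Re(M)^T - i\,\Im(M)^T$ and $J^T = -J$, a direct computation yields the key compatibility $\Phi(M^\dagger) = \Phi(M)^T$. Combining these, $\Phi(u)^T \Phi(u) = \Phi(u^\dagger)\Phi(u) = \Phi(u^\dagger u)$, and by injectivity together with $\Phi(I_n) = I_{2n}$ this equals $I_{2n}$ if and only if $u^\dagger u = I_n$; hence $u \in \mathrm{U}(n)$ if and only if $\Phi(u) \in \mathrm{O}(2n)$.

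Then I would treat the symplectic condition by showing that the whole image of $\Phi$ commutes with $\Omega = I_n \otimes J$: expanding $\Omega\,\Phi(M)$ and $\Phi(M)\,\Omega$ and using $J^2 = -I_2$ shows both equal $\Re(M)\otimes J - \Im(M)\otimes I_2$. For orthogonal $\Phi(u)$ this gives $\Phi(u)\,\Omega\,\Phi(u)^T = \Phi(u)\,\Omega\,\Phi(u)^{-1} = \Omega$, so $\Phi(\mathrm{U}(n)) \subseteq \mathrm{O}(2n)\cap\mathrm{Sp}(2n,\mathbb{R})$. For surjectivity I would run the converse: any $O \in \mathrm{O}(2n)\cap\mathrm{Sp}(2n,\mathbb{R})$ satisfies $O\Omega O^T = \Omega$ with $O^T = O^{-1}$, hence $O$ commutes with $\Omega$. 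Writing $O = \sum_{j,k} E_{jk}\otimes O^{(jk)}$ in $2\times 2$ blocks, commuting with $I_n \otimes J$ forces each block $O^{(jk)}$ to commute with $J$, and the commutant of $J$ inside $\mathrm{Mat}(2,\mathbb{R})$ is exactly $\{a I_2 + b J\}$; therefore $O = P\otimes I_2 + Q\otimes J = \Phi(P + iQ)$ for real matrices $P,Q$, and orthogonality of $O$ forces $u \coloneqq P + iQ \in \mathrm{U}(n)$ by the equivalence established above.

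Finally I would assemble these observations: restricted to $\mathrm{U}(n)$, the map $\Phi$ is a group homomorphism (multiplicativity), injective (algebra injectivity), and surjective onto $\mathrm{O}(2n)\cap\mathrm{Sp}(2n,\mathbb{R})$ (the commutant argument), hence a group isomorphism, and its defining formula is exactly the stated parametrization. The step I expect to be the main obstacle is surjectivity: identifying $\mathrm{O}(2n)\cap\mathrm{Sp}(2n,\mathbb{R})$ with the commutant of $\Omega$ and then pinning down the $2\times 2$ commutant of $J$ is where the genuine content lies, whereas the forward inclusion and the orthogonality/unitarity equivalence reduce to bookkeeping with $J^2 = -I_2$ and $J^T = -J$.
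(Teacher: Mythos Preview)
Your proposal is correct and self-contained. The paper does not actually prove this proposition: its entire proof reads ``We refer the reader to Appendix B.1.\ of the book~[Serafini].'' So any comparison of approaches is moot at the level of the paper itself; you have supplied what the paper outsources.

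That said, your argument is the standard one and matches what one finds in the cited reference: realize $\mathbb{C}$ inside $\mathrm{Mat}(2,\mathbb{R})$ via $a+ib \mapsto aI_2 + bJ$, extend entrywise to get $\Phi$, and identify $\mathrm{O}(2n)\cap\mathrm{Sp}(2n,\mathbb{R})$ with the orthogonal commutant of $\Omega = I_n\otimes J$. The only point worth a remark is your surjectivity step: writing $O = \sum_{j,k} E_{jk}\otimes O^{(jk)}$ and deducing that each $2\times 2$ block $O^{(jk)}$ commutes with $J$ uses that $[O,\,I_n\otimes J] = \sum_{j,k} E_{jk}\otimes [O^{(jk)},J]$, which vanishes iff each block commutator does---this is straightforward because the $E_{jk}$ are linearly independent, but it is the one place where a reader might pause. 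Everything else is indeed bookkeeping with $J^2=-I_2$ and $J^T=-J$, as you say.
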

\begin{proof}
We refer the reader to Appendix B.1 of the book~\cite{serafini2017quantum} for a detailed proof. However, by inspection, it can be verified that the matrix \( O \) defined in this way is both orthogonal and symplectic. This follows from the unitarity of \( u \), which implies the relations:
\begin{align}
\label{eq:relUnOrt}
   \Re(u)\Re(u)^t + \Im(u)\Im(u)^t = I, \quad \quad  \Re(u)\Im(u)^t - \Im(u)\Re(u)^t = 0.
\end{align}
\end{proof}
We now present a Lemma, which shows (some) equivalent definitions of a particle-number preserving fermionic Gaussian unitary.
\begin{lemma}[Particle number preserving Gaussian unitary]
\label{le:equivPPnumber}
Let $G$ be a fermionic Gaussian unitary associated with the orthogonal matrix $O\in \mathrm{O}(2n)$. The following points are equivalents:
\begin{enumerate}
    \item $G$ is particle number-preserving,
    \item $G\ketbra{0^n}G^{\dag}=\ketbra{0^n}$,
    \item  $O$ is symplectic orthogonal, i.e., $O \in \mathrm{O}(2n)\cap \mathrm{Sp}(2n,\mathbb{R})$.
\end{enumerate}
\end{lemma}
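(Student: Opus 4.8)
The plan is to prove the cyclic chain of implications $1 \Rightarrow 2 \Rightarrow 3 \Rightarrow 1$, which establishes the full equivalence; each arrow uses a different tool already assembled above. For $1 \Rightarrow 2$, I would observe that $G^{\dag}\hat{N}G = \hat{N}$ is equivalent to $[G,\hat{N}]=0$, so that $\hat{N}(G\ket{0^n}) = G\hat{N}\ket{0^n}=0$. Since $\hat{N}$ is diagonal in the computational basis with eigenvalue equal to the Hamming weight of $\ket{x}$, its kernel is one-dimensional and spanned by $\ket{0^n}$; hence $G\ket{0^n}$ is proportional to $\ket{0^n}$ and $G\ketbra{0^n}G^{\dag} = \ketbra{0^n}$.

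For $2 \Rightarrow 3$, I would simply compare correlation matrices. By Eq.~\eqref{eq:compbasisCORR} we have $C(\ket{0^n}) = \Omega$, and by Lemma~\ref{prop:transfFGU} we have $C(G\ket{0^n}) = O\,C(\ket{0^n})\,O^{T} = O\Omega O^{T}$. The hypothesis $G\ketbra{0^n}G^{\dag} = \ketbra{0^n}$ forces the two correlation matrices to coincide, i.e.\ $O\Omega O^{T} = \Omega$. This is precisely the defining symplectic condition, so together with $O\in\mathrm{O}(2n)$ it yields $O\in\mathrm{O}(2n)\cap\mathrm{Sp}(2n,\mathbb{R})$, which is condition 3.

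For $3 \Rightarrow 1$, I would first rewrite the particle-number operator as a Majorana bilinear. Using $a_i^{\dag}a_i = \tfrac12(I - Z_i)$, the identity $Z_i = -i\gamma_{2i-1}\gamma_{2i}$, and $\sum_i \gamma_{2i-1}\gamma_{2i} = \tfrac12\sum_{\mu,\nu}\Omega_{\mu\nu}\gamma_\mu\gamma_\nu$, one obtains
\begin{align}
    \hat{N} = \frac{n}{2}\,I + \frac{i}{4} \sum_{\mu,\nu}\Omega_{\mu\nu}\,\gamma_\mu\gamma_\nu .
\end{align}
Conjugating by $G$ and inserting $G^{\dag}\gamma_\mu G = \sum_\alpha O_{\mu\alpha}\gamma_\alpha$ contracts the coefficient matrix into $O^{T}\Omega O$. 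Since $O$ is orthogonal, the symplectic condition $O\Omega O^{T} = \Omega$ is equivalent to $O^{T}\Omega O = \Omega$ (both say $[O,\Omega]=0$), so the conjugated bilinear reduces back to $\hat{N}$, giving $G^{\dag}\hat{N}G = \hat{N}$, which is condition 1.

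I expect the only real care to be needed in the last step: correctly tracking the index contractions so that conjugation produces $O^{T}\Omega O$ rather than $O\Omega O^{T}$, and then invoking orthogonality to identify this with the same symplectic condition that surfaces in $2\Rightarrow 3$. The other two implications are essentially immediate once the one-dimensionality of $\ker\hat{N}$ and the value $C(\ket{0^n})=\Omega$ are in hand, so the substantive content is concentrated in the bilinear identity for $\hat{N}$ and the equivalence $O\Omega O^{T}=\Omega \Leftrightarrow O^{T}\Omega O=\Omega$.
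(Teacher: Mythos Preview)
Your proof is correct, and the implications $1\Rightarrow 2$ and $2\Rightarrow 3$ match the paper essentially verbatim. The difference is in $3\Rightarrow 1$: the paper invokes Proposition~\ref{prop:isom} to write $O=\Re(u)\otimes I+\Im(u)\otimes iY$ for some $u\in\mathrm{U}(n)$, then computes explicitly $G^{\dag}a_l G=\sum_j u_{l,j}a_j$ and sums $a_l^{\dag}a_l$ using unitarity of $u$. Your route is more self-contained: you rewrite $\hat{N}$ directly as the quadratic form $\tfrac{n}{2}I+\tfrac{i}{4}\sum_{\mu,\nu}\Omega_{\mu\nu}\gamma_\mu\gamma_\nu$ and observe that Gaussian conjugation replaces $\Omega$ by $O^{T}\Omega O$, which equals $\Omega$ once $O$ is orthogonal and symplectic. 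This avoids the $\mathrm{U}(n)$ isomorphism entirely and is arguably the cleaner argument for the bare equivalence; the paper's approach has the compensating benefit of exhibiting the explicit action $a_l\mapsto\sum_j u_{l,j}a_j$, which is informative in its own right even if not strictly needed here.
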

\begin{proof} 
If $G$ is particle number-preserving, then
\begin{align}
    G^{\dag} \hat{N} G \ket{0^n} = \hat{N}\ket{0^n} = 0,
\end{align}
where the first equality uses Definition~\ref{def:particle}. This implies $\hat{N} G \ket{0^n} = 0$. Since the ground space corresponding to the zero eigenvalue of the particle number operator $\hat{N}$ is one-dimensional and spanned by $\ket{0^n}$, it follows that $G \ket{0^n}$ is equal to $\ket{0^n}$, up to a phase. Thus, 1. implies 2.

Noting that $\ketbra{0^n}$ is a Gaussian state with a correlation matrix $\Omega = \bigoplus_{j=1}^{n}i Y$, we deduce that the correlation matrix of $G \ketbra{0^n}G^\dag$ is $O \Omega O^T$. Therefore the condition $G\ketbra{0^n}G^{\dag}=\ketbra{0^n}$ is equivalent to $O\Omega O^T = \Omega$, i.e., $O$ is a real symplectic orthogonal matrix. This proves that 2. is equivalent to 3. 

Now, let us assume that $O \in \mathrm{O}(2n)\cap \mathrm{Sp}(2n,\mathbb{R})$. Then, we have for $l\in [n]$:
    \begin{align}
        G^{\dag} a_l G &= G^{\dag} \left(\frac{\gamma_{2l-1}+i\gamma_{2l}}{2}\right) G= \sum^{2n}_{j=1}\left(O_{2l-1,j}  + i O_{2l,j} \right)\frac{\gamma_j}{2}\\
        &=\sum^{n}_{j=1}\left(O_{2l-1,2j-1}  + i O_{2l,2j-1} \right)\frac{\gamma_{2j-1}}{2} + \sum^{n}_{j=1}\left(O_{2l-1,2j}  + i O_{2l,2j} \right)\frac{\gamma_{2j}}{2},\\
        &=\sum^{n}_{j=1}\left(\Re(u)_{l,j} + i \Im(u)_{l,j} \right)\frac{\gamma_{2j-1}}{2} + \sum^{n}_{j=1}\left(-\Im(u)_{l,j} + i \Re(u)_{l,j} \right)\frac{\gamma_{2j}}{2},\\
        &=\sum^{n}_{j=1}u_{l,j} \frac{\left(\gamma_{2j-1}+ i \gamma_{2j}\right)}{2}=\sum^{n}_{j=1}u_{l,j} a_j.
    \end{align}
    where in the fourth equality we used that $O \in \mathrm{O}(2n)\cap \mathrm{Sp}(2n,\mathbb{R})$, and so, because of Proposition~\ref{prop:isom}, it can be written as $
        O=\Re(u)\otimes I + \Im(u) \otimes iY $
    where $u \in \mathrm{U}(n)$ is a $n\times n $ unitary.
    Similarly, we have $G^{\dag} a^{\dag}_l G = \sum^{n}_{j=1}u^*_{l,j} a^{\dag}_j$. This implies 
    \begin{align}
        G^{\dag} \hat{N} G= \sum^n_{l=1}  G^{\dag} a^\dag_l a_l G= \sum^{n}_{l,j,k=1}u^*_{l,j}u_{l,k} a^{\dag}_j a_k=\sum^n_{j=1} a^\dag_j a_j=\hat{N},
    \end{align}
    where we used the unitarity of $u$ in the last step. This proves that 3. implies 1.
    \end{proof}

\section{Structure of $t$-doped Gaussian unitaries and states}
In this section, we analyze the concept of $t$-doped fermionic Gaussian unitaries and states. 
\begin{definition}[$t$-doped fermionic Gaussian unitary]
\label{def:tdopedUN}
    A unitary $U_t$ is a $(t,\kappa)$-doped fermionic Gaussian unitary if it can be decomposed in terms of Gaussian unitaries $\{G_i\}^{t}_{i=0}$ and at most $t$ non-Gaussian $\kappa$-local gates $\{W_i\}^{t}_{i=1}$, specifically
    \begin{align}
        U_t = G_{t}W_t\cdots G_1 W_1 G_0.
    \end{align}
    Here $\kappa$-local refers to the number of 
    distinct Majorana operators that generate each non-Gaussian gate. Informally, a unitary is $t$-doped Gaussian if it is $(t,\kappa)$-doped Gaussian for some fixed constant $\kappa$.
\end{definition}

In our work, we consider non-Gaussian gates $W_{t'}$ for $t' \in [t]$, each generated by $\kappa\le 2n$ in $\kappa$ fixed Majorana operators $\{\gamma_{\mu(t',j)}\}^{\kappa}_{j=1}$, where $\mu(t',1),\dots,\mu(t',\kappa) \in [2n]$. For $\kappa=3$, an example of a non-Gaussian gate is $\exp(\theta\gamma_1 \gamma_2 \gamma_3)=\exp(-i Y_1\theta)$, where $\theta \in \mathbb{R}$, and for $\kappa=4$, an example is the $\operatorname{SWAP}$-gate.

\begin{definition}[$t$-doped fermionic Gaussian state]
    An $n$-qubit state $\ket{\psi}$ is a $(t,\kappa)$-doped (informally, $t$-doped) fermionic Gaussian state if it can be expressed as $\ket{\psi}=U_t \ket{0^n}$, where $U_t$ is a $(t,\kappa)$-doped ($t$-doped) fermionic Gaussian unitary.
\end{definition}
\subsection{Compression of $t$-doped Gaussian unitaries and states}
We start by presenting a Theorem which shows how all non-Gaussianity in a $t$-doped unitary can be `compressed' or `moved' to the first few qubits.  
\begin{theorem}[Compression of non-Gaussianity in $t$-doped unitaries]
\label{th:comprUni}
Any $(t,\kappa)$-doped fermionic Gaussian unitary $U_t$ can be expressed as:
\begin{align}
    U_t = G_A(u_t\otimes I) G_B,
\end{align}
where $G_A$, $G_B$ are Gaussian unitaries, and $u_t$ is a unitary supported exclusively on $\lceil\frac{\kappa t}{2}\rceil $ qubits.
\end{theorem}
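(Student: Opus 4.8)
The plan is to first collect all Gaussian factors of $U_t = G_t W_t \cdots G_1 W_1 G_0$ on the left by conjugation, and then rotate the resulting non-Gaussian factors into a common block of qubits using a single auxiliary Gaussian unitary. Setting $\tilde{G}_{t'} \coloneqq G_{t'}\cdots G_0$ and $\tilde{W}_{t'} \coloneqq \tilde{G}_{t'-1}^\dagger W_{t'} \tilde{G}_{t'-1}$, a direct computation (or induction on $t$, inserting $\tilde{G}_{t'-1}\tilde{G}_{t'-1}^\dagger = I$ between consecutive layers) gives $U_t = \tilde{G}_t\, \tilde{W}_t \tilde{W}_{t-1}\cdots \tilde{W}_1$, where the $\tilde{W}_{t'}$ appear with decreasing index from left to right. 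This reduces the problem to understanding a product of $t$ conjugated non-Gaussian gates preceded by a single Gaussian unitary.

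Next I would exploit $\kappa$-locality. Since $W_{t'}$ is generated by the $\kappa$ Majorana operators $\{\gamma_{\mu(t',j)}\}_{j=1}^\kappa$, and conjugation by a Gaussian unitary maps each $\gamma_\mu$ to a linear combination of Majorana operators, $\tilde{W}_{t'}$ lies in the algebra generated by the rotated operators $\tilde{G}_{t'-1}^\dagger \gamma_{\mu(t',j)} \tilde{G}_{t'-1} = \sum_\nu (\tilde{O}_{t'-1})_{\mu(t',j),\nu}\gamma_\nu$, whose coefficient vectors are $v_{t',j}\coloneqq \tilde{O}_{t'-1}^T e_{\mu(t',j)} \in \mathbb{R}^{2n}$. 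Collecting these over all $t'\in[t]$ and $j\in[\kappa]$ yields at most $\kappa t$ vectors, spanning a subspace $V\subseteq \mathbb{R}^{2n}$ with $\dim V \le \kappa t$.

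I would then construct the compressing Gaussian unitary. Let $m\coloneqq \lceil \kappa t/2\rceil$, so $2m \ge \kappa t \ge \dim V$. Extending an orthonormal basis of $V$ to one of $\mathbb{R}^{2n}$ produces an orthogonal matrix $O_{\mathrm{aux}}\in\mathrm{O}(2n)$ with $O_{\mathrm{aux}}^T V \subseteq \mathrm{span}(e_1,\dots,e_{2m})$; let $G_{\mathrm{aux}}$ be an associated Gaussian unitary. Inserting $G_{\mathrm{aux}}G_{\mathrm{aux}}^\dagger=I$ around each factor gives
\begin{align}
    U_t = \underbrace{\tilde{G}_t G_{\mathrm{aux}}}_{G_A}\;\Big(\prod_{t'} G_{\mathrm{aux}}^\dagger \tilde{W}_{t'} G_{\mathrm{aux}}\Big)\;\underbrace{G_{\mathrm{aux}}^\dagger}_{G_B}.
\end{align}
Each factor $G_{\mathrm{aux}}^\dagger \tilde{W}_{t'} G_{\mathrm{aux}}$ is generated by Majorana operators whose coefficient vectors $O_{\mathrm{aux}}^T v_{t',j}$ lie in $\mathrm{span}(e_1,\dots,e_{2m})$, hence it is a polynomial in $\gamma_1,\dots,\gamma_{2m}$. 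Because $\gamma_{2k-1},\gamma_{2k}$ act as identity on qubits $k+1,\dots,n$ (Definition~\ref{def:majo}), the Majoranas $\gamma_1,\dots,\gamma_{2m}$ are all supported on the first $m$ qubits, so the entire middle product is supported there, i.e.\ equals $u_t\otimes I$ with $u_t$ a unitary on $\lceil \kappa t/2\rceil$ qubits, as claimed.

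The only nontrivial ingredient is the existence of $O_{\mathrm{aux}}$. For the stated theorem this is elementary linear algebra, since one merely needs an orthogonal map sending a $\le\kappa t$-dimensional subspace into a coordinate subspace of dimension $2m\ge\kappa t$. The genuine difficulty — where I expect the main obstacle to lie — is the particle-number-conserving refinement, in which $G_A$, $u_t$, $G_B$ must themselves remain particle-number conserving; there $O_{\mathrm{aux}}$ must additionally be symplectic, so the required subspace mapping has to be realized inside $\mathrm{O}(2n)\cap\mathrm{Sp}(2n,\mathbb{R})$. This is handled via the isomorphism $\mathrm{U}(n)\cong \mathrm{O}(2n)\cap\mathrm{Sp}(2n,\mathbb{R})$ of Proposition~\ref{prop:isom}, which allows one to build the compressing map at the level of an $n\times n$ unitary that sends the relevant complexified directions into the first $m$ coordinates, thereby preserving the symplectic structure.
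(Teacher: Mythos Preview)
Your proof is correct and follows essentially the same route as the paper: rearrange $U_t=\tilde{G}_t\prod_{t'}\tilde{W}_{t'}$, collect the $\le\kappa t$ coefficient vectors of the conjugated Majoranas, and rotate them into the first $2\lceil\kappa t/2\rceil$ coordinates by an orthogonal $O_{\mathrm{aux}}$ built from an orthonormal basis of their span. One small clarification on your closing remark: in the paper the symplectic constraint on $O_{\mathrm{aux}}$ is needed not only for the particle-number-preserving refinement but already for the \emph{state} compression (Theorem~\ref{th:comprStAPP}), to enforce $G_{\mathrm{aux}}^\dagger\ket{0^n}=\ket{0^n}$, and in both of those cases the compressed register is $\kappa t$ qubits rather than $\lceil\kappa t/2\rceil$.
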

\begin{proof}
Let us denote $M \coloneqq  \kappa t$. 
We express the $t$-doped unitary as $U_t=(\prod^t_{t'=1}G_{t'}W_{t'})G_0$. Rearranging it, we have 
\begin{align}
    U_t=\tilde{G}_t\prod^t_{t'=1}\tilde{W}_{t'},
\end{align}
where $\tilde{W}_{t'} \coloneqq  \tilde{G}^\dag_{t'-1} W_{t'}\tilde{G}_{t'-1}$ and $\tilde{G}_{t'} \coloneqq  G_{t'}..G_0$. Informally, the idea behind this rewriting is that Gaussian operations act nicely under conjugation.
Next, we rewrite 
\begin{align}
    U_t=\tilde{G}_tG_{\mathrm{aux}}\prod^t_{t'=1}(G^{\dag}_{\mathrm{aux}}\tilde{W}_{t'}G_{\mathrm{aux}})G^{\dag}_{\mathrm{aux}},
\end{align}
by introducing a Gaussian operation $G_{\mathrm{aux}}$ that we fix later and that will be responsible for moving all the non-Gaussian gates to the first qubits. 
Now, we set:
\begin{align}
     G_A& \coloneqq  \tilde{G}_tG_{\mathrm{aux}}\\
     u_t& \coloneqq  \prod^t_{t'=1}(G^{\dag}_{\mathrm{aux}}\tilde{W}_{t'}G_{\mathrm{aux}})\\
     G_B& \coloneqq  G^{\dag}_{\mathrm{aux}}.
\end{align}
Note that $G_A$ so-defined is Gaussian because the product of Gaussian unitaries is Gaussian, and $G_B$ is clearly Gaussian because the adjoint of a Gaussian unitary is Gaussian. 
We need to show that it is possible to choose $G_{\mathrm{aux}}$ such that $u_t$ is supported only on the first $\lceil M/2 \rceil$ qubits.

More precisely, we will require that $G_{\mathrm{aux}}$ ensures that each $G^{\dag}_{\mathrm{aux}}\tilde{W}_{t'}G_{\mathrm{aux}}$ is generated by the first $M$ Majorana operators alone.  
We will achieve it by ensuring that the Heisenberg evolution under $\tilde{G}_{t'-1}G_{\mathrm{aux}}$, of each Majorana that generates $W_{t'}$, has non-trivial support exclusively on the first $M$ Majorana operators. To find $G_{\mathrm{aux}}$ with the desired property, we will find the associated orthogonal matrix $O_{\mathrm{aux}}$.
Let $\{\mu(t',r)\}_{t'\in [t], r\in [\kappa]}$, with $\mu(t',r)\in [2n]$, be the set of indices of Majorana operators generating the $t'$-th non-Gaussian gate (listed in increasing order). For example, consider $\kappa=4$ and the $t'$-th non-Gaussian gate $W_{t'} \coloneqq  \exp\!\left(i \gamma_2 \gamma_4 \gamma_6 \gamma_9 + i \gamma_4\right)$. In this case, $\mu{(t',3)}=6$.
For each such Majorana operator $\gamma_{\mu(t',r)}$, where $r \in [2n]$, its Heisenberg evolution yields
\begin{align}
    G^{\dag}_{\mathrm{aux}} \tilde{G}^{\dag}_{t'-1} \gamma_{\mu(t',r)} \tilde{G}_{t'-1}G_{\mathrm{aux}}=\sum^{2n}_{m=1} (O_{t'-1} O_{\mathrm{aux}})_{\mu(t',r),m}\gamma_m,
\end{align}
where $O_{t'-1}$ is the orthogonal matrix associated with $\tilde{G}_{t'-1}$. Our demand on the support of Heisenberg-evolved $\gamma_{\mu(t',r)}$ implies 
\begin{align}
    (O_{t'-1} O_{\mathrm{aux}})_{\mu(t',r),m}=(O^T_{\mathrm{aux}} O^T_{t'-1} )_{m,\mu(t',r)}=0
    \label{eq:condi}
\end{align}
for any $m\in \{M+1,\dots, 2n\}$. 

Let us denote as $\{\mathbf{e}_{i}\}^{2n}_{i=1}$ the canonical basis vectors of $\mathbb{R}^{2n}$.
For easy notation, we now denote the unit norm vectors $\{O^T_{t'-1}\mathbf{e}_{{\mu(t',r)}}\}_{t'\in [t], r\in [\kappa]}$ with the set of vectors $\{\mathbf{v}_j\}^{M}_{j=1}$ (remember that $M=\kappa t$).
We can prove the existence of such $O_{\mathrm{aux}}$ by proving the existence of its transpose $O \coloneqq  O^T_{\mathrm{aux}}$. In such notation, the condition in Eq.\eqref{eq:condi} reads as:
\begin{align}
    \mathbf{e}^T_{m} O \mathbf{v}_{j}=0,
\end{align}
for any $j \in [M]$ and $m\in \{M+1,\dots, 2n\}$. In other words, we need to prove the existence of an orthogonal matrix $O$ that maps any given real vectors $\mathbf{v}_1, \dots, \mathbf{v}_M \in \mathbb{R}^{2n}$, where $M \leq 2n$, to the span of the first $M$ canonical basis vectors of $\mathbb{R}^{2n}$. The existence of such a matrix is readily established by selecting an orthonormal basis for $W  \coloneqq   \operatorname{Span}(\mathbf{v}_1, \dots, \mathbf{v}_M)$ and defining the orthogonal matrix that maps this orthonormal basis to the first $\operatorname{dim}(W) \le M$ canonical basis vectors. This concludes the proof.
\end{proof}
The subsequent Theorem~\ref{th:comprStAPP} demonstrates the compression of $t$-doped Gaussian states.

\begin{theorem}[Compression of non-Gaussianity in $t$-doped Gaussian states]
    \label{th:comprStAPP}
    Any $(t,\kappa)$-doped fermionic Gaussian state $\ket{\psi}$ can be represented as:
    \begin{align}
        \ket{\psi} = G(\ket{\phi}\otimes\ket{0^{n-\kappa t}}),
    \end{align}
    where $G$ is a Gaussian unitary, and $\ket{\phi}$ is a state supported exclusively on $\kappa t$ qubits.
\end{theorem}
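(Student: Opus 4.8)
The plan is to mirror the construction in the proof of Theorem~\ref{th:comprUni}, but to impose one extra constraint on the auxiliary Gaussian unitary $G_{\mathrm{aux}}$ so that the reference state $\ket{0^n}$ is left invariant. Writing $\ket{\psi}=U_t\ket{0^n}$ and rearranging $U_t$ exactly as before,
\begin{align}
    U_t=\tilde{G}_tG_{\mathrm{aux}}\left(\prod^t_{t'=1}G^{\dag}_{\mathrm{aux}}\tilde{W}_{t'}G_{\mathrm{aux}}\right)G^{\dag}_{\mathrm{aux}},
\end{align}
I would additionally demand that $G^{\dag}_{\mathrm{aux}}\ket{0^n}=\ket{0^n}$ up to a global phase. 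Then the trailing factor $G^{\dag}_{\mathrm{aux}}$ disappears when acting on $\ket{0^n}$, and defining $u_t \coloneqq \prod^t_{t'=1}G^{\dag}_{\mathrm{aux}}\tilde{W}_{t'}G_{\mathrm{aux}}$ and $G\coloneqq \tilde{G}_tG_{\mathrm{aux}}$ yields $\ket{\psi}=G\,u_t\ket{0^n}$. If $u_t$ can be made to act only on the first $\kappa t$ qubits, then $u_t\ket{0^n}=\ket{\phi}\otimes\ket{0^{n-\kappa t}}$ with $\ket{\phi}$ supported on those qubits, and the theorem follows.

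By Lemma~\ref{le:equivPPnumber}, the requirement $G^{\dag}_{\mathrm{aux}}\ket{0^n}=\ket{0^n}$ is equivalent to asking that the orthogonal matrix $O_{\mathrm{aux}}$ associated with $G_{\mathrm{aux}}$ be symplectic, i.e. $O_{\mathrm{aux}}\in \mathrm{O}(2n)\cap \mathrm{Sp}(2n,\mathbb{R})$. The second requirement, that each $G^{\dag}_{\mathrm{aux}}\tilde{W}_{t'}G_{\mathrm{aux}}$ be supported only on the first $\kappa t$ qubits, is, exactly as in the proof of Theorem~\ref{th:comprUni} and Eq.~\eqref{eq:condi}, the condition that $O\coloneqq O^T_{\mathrm{aux}}$ send the $\kappa t$ fixed unit vectors $\{\mathbf{v}_j\}^{\kappa t}_{j=1}$, where $\mathbf{v}_j=O^T_{t'-1}\mathbf{e}_{\mu(t',r)}$, into a coordinate subspace. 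The new point relative to Theorem~\ref{th:comprUni} is that, since $O$ must now \emph{also} be symplectic, we can no longer target merely the first $\kappa t$ Majorana directions; instead we target the span of the first $2\kappa t$ canonical basis vectors, i.e. the first $\kappa t$ qubits. The whole argument thus reduces to the linear-algebra claim: given any $\mathbf{v}_1,\dots,\mathbf{v}_{\kappa t}\in\mathbb{R}^{2n}$, there exists $O\in \mathrm{O}(2n)\cap \mathrm{Sp}(2n,\mathbb{R})$ with $\mathbf{e}^T_m O\mathbf{v}_j=0$ for all $j\in[\kappa t]$ and all $m\in\{2\kappa t+1,\dots,2n\}$.

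The main obstacle is precisely this claim, because requiring $O$ to be simultaneously orthogonal \emph{and} symplectic rules out the elementary Gram--Schmidt construction used for Theorem~\ref{th:comprUni}. I would resolve it through the isomorphism of Proposition~\ref{prop:isom} between $\mathrm{O}(2n)\cap \mathrm{Sp}(2n,\mathbb{R})$ and $\mathrm{U}(n)$: the complex structure $\Omega$ identifies $\mathbb{R}^{2n}$ with $\mathbb{C}^n$, a symplectic orthogonal $O$ acts as a unitary $u\in\mathrm{U}(n)$, and multiplication by $i$ corresponds to $\Omega$. Under this identification the vectors $\mathbf{v}_1,\dots,\mathbf{v}_{\kappa t}$ generate a \emph{complex} subspace of $\mathbb{C}^n$ of complex dimension at most $\kappa t$, namely the $\Omega$-invariant real subspace $\operatorname{Span}_{\mathbb{R}}(\{\mathbf{v}_j\}\cup\{\Omega\mathbf{v}_j\})$. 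Choosing a unitary $u$ that maps an orthonormal basis of this complex subspace onto the first $\kappa t$ complex coordinate vectors, the associated symplectic orthogonal matrix $O$ sends every $\mathbf{v}_j$ into the span of the first $\kappa t$ complex coordinates, which is exactly the span of the first $2\kappa t$ canonical real (Majorana) basis vectors. This proves the claim and explains the factor-of-two difference in qubit count between the state and unitary versions.

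With such $G_{\mathrm{aux}}$ fixed, $u_t$ acts trivially outside the first $\kappa t$ qubits, so $u_t\ket{0^n}=\ket{\phi}\otimes\ket{0^{n-\kappa t}}$ for some $\ket{\phi}$ on $\kappa t$ qubits, and hence $\ket{\psi}=G(\ket{\phi}\otimes\ket{0^{n-\kappa t}})$ with $G=\tilde{G}_t G_{\mathrm{aux}}$ Gaussian, as claimed.
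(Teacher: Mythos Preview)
Your proposal is correct and follows essentially the same approach as the paper: you rearrange $U_t$ exactly as in Theorem~\ref{th:comprUni}, impose the extra constraint $G^{\dag}_{\mathrm{aux}}\ket{0^n}=\ket{0^n}$ (equivalently $O_{\mathrm{aux}}$ symplectic via Lemma~\ref{le:equivPPnumber}), and then prove the required linear-algebra claim using the $\mathrm{O}(2n)\cap\mathrm{Sp}(2n,\mathbb{R})\cong\mathrm{U}(n)$ isomorphism of Proposition~\ref{prop:isom}. The paper packages that last step as a separate Lemma~\ref{le:exist}, phrasing the complexification via the map $\mathbf{f}(\mathbf{w})=(w_1-iw_2,\dots,w_{2n-1}-iw_{2n})$ rather than the $\Omega$-invariant span, but the argument and the resulting $2\kappa t$ versus $\kappa t$ trade-off are identical.
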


\begin{proof}
    Let $\ket{\psi} \coloneqq  U_t \ket{0^n}$, where $U_t=(\prod^t_{t'=1}G_{t'}W_{t'})G_0$ is a $t$-doped fermionic Gaussian unitary.
    The proof begins analogously to the one of the previous Theorem~\ref{th:comprUni} and it uses the same notation. In particular, we have:
    \begin{align}
    U_t=\tilde{G}_tG_{\mathrm{aux}}\prod^t_{t'=1}(G^{\dag}_{\mathrm{aux}}\tilde{W}_{t'}G_{\mathrm{aux}})G^{\dag}_{\mathrm{aux}},
\end{align}
where $\tilde{W}_{t'} \coloneqq  \tilde{G}^\dag_{t'-1} W_{t'}\tilde{G}_{t'-1}$ and $\tilde{G}_{t'} \coloneqq  G_{t'}..G_0$.
We set, as before
\begin{align}
     G_A& \coloneqq  \tilde{G}_tG_{\mathrm{aux}}\\
     u_t& \coloneqq  \prod^t_{t'=1}(G^{\dag}_{\mathrm{aux}}\tilde{W}_{t'}G_{\mathrm{aux}})\\
     G_B& \coloneqq  G^{\dag}_{\mathrm{aux}}.
\end{align} 
However, now, we require that $u_t$ has support on the first $M$ qubits, where $M \coloneqq  \kappa t$ (while in the previous proof of Theorem~\ref{th:comprUni} we requested $\lceil M/2 \rceil $), or, equivalently, we request that the generators of $G^{\dag}_{\mathrm{aux}}\tilde{W}_{t'}G_{\mathrm{aux}}$ for any $t'\in [t]$ involve only the first $2M$ Majorana operators.

This time we also impose that $G^{\dag}_{\mathrm{aux}}\ket{0^n}=\ket{0^n}$. 
This implies that $O^T_{\mathrm{aux}}\in \mathrm{Sp}(2n, \mathbb{R})$ (where $O_{\mathrm{aux}}$ is the orthogonal matrix associated to $G_{\mathrm{aux}}$), i.e. $O^T_{\mathrm{aux}}$ must be a symplectic orthogonal matrix, because of Lemma~\ref{le:equivPPnumber}.  We now define $O \coloneqq  O^T_{\mathrm{aux}}$.

Similarly to the previous theorem and using the same notation, we can ensure that $u_t$ is supported only on the first $M$ qubits by demonstrating the existence of an orthogonal, but this time also symplectic, matrix $O$ that satisfies:
\begin{align}
    \mathbf{e}^T_{m} O \mathbf{v}_{j}=0,
\end{align}
for any $j \in [M]$ with arbitrary $\mathbf{v}_1, \dots, \mathbf{v}_M$ real vectors, and $m\in \{2M+1,\dots, 2n\}$.
The existence of such $O$ follows from the subsequent Lemma~\ref{le:exist}, which crucially uses the isomorphism between $2n\times 2n$ symplectic orthogonal real matrices and $n\times n$ unitaries~\cite{serafini2017quantum}.
\end{proof}

\begin{lemma}[Compression via symplectic orthogonal transformations]
\label{le:exist}
    Let $\{\mathbf{e}_{i}\}^{2n}_{i=1}$ be the canonical basis of $\mathbb{R}^{2n}$.
    Let $\mathbf{v}_1, \dots, \mathbf{v}_M \in \mathbb{R}^{2n}$ be a set of unit-norm real vectors, where $M \leq n$. There exists an orthogonal symplectic matrix $O \in \mathrm{O}(2n)\cap \mathrm{Sp}(2n,\mathbb{R})$ such that 
    \begin{align}
        \mathbf{e}^T_{i}O\mathbf{v}_j = 0,
    \end{align} for all $i \in \{2M+1, \dots, 2n\}$ and $j \in [M]$, meaning that all $\{O\mathbf{v}_j\}^{M}_{j=1}$ are exclusively supported on the span of the first $2M$ canonical basis vectors.
\end{lemma}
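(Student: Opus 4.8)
The plan is to translate the entire problem into complex linear algebra via the isomorphism of Proposition~\ref{prop:isom}, where the symplectic-orthogonal structure becomes transparent. Since $\Omega = I_n \otimes iY$ groups the $2n$ real coordinates into $n$ consecutive pairs $(2k-1,2k)$, I would introduce the $\mathbb{R}$-linear bijection $\varphi : \mathbb{R}^{2n} \to \mathbb{C}^n$ defined by $\varphi(\mathbf{w})_k \coloneqq w_{2k-1} + i\, w_{2k}$. The point of this map is that it intertwines the action of matrices in $\mathrm{O}(2n)\cap\mathrm{Sp}(2n,\mathbb{R})$ with the action of unitaries: writing $O = \Re(u)\otimes I + \Im(u)\otimes iY$ for $u\in\mathrm{U}(n)$ as in Proposition~\ref{prop:isom}, a direct block computation (the $(k,l)$ block of $O$ equals $\Re(u)_{k,l}\, I_2 + \Im(u)_{k,l}\,(iY)$) gives $\varphi(O\mathbf{w}) = \bar{u}\,\varphi(\mathbf{w})$ for every $\mathbf{w}$. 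As $u\mapsto\bar u$ is a bijection of $\mathrm{U}(n)$, it then suffices to produce a unitary $u'$ with the right property and afterwards set $u \coloneqq \overline{u'}$, $O \coloneqq \Re(u)\otimes I + \Im(u)\otimes iY$.

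Next I would reformulate the target condition in these complex terms. The requirement $\mathbf{e}_i^T O \mathbf{v}_j = 0$ for all $i\in\{2M+1,\dots,2n\}$ says that $O\mathbf{v}_j$ vanishes on both real coordinates $2k-1$ and $2k$ for every $k\in\{M+1,\dots,n\}$; equivalently, the complex vector $\varphi(O\mathbf{v}_j) = u'\,\varphi(\mathbf{v}_j)$ has vanishing $k$-th component for each such $k$. Setting $\mathbf{z}_j \coloneqq \varphi(\mathbf{v}_j)\in\mathbb{C}^n$, the lemma thus reduces to the statement: given $M\le n$ complex vectors $\mathbf{z}_1,\dots,\mathbf{z}_M$, there is a unitary $u'$ mapping all of them into the complex span of the first $M$ standard basis vectors $\mathbf{f}_1,\dots,\mathbf{f}_M$ of $\mathbb{C}^n$.

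This complex statement is now immediate. Let $V \coloneqq \operatorname{Span}_\mathbb{C}(\mathbf{z}_1,\dots,\mathbf{z}_M)$, so that $\dim_\mathbb{C} V \le M \le n$; pick an orthonormal basis $\mathbf{b}_1,\dots,\mathbf{b}_d$ of $V$ with $d\le M$, extend it to an orthonormal basis $\mathbf{b}_1,\dots,\mathbf{b}_n$ of $\mathbb{C}^n$, and define $u' \coloneqq \sum_{k=1}^n \mathbf{f}_k\, \mathbf{b}_k^\dagger$. Then $u'$ carries an orthonormal basis to an orthonormal basis (hence is unitary) and sends $V$ into $\operatorname{Span}(\mathbf{f}_1,\dots,\mathbf{f}_d)\subseteq\operatorname{Span}(\mathbf{f}_1,\dots,\mathbf{f}_M)$, since $\mathbf{b}_k^\dagger \mathbf{z} = 0$ for $k>d$ whenever $\mathbf{z}\in V$. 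Unwinding the construction through $\varphi$ yields the desired orthogonal symplectic $O$.

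I expect the only delicate point to be the bookkeeping in the intertwining identity $\varphi(O\mathbf{w}) = \bar u\,\varphi(\mathbf{w})$ — in particular getting the complex conjugate and the coordinate-pairing conventions (the $\Omega = I_n\otimes iY$ convention, rather than $iY\otimes I_n$) consistent with the exact form of $O$ given in Proposition~\ref{prop:isom}. Everything else is routine. The appearance of $2M$ real coordinates (rather than $M$) is precisely the footprint of the symplectic constraint: each real vector corresponds to a \emph{single} complex vector, so $M$ of them occupy at most $M$ complex dimensions, i.e. $2M$ real ones; this is exactly why the hypothesis $M \le n$ (equivalently $\kappa t \le n$) is the right one.
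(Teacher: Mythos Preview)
Your proposal is correct and follows essentially the same route as the paper: both pass through the isomorphism $\mathrm{O}(2n)\cap\mathrm{Sp}(2n,\mathbb{R})\cong\mathrm{U}(n)$ of Proposition~\ref{prop:isom}, identify $\mathbb{R}^{2n}$ with $\mathbb{C}^n$ by pairing coordinates $(2k-1,2k)$, and then observe that a unitary can send any $\le M$-dimensional complex subspace into the span of the first $M$ standard basis vectors. Your write-up is in fact more explicit than the paper's about the intertwining identity $\varphi(O\mathbf{w})=\bar u\,\varphi(\mathbf{w})$ and the conjugation bookkeeping, which the paper just attributes to the cited isomorphism.
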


\begin{proof}
    Orthogonal symplectic matrices $O \in \mathrm{O}(2n) \cap \mathrm{Sp}(2n,\mathbb{R})$ have a bijective correspondence with unitary matrices $U \in \mathrm{U}(n)$ through a well-defined vector space mapping~\cite{serafini2017quantum} (see Proposition~\ref{prop:isom}).
    Specifically, for a $2n$-dimensional real vector $\mathbf{w}  \coloneqq   (w_1, \dots, w_{2n})$, there exists a bijective mapping to an $n$-dimensional complex vector $\mathbf{f}(\mathbf{w})  \coloneqq   (w_1 - i w_2, \dots, w_{2n-1} - i w_{2n})$.
    A unitary transformation $U$ in this $n$-dimensional complex space corresponds to an orthogonal symplectic transformation $O$ in the corresponding $2n$-dimensional real space, and vice versa.
    Thus, finding a unitary $U$ that maps the span of $\mathbf{f}(\mathbf{v}_1), \dots, \mathbf{f}(\mathbf{v}_M)$ to the span of the first $M$ canonical basis vectors of this $n$-dimensional complex space implies the existence of a symplectic orthogonal matrix $O$ that maps $\mathbf{v}_1, \dots, \mathbf{v}_M$ to the span of the first $2M$ canonical basis vectors of the $2n$-dimensional real space.
    To establish the existence of such a unitary, consider the complex vector subspace $W  \coloneqq   \operatorname{Span}(\mathbf{f}(\mathbf{v}_1), \dots, \mathbf{f}(\mathbf{v}_M))$ of dimension at most $M$. By selecting an orthonormal basis for this subspace, we can construct a unitary matrix $U$ that maps this basis to the first $\mathrm{dim}(W) \le M$ canonical basis vectors.
    Hence, the existence of the required unitary matrix is confirmed, implying the existence of a symplectic orthogonal matrix $O$ that maps $\mathbf{v}_1, \dots, \mathbf{v}_M$ to the span of the first $2M$ canonical basis vectors. This concludes our proof.
\end{proof}

It is noteworthy that the `compressed size' obtained for $t$-doped unitaries, which is $\lceil \kappa t /2 \rceil $, is less than $\kappa t$ which we proved for $t$-doped Gaussian states.

\subsection{Compression of $t$-doped Gaussian particle-number preserving unitaries}
The subsequent Proposition~\ref{prop:ComprCONS} demonstrates that the compression of $t$-doped Gaussian unitaries can also be achieved in the particle-number preserving case.
\begin{proposition}[Particle number preserving $t$-doped unitaries]
    \label{prop:ComprCONS}
    Let $U_t$ be a $t$-doped fermionic Gaussian unitary, as per Definition~\ref{def:tdopedUN}, where all the unitaries that compose $U_t$ are particle-number preserving. Then $U_t$ can be decomposed as:
    \begin{align}
        U_t  \coloneqq   G_A (u_t \otimes I) G_B,
    \end{align}
    where $G_A$ and $G_B$ are Gaussian unitaries which preserve the number of particles (see Definition~\ref{def:particle}) and $u_t$ is a particle-number preserving possibly non-Gaussian unitary supported on $\kappa t$ qubits. 
\end{proposition}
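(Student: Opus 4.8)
The plan is to reproduce verbatim the constructive decomposition used in the proofs of Theorem~\ref{th:comprUni} and Theorem~\ref{th:comprStAPP}, and then to track the particle-number preserving (PNP) property through every object that appears. Writing $U_t=(\prod^t_{t'=1}G_{t'}W_{t'})G_0$ and rearranging as in those proofs, I set $\tilde{G}_{t'} \coloneqq G_{t'}\cdots G_0$ and $\tilde{W}_{t'} \coloneqq \tilde{G}^{\dagger}_{t'-1}W_{t'}\tilde{G}_{t'-1}$, obtaining
\begin{align}
    U_t=\tilde{G}_tG_{\mathrm{aux}}\prod^t_{t'=1}(G^{\dagger}_{\mathrm{aux}}\tilde{W}_{t'}G_{\mathrm{aux}})G^{\dagger}_{\mathrm{aux}},
\end{align}
and defining $G_A \coloneqq \tilde{G}_tG_{\mathrm{aux}}$, $u_t \coloneqq \prod^t_{t'=1}(G^{\dagger}_{\mathrm{aux}}\tilde{W}_{t'}G_{\mathrm{aux}})$, and $G_B \coloneqq G^{\dagger}_{\mathrm{aux}}$, for an auxiliary Gaussian unitary $G_{\mathrm{aux}}$ to be fixed.

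The key step is the choice of $G_{\mathrm{aux}}$. As in Theorem~\ref{th:comprStAPP}, I want each $G^{\dagger}_{\mathrm{aux}}\tilde{W}_{t'}G_{\mathrm{aux}}$ to be generated only by the first $2\kappa t$ Majorana operators, i.e.\ supported on the first $\kappa t$ qubits; this reduces to finding an orthogonal $O \coloneqq O^{T}_{\mathrm{aux}}$ sending the $\kappa t$ relevant vectors $\{\mathbf{v}_j\}$ into the span of the first $2\kappa t$ canonical basis vectors. To \emph{additionally} guarantee that $G_{\mathrm{aux}}$ is itself PNP, I demand, via Lemma~\ref{le:equivPPnumber}, that $O_{\mathrm{aux}}$ --- equivalently $O=O^{T}_{\mathrm{aux}}$, since $\mathrm{O}(2n)\cap\mathrm{Sp}(2n,\mathbb{R})$ is a group and hence closed under transpose --- be symplectic orthogonal. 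This is exactly the content of Lemma~\ref{le:exist}, whose hypothesis $\kappa t\le n$ holds by assumption; it produces a symplectic orthogonal $O$ achieving the desired compression onto the first $\kappa t$ qubits.

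It then remains to propagate the PNP property, which follows from three elementary closure facts about Definition~\ref{def:particle}: the PNP property is preserved under products, under taking adjoints, and hence under conjugation. Since every $G_{t'}$ and $W_{t'}$ is PNP by hypothesis, each $\tilde{G}_{t'}$ is a PNP Gaussian unitary and each $\tilde{W}_{t'}=\tilde{G}^{\dagger}_{t'-1}W_{t'}\tilde{G}_{t'-1}$ is PNP; conjugating by the PNP Gaussian $G_{\mathrm{aux}}$ keeps $G^{\dagger}_{\mathrm{aux}}\tilde{W}_{t'}G_{\mathrm{aux}}$ PNP, so their product $u_t$ is a PNP (possibly non-Gaussian) unitary supported on $\kappa t$ qubits. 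Likewise $G_A=\tilde{G}_tG_{\mathrm{aux}}$ and $G_B=G^{\dagger}_{\mathrm{aux}}$ are PNP Gaussian unitaries, completing the decomposition.

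The only genuine obstacle --- and the reason the compressed size here is $\kappa t$ rather than the $\lceil\kappa t/2\rceil$ of Theorem~\ref{th:comprUni} --- is that restricting $G_{\mathrm{aux}}$ to be PNP forces $O_{\mathrm{aux}}$ into the symplectic orthogonal subgroup, so one can no longer exploit an arbitrary $O\in\mathrm{O}(2n)$ to pack the Majorana generators two per qubit. The symplectic constraint ties each Majorana pair $(\gamma_{2k-1},\gamma_{2k})$ to a single complex mode, and the existence of a suitable compressing transformation within this smaller group is precisely what Lemma~\ref{le:exist} secures through the $\mathrm{U}(n)\cong\mathrm{O}(2n)\cap\mathrm{Sp}(2n,\mathbb{R})$ isomorphism (Proposition~\ref{prop:isom}).
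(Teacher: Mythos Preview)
Your proof is correct and follows essentially the same approach as the paper: you reuse the construction of Theorem~\ref{th:comprStAPP}, impose the symplectic constraint on $O_{\mathrm{aux}}$ via Lemma~\ref{le:equivPPnumber} and Lemma~\ref{le:exist}, and then propagate the particle-number preserving property through products, adjoints, and conjugations to conclude that $G_A$, $G_B$, and $u_t$ are all PNP. Your write-up is in fact more explicit than the paper's, which merely points back to Theorem~\ref{th:comprStAPP} and Lemma~\ref{le:equivPPnumber}; your closing remark on why the compressed size is $\kappa t$ rather than $\lceil\kappa t/2\rceil$ is a nice clarification that the paper leaves implicit.
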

\begin{proof}
    The proof of such proposition follows the same lines as the one of Theorem~\ref{th:comprStAPP}. In fact, by inspecting the proof, it readily follows that the so-defined $G_A$ is particle-number preserving. The fact that $G_B$ and $u_t$ are particle-number preserving follows from the condition $G^{\dag}_{\mathrm{aux}}\ket{0^n}=\ket{0^n}$, where $G_B \coloneqq  G^{\dag}_{\mathrm{aux}}$ and by Lemma~\ref{le:equivPPnumber}.
\end{proof}

\subsection{Circuit complexity of $t$-doped Gaussian unitaries and states}
The circuit complexity of a unitary (state) is defined as the minimum number of $\mathcal{O}(1)$-local gates needed for implementing the unitary (state).
We will consider locality both in the qubit and in the fermionic sense; in the latter case it refers to the number of distinct Majorana operators that generate each non-Gaussian gate.  Our subject of interest is the scaling of the complexity of a $t$-doped unitary. Throughout this section, we assume $\kappa=\mathcal{O}(1)$ and let $t=t(n)$ change in some way with $n$.
By Definition~\ref{def:tdopedUN}, a $t$-doped Gaussian unitary $U_t$ can be written as $U_t = G_{t}W_t\cdots G_1 W_1 G_0$, where $\{G_i\}^{t}_{i=0}$ are Gaussian unitaries and $\{W_i\}^{t}_{i=1}$ are, possibly non-Gaussian, $\kappa$-local fermionic gates. 
From this definition and using the fact that any Gaussian unitary can be decomposed as the product of $\le 2n(2n-1)/2$ 
(fermionic) $2$-local gates~\cite{dias2023classical,zhao2023PhD,Zhao_2024}, the fermionic circuit complexity of $t$-doped Gaussian unitaries is upper-bounded by $\mathcal{O}(n^2t)$. The same can be shown for qubit circuit complexity (see below).
But more importantly, we have proven earlier that a $t$-doped Gaussian unitary can be decomposed as $U_t=G_A (u_t \otimes I) G_B$, where $G_A, G_B$ are Gaussians and $u_t$ is a unitary on $\lceil\frac{\kappa t}{2}\rceil$ qubits. In the following, we show that such decomposition reveals an improved upper bound on the circuit complexity of $t$-doped Gaussian unitaries.
\begin{proposition}[Circuit complexity of $t$-doped Gaussian unitaries]
\label{prop:complexity}
The circuit complexity $\mathcal{C}(U_t)$ of a $t$-doped Gaussian unitary $U_t$ is (both in the qubit and fermionic sense):
\begin{align}
    \mathcal{C}(U_t)=\begin{cases}
        \mathcal{O}(n^2+t^3), & \text{if $\kappa t\le n$}\\
        \mathcal{O}(n^2t) ,& \text{otherwise.}
    \end{cases}
\end{align}
\end{proposition}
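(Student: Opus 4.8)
The plan is to treat the two regimes separately, using the compression decomposition of Theorem~\ref{th:comprUni} when $\kappa t \le n$ and a direct gate count otherwise. First I would establish the naive bound $\mathcal{O}(n^2 t)$, which settles the ``otherwise'' case outright and will also serve as a reusable sub-lemma. Writing $U_t = G_t W_t \cdots G_1 W_1 G_0$ as in Definition~\ref{def:tdopedUN}, there are $t+1$ Gaussian unitaries $\{G_i\}$ and $t$ non-Gaussian $\kappa$-local gates $\{W_i\}$. Each Gaussian unitary on $n$ qubits decomposes into at most $2n(2n-1)/2 = \mathcal{O}(n^2)$ two-local gates (in both the qubit and fermionic sense), and, since $\kappa=\mathcal{O}(1)$, each $W_i$ is an $\mathcal{O}(1)$-local fermionic gate costing $\mathcal{O}(1)$ fermionic gates, or $\mathcal{O}(n)$ qubit gates once its Jordan--Wigner string is written out. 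Summing gives $(t+1)\,\mathcal{O}(n^2) + t\,\mathcal{O}(n) = \mathcal{O}(n^2 t)$ in either sense.

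For the regime $\kappa t \le n$, I would invoke Theorem~\ref{th:comprUni} to write $U_t = G_A (u_t \otimes I) G_B$, where $G_A,G_B$ are Gaussian unitaries on all $n$ qubits and $u_t$ is supported only on the first $m \coloneqq \lceil \kappa t/2 \rceil$ qubits; note that $\kappa t \le n$ guarantees $m \le n$, so the decomposition is meaningful. The two outer Gaussian factors each cost $\mathcal{O}(n^2)$ by the decomposition above, contributing $\mathcal{O}(n^2)$ in total, and the only remaining work is to bound the complexity of $u_t$.

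The crux is to recognize that $u_t$ is itself a $t$-doped Gaussian unitary living on only $m=\mathcal{O}(t)$ qubits (using $\kappa=\mathcal{O}(1)$), after which I reapply the naive bound with $n$ replaced by $m$. Concretely, by construction $u_t=\prod_{t'=1}^t (G^{\dag}_{\mathrm{aux}}\tilde{W}_{t'}G_{\mathrm{aux}})$, where each factor is a non-Gaussian gate generated by Majorana operators supported on the first $2m$ Majoranas. Each such factor can be written as $V_{t'}^\dagger W_{t'} V_{t'}$, with $W_{t'}$ the bare $\kappa$-local gate and $V_{t'}$ a Gaussian unitary on the first $m$ qubits implementing the required Majorana rotation; grouping adjacent Gaussian factors then exhibits $u_t$ as a $t$-doped Gaussian unitary on $m$ qubits with $t+1$ Gaussian pieces and $t$ non-Gaussian gates. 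Applying the naive bound on $m$ qubits yields $\mathcal{O}(m^2 t)=\mathcal{O}(t^3)$, where I use that all Jordan--Wigner strings of the relevant Majoranas remain confined to the first $m$ qubits, so no extra $n$-dependence creeps into the qubit count. Combining, $\mathcal{C}(U_t)=\mathcal{O}(n^2)+\mathcal{O}(t^3)=\mathcal{O}(n^2+t^3)$ in both senses, completing this case and hence the proposition.

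The main obstacle I anticipate is precisely this last step: arguing carefully that $u_t$ is realizable with $\mathcal{O}(m^2 t)$ gates rather than the naive $\mathcal{O}(4^m)$ one would expect for an arbitrary $m$-qubit unitary. This hinges on tracking that each conjugated non-Gaussian gate decomposes into an $\mathcal{O}(m^2)$-cost Gaussian rotation surrounding a constant-cost bare gate, and on verifying that in the qubit picture the Jordan--Wigner strings of every Majorana appearing in $u_t$ act only on the first $m$ qubits, so that both the fermionic and qubit gate counts collapse to the same $\mathcal{O}(t^3)$.
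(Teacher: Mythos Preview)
Your proposal is correct and follows essentially the same route as the paper: establish the naive $\mathcal{O}(n^2 t)$ bound directly from Definition~\ref{def:tdopedUN}, then in the regime $\kappa t\le n$ invoke the compression $U_t=G_A(u_t\otimes I)G_B$ from Theorem~\ref{th:comprUni}, observe that $u_t$ is itself a $(t,\kappa)$-doped Gaussian unitary on $m=\lceil\kappa t/2\rceil$ qubits (by sandwiching each conjugated non-Gaussian factor between Gaussian rotations on the first $m$ modes that localize its $\kappa$ generating Majoranas), and reapply the naive bound on $m=\mathcal{O}(t)$ qubits to obtain $\mathcal{O}(t^3)$. One small notational point: when you write each factor as $V_{t'}^\dagger W_{t'} V_{t'}$ with ``$W_{t'}$ the bare $\kappa$-local gate,'' the gate appearing there is not literally the original $W_{t'}$ (whose Majorana indices may lie anywhere in $[2n]$) but an equivalent $\kappa$-local gate generated by $\gamma_1,\dots,\gamma_\kappa$; the paper denotes this $\tilde w_{t'}$ and writes $w_{t'}=g_{t'}^\dagger\tilde w_{t'}g_{t'}$ with $g_{t'}$ Gaussian on the first $m$ modes---exactly the decomposition you describe.
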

\begin{proof}
Let us assume that $\kappa t\le n $. Then, $U_t$ can be written as $U_t=G_A (u_t \otimes I) G_B$, where $G_A, G_B$ are Gaussians and $u_t$ is a unitary on $\lceil\frac{\kappa t}{2}\rceil$ qubits. 
In fact, $u_t$ is itself a $(t,\kappa)$-doped Gaussian unitary on $\lceil\frac{\kappa t}{2}\rceil$ qubits. It is not directly obvious, but will be shown momentarily; this will imply the desired circuit complexity $\mathcal{O}(t^3)$. 
We recall our definitions used in the proof of Theorem~\ref{th:comprUni}. We have
$\tilde{G}_{t'} \coloneqq G_{t'}..G_0$ for $t'\in [t]$ and a Gaussian unitary $G_{\mathrm{aux}}$, and set
\begin{align}
     u_t& \coloneqq \prod^t_{t'=1}(G^{\dag}_{\mathrm{aux}}\tilde{G}^\dag_{t'-1} W_{t'}\tilde{G}_{t'-1}G_{\mathrm{aux}})=\prod^t_{t'=1} w_{t'},
\end{align} 
where we defined the unitaries $w_{t'}\coloneqq G^{\dag}_{\mathrm{aux}}\tilde{G}^\dag_{t'-1} W_{t'}\tilde{G}_{t'-1}G_{\mathrm{aux}}$ which act only on the first $\lceil\frac{\kappa t}{2}\rceil$ qubits (equivalently, fermionic modes).
We note that $w_{t'}$ is generated by $\kappa$ Majorana operator superpositions of form $\sum^{\kappa t}_{i=1} [\tilde{\mathbf{v}}_j]_i \gamma_i$, $j\in[\kappa (t'-1)+1,\kappa t']$; here  $\tilde{\mathbf{v}}_{j}\coloneqq O\mathbf{v}_j$ (cf. notation $O$ and $\mathbf{v}_j$ from the proof of Theorem~\ref{th:comprUni}).
Hence, for each of these (non-local) non-Gaussian unitaries $w_{t'}$, we can find a Gaussian operation $g_{t'}$ on the first $\lceil\frac{\kappa t}{2}\rceil$ qubits whose associated orthogonal matrix rotates vectors $\tilde{\mathbf{v}}_j$ into the span of the first $\kappa$ basis vectors.
As a result, we have $w_{t'} = g_{t'}^\dag \tilde{w}_{t'} g_{t'}$, where $\tilde{w}_{t'}$ is now a $\kappa$-local non-Gaussian unitary generated by the first $\kappa$ Majorana operators. By implication, it is also a local qubit gate acting on the first $\lceil\frac{\kappa}{2}\rceil$ qubits. From this it follows that the circuit complexity of each $w_{t'}$ scales as that of a Gaussian $g_{t'}$. As $g_{t'}$ acts on the first $O(t)$ qubits/fermionic modes, its circuit complexity is $O(t^2)$ both in the qubit and the fermionic sense. Therefore, the circuit complexity of $u_t$ is $t\mathcal{O}(t^2)=\mathcal{O}(t^3)$. Moreover, the circuit complexity (both qubit and fermionic) to implement $G_A$ and $G_B$ is $\mathcal{O}(n^2)$. Putting the above observations together, it follows that the circuit complexity of $U_t$ is $\mathcal{O}(n^2+t^3)$. As long as $\kappa t\le n $, this upper bound is tighter than the one which proof follows from the $t$-doped definition, namely $\mathcal{O}(n^2 t)$.

The qubit (and not only fermionic) circuit complexity of $\mathcal{O}(n^2 t)$ for $ \kappa t > n $ can be found in a similar way as the complexity of $\mathcal{O}(t^3)$ we showed for $u_t$ above. In particular, consider any $\kappa$-local non-Gaussian fermionic unitary $W_{t'}$ which participates in $U_t$. Using auxilliary Gaussian rotations, its generating Majorana operators can be mapped to $\{\gamma_1,..,\gamma_\kappa\}$, resulting in a unitary supported by the first $\lceil\frac{\kappa}{2}\rceil$ qubits alone. The asymptotic qubit complexity of $U_t$ is thus determined by that of remaining $t$ Gaussian layers, yielding $\mathcal{O}(n^2 t)$ as promised.
\end{proof}
This Proposition reveals that $t$-doped fermion Gaussian unitaries allow not only a `spatial compression for the magic', but also a compression of the circuit depth.
Since our proof of Theorem~\ref{th:comprUni} is constructive, this provides a useful method for compiling magic matchgate circuits, which might be used in practice to reduce the circuit depth.

\subsection{$t$-compressible Gaussian states}
We now introduce the notion of $t$-compressible fermionic Gaussian state, a class of states that includes the one of $t$-doped Gaussian states. We now reiterate Definition~\ref{def:tcompr_maintext} for convenience. Throughout this section, we assume that $t\in [n]$.
\begin{definition}[$t$-compressible Gaussian state]
\label{def:tcompr}
A state $\ket{\psi}$ is a $t$-compressible (Gaussian) state if and only if it can be represented as $\ket{\psi}=G(\ket{\phi}\otimes \ket{0^{n-t}})$, where $G$ is a Gaussian operation, and $\ket{\phi}$ is a pure state supported solely on the first $t$ qubits.
\end{definition}
A $t$-doped Gaussian state is also a $\kappa t$-compressible Gaussian state because of Theorem~\ref{th:comprStAPP}. However, the reverse is not true because of circuit complexity arguments: $t$-doped Gaussian states exhibit a circuit complexity of at most $\mathcal{O}(n^2t)$. In contrast, a $t$-compressible state features a circuit complexity of $\mathcal{O}(n^2+\exp(t))$, representing the complexity needed for implementing a single Gaussian operation and preparing a generic state supported on $t$ qubits.

In the subsequent Proposition, we elucidate the structure of the correlation matrix of any \(t\)-compressible state, such as \(t\)-doped states. 
\begin{proposition}[Correlation matrix of a \(t\)-compressible Gaussian state]
\label{prop:corrMatSM}
The correlation matrix \(C(\ket{\psi})\) of a \(t\)-compressible Gaussian state \(\ket{\psi}\) can be expressed as:
\begin{align}
    C(\ket{\psi}) = O \bigoplus_{j = 1}^{n} \begin{pmatrix} 0 &  \lambda_j \\ -\lambda_j & 0 \end{pmatrix} O^T,
    \label{eq:tdopedCORR}
\end{align}
where \(\lambda_j \le 1\) for \(j \in [t]\) and \(\lambda_j = 1\) for \(j \in \{t+1,\dots,n\}\), and \(O \in \mathrm{O}(2n)\) is an orthogonal matrix.
\end{proposition}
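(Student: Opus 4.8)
The plan is to reduce the claim to a direct computation of the correlation matrix of the \emph{uncompressed} state $\ket{\phi}\otimes\ket{0^{n-t}}$ and then to transport it back through the Gaussian unitary $G$. By Definition~\ref{def:tcompr} we have $\ket{\psi}=G(\ket{\phi}\otimes\ket{0^{n-t}})$, so if $O_G\in\mathrm{O}(2n)$ is the orthogonal matrix associated to $G$, Lemma~\ref{prop:transfFGU} gives $C(\ket{\psi})=O_G\,C(\ket{\phi}\otimes\ket{0^{n-t}})\,O_G^T$. It therefore suffices to show that $C(\ket{\phi}\otimes\ket{0^{n-t}})$ is orthogonally similar to a normal form in which the last $n-t$ pairs of normal eigenvalues equal one, and the first $t$ pairs are arbitrary (with modulus at most one).

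First I would establish that $C(\ket{\phi}\otimes\ket{0^{n-t}})$ is block diagonal, carrying a generic antisymmetric $2t\times 2t$ block $C_A$ on the Majorana indices $\{1,\dots,2t\}$ of the first $t$ modes, and the block $\bigoplus_{k=t+1}^{n}\bigl(\begin{smallmatrix}0&1\\-1&0\end{smallmatrix}\bigr)$ on the indices $\{2t+1,\dots,2n\}$. This follows by evaluating each entry $[C]_{a,b}\propto\Tr(\gamma_a\gamma_b\,\rho)$ with $\rho=\ketbra{\phi}\otimes\ketbra{0^{n-t}}$. Writing each $\gamma_a$ in its Jordan--Wigner form (Definition~\ref{def:majo}) as a Pauli operator that factorizes across the bipartition into the first $t$ and last $n-t$ qubits, the expectation factorizes as a product of a first-block and a last-block contribution. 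Whenever the index pair $(a,b)$ is \emph{not} the same-mode pairing $(2k-1,2k)$ with $k>t$, the restriction of $\gamma_a\gamma_b$ to the last $n-t$ qubits contains a single $X$ or $Y$ on some qubit with index $>t$, so its expectation in $\ket{0^{n-t}}$ vanishes because $\bra{0}X\ket{0}=\bra{0}Y\ket{0}=0$; for the same-mode pairing one computes $\gamma_{2k-1}\gamma_{2k}=iZ_k$, giving $[C]_{2k-1,2k}=\Tr(Z_k\rho)=1$, consistent with Eq.~\eqref{eq:compbasisCORR}. This simultaneously kills the off-diagonal cross terms between the two blocks and the off-pairing terms inside the last block.

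Next I would bring $C_A$ to normal form via Lemma~\ref{le:decSKSYM}, writing $C_A=O_A\bigl(\bigoplus_{j=1}^{t}\bigl(\begin{smallmatrix}0&\lambda_j\\-\lambda_j&0\end{smallmatrix}\bigr)\bigr)O_A^T$ for some $O_A\in\mathrm{O}(2t)$, and noting that the normal eigenvalues of any correlation matrix satisfy $\lambda_j\le 1$ (the eigenvalues of $C$ come in pairs $\pm i\lambda_j$ with $|\lambda_j|\le 1$). Combining this with the already-diagonal last block and setting $O\coloneqq O_G\,(O_A\oplus I_{2(n-t)})$ yields exactly the decomposition claimed in Eq.~\eqref{eq:tdopedCORR}, with $\lambda_j\le 1$ for $j\in[t]$ and $\lambda_j=1$ for $j\in\{t+1,\dots,n\}$. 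I expect the main obstacle to be the careful bookkeeping of the Jordan--Wigner strings in the second step: one must verify that, across all index ranges, the last-block factor of $\gamma_a\gamma_b$ really reduces to a single off-diagonal Pauli on a qubit with index $>t$ (hence vanishing expectation in $\ket{0}$) in every case except the same-mode pairing, checking that the $Z$-string tails contribute only diagonal $Z$ factors with expectation $+1$ and thus do not spoil the argument.
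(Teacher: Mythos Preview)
Your proposal is correct and follows essentially the same approach as the paper's proof: both write $C(\ket{\psi})=O_G\,C(\ket{\phi}\otimes\ket{0^{n-t}})\,O_G^T$, observe that the correlation matrix of the product state is block-diagonal as $C(\ket{\phi})\oplus C(\ket{0^{n-t}})$, bring the first block to normal form via Lemma~\ref{le:decSKSYM}, and assemble $O=O_G(O_A\oplus I_{2(n-t)})$. The only difference is that you spell out the Jordan--Wigner bookkeeping justifying the block-diagonal structure, whereas the paper simply asserts it in one line.
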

\begin{proof}
    As per Definition~\ref{def:tcompr}, we represent \(\ket{\psi}\) as \(\ket{\psi} = G(\ket{\phi} \otimes \ket{0^{n-t}})\), where \(G\) is a Gaussian operation and \(\ket{\phi}\) is a pure state supported solely on the first \(t\) qubits.
    Utilizing Lemma~\ref{le:decSKSYM}, we can express the correlation matrix \(C(\ket{\psi})\) as follows:
    \begin{align}
        C(\ket{\psi}) &= Q C(\ket{\phi} \otimes \ket{0^{n-t}}) Q^T \\
        &= Q \left(C(\ket{\phi}) \oplus C(\ket{0^{n-t}})\right) Q^T,
    \end{align}
    where $Q\in \mathrm{O}(2n)$ is the orthogonal matrix associated to the Gaussian unitary $G$. By Eq.\eqref{eq:compbasisCORR}, we have:
    \begin{align}
        C(\ket{0^{n-t}}) = \bigoplus_{j =1}^{n-t} \begin{pmatrix} 0 & 1 \\ -1 & 0 \end{pmatrix}.
    \end{align}
    Since \(C(\ket{\phi})\) is an antisymmetric real matrix, we can decompose it into its normal form (Lemma~\ref{le:decSKSYM}):
    \begin{align}
        C(\ket{\phi})= O_t\bigoplus_{j=1}^{t} \begin{pmatrix} 0 & \lambda_j \\ -\lambda_j & 0 \end{pmatrix} O_t^T
    \end{align}
    The proof concludes by definining $O \coloneqq  Q ( O_t \oplus I_{2n-2t} )$. 
\end{proof}
The previous proposition reveals that $t$-compressible states exhibit at least $n-t$ normal eigenvalues which are exactly one. 
This motivates the following definition, in analogy to the stabilizer dimension and nullity defined in the stabilizer case~\cite{grewal2023efficient,Beverland_2020,Jiang_2023}, which found applications in resource theory of magic.
\begin{definition}[Gaussian dimension and Gaussian nullity of a state]
\label{def:tcomprGDIM}
The Gaussian dimension of a state is defined as the number of the normal eigenvalues of its correlation matrix which are equal to one, while the Gaussian nullity is defined as the number of the normal eigenvalues of its correlation matrix which are strictly less than one. 
\end{definition}

In the following, we show that this is also a sufficient condition for a state to be Gaussian \(t\)-compressible.
\begin{lemma}[Sufficient condition for $t$-compressibility]
\label{prop:suffcond}
Let $\ket{\psi}$ be an $n$-qubit quantum state. If $\ket{\psi}$ has Gaussian dimension $\ge n-t$ (or, equivalently, Gaussian nullity $\le t $), then $\ket{\psi}$ is a $t$-compressible Gaussian state.
\end{lemma}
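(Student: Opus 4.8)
The plan is to use the orthogonal matrix supplied by the normal form of the correlation matrix to rotate $\ket{\psi}$ into a state whose correlation matrix is block-diagonal, and then to read off that the last $n-t$ qubits are pinned to $\ket{0}$. First I would invoke Lemma~\ref{le:decSKSYM} to write $C(\ket{\psi}) = O\bigoplus_{j=1}^{n} \begin{pmatrix} 0 & \lambda_j \\ -\lambda_j & 0\end{pmatrix}O^T$ with the $\lambda_j$ ordered increasingly. The hypothesis that the Gaussian dimension equals $n-t$ means exactly $n-t$ of the $\lambda_j$ equal one; since they are ordered increasingly and bounded by one, these are the largest ones, i.e.\ $\lambda_{t+1} = \dots = \lambda_n = 1$. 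Letting $G \coloneqq G_O$ be the Gaussian unitary associated with $O$ and setting $\ket{\psi'} \coloneqq G^\dag \ket{\psi}$, the identity $G^\dag = G_{O^T}$ together with Lemma~\ref{prop:transfFGU} gives $C(\ket{\psi'}) = O^T C(\ket{\psi}) O = \bigoplus_{j=1}^{n} \begin{pmatrix}0 & \lambda_j \\ -\lambda_j & 0\end{pmatrix}$, so that the correlation matrix of $\ket{\psi'}$ is already in normal form.

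Next I would translate the diagonal blocks into single-qubit expectation values. Using that distinct Majorana operators anticommute, one has $[\gamma_{2k-1},\gamma_{2k}] = 2\gamma_{2k-1}\gamma_{2k}$, and combining this with the identity $Z_k = -i\gamma_{2k-1}\gamma_{2k}$ one computes $[C(\ket{\psi'})]_{2k-1,2k} = \bra{\psi'} Z_k \ket{\psi'}$. Comparing with the normal form then yields $\bra{\psi'} Z_k \ket{\psi'} = \lambda_k$ for every $k\in[n]$, and in particular $\bra{\psi'} Z_k \ket{\psi'} = 1$ for all $k\in\{t+1,\dots,n\}$.

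Finally I would argue the factorization. Since each $Z_k$ has spectrum $\{\pm1\}$ and $\ket{\psi'}$ is a pure state, the condition $\bra{\psi'}Z_k\ket{\psi'}=1$ forces all the weight of $\ket{\psi'}$ onto the $+1$ eigenspace of $Z_k$; as the operators $\{Z_k\}_{k>t}$ mutually commute, $\ket{\psi'}$ lies in their common $+1$ eigenspace, which is exactly the product space $\mathcal{H}_{[t]}\otimes\ket{0^{n-t}}$. Hence $\ket{\psi'} = \ket{\phi}\otimes\ket{0^{n-t}}$ for some $t$-qubit state $\ket{\phi}$, and therefore $\ket{\psi} = G\ket{\psi'} = G(\ket{\phi}\otimes\ket{0^{n-t}})$, matching Definition~\ref{def:tcompr}. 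The main obstacle, which is really the only nontrivial point, is this last factorization step: I expect to spend a line making precise that the simultaneous $+1$ eigenspace of the commuting operators $Z_k$ with $k>t$ equals $\mathcal{H}_{[t]}\otimes\ket{0^{n-t}}$, so that a pure state supported in it necessarily tensor-factorizes with $\ket{0^{n-t}}$.
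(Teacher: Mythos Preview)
Your proposal is correct and follows essentially the same route as the paper's own proof: put $C(\ket{\psi})$ in normal form, apply $G_O^\dag$ so the correlation matrix becomes block-diagonal, read off $\bra{\psi'}Z_k\ket{\psi'}=1$ for $k>t$, and conclude the tensor factorization with $\ket{0^{n-t}}$. If anything, you spell out more carefully than the paper why the unit normal eigenvalues sit in the last $n-t$ slots and why the simultaneous $+1$ eigenspace of the $Z_k$'s forces the factorization; the paper simply asserts these points.
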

\begin{proof}
The correlation matrix of $\ket{\psi}$ can be written in its normal form as $C(\ket{\psi})=O\Lambda O^T$, where $\Lambda \coloneqq  \bigoplus_{j = 1}^{n} \begin{pmatrix} 0 & \lambda_j \\ -\lambda_j & 0 \end{pmatrix}$, and $O\in \mathrm{O}(2n)$. The $\{\lambda_j\}^n_{i=1}$ are the normal eigenvalues such that the last $n-t$ are equal to one.
Consider the state $\ket{\psi^{\prime}} \coloneqq  G^{\dag}_{O}\ket{\psi}$ where $G_O$ is the Gaussian unitary associated with $O$. Then, $C(\psi^{\prime})=O^T C(\ket{\psi}) O=\Lambda$. In particular, 
\begin{align}
\Tr(\ketbra{\psi^{\prime}}Z_k)=C(\psi^{\prime})_{2k-1,2k}=\Lambda_{2k-1,2k}=1,
\end{align}
for each $k\in \{t+1,\dots,n\}$. Therefore, $\ket{\psi^{\prime}}$ must be of the form $\ket{\psi^{\prime}}=\ket{\phi}\otimes\ket{0^{n-t}}$, where $\ket{\phi}$ is an arbitrary state on the first $t$ qubits. Therefore, we have $\ket{\psi}=G_O(\ket{\phi}\otimes\ket{0^{n-t}})$, which is a $t$-compressible state.
\end{proof}
Hence, Proposition~\ref{prop:corrMatSM} and Lemma~\ref{prop:suffcond} prove the following.
\begin{proposition}[Equivalence between $t$-compressibily and $n-t$ Gaussian dimension]
\label{prop:equiv}
    A $n$-qubit state is $t$-compressible if and only if its Gaussian dimension is at least $n-t$ (or, equivalently, its Gaussian nullity is at most $t$). 
\end{proposition}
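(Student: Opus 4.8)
The statement is a biconditional between "$\ket{\psi}$ is $t$-compressible" and "$\ket{\psi}$ has Gaussian dimension $n-t$", and the plan is simply to assemble it from the two results just established, treating each implication in turn. The "if" part (Gaussian dimension $n-t$ implies $t$-compressible) is exactly Lemma~\ref{prop:suffcond}: writing the correlation matrix in normal form $C(\ket{\psi})=O\Lambda O^T$ with the last $n-t$ normal eigenvalues equal to one, the Gaussian unitary $G_O^\dag$ rotates $\ket{\psi}$ to $\ket{\psi^{\prime}}$ whose correlation matrix is $\Lambda$; then $\Tr(\ketbra{\psi^{\prime}} Z_k)=\Lambda_{2k-1,2k}=1$ for $k\in\{t+1,\dots,n\}$ forces the factorization $\ket{\psi^{\prime}}=\ket{\phi}\otimes\ket{0^{n-t}}$, which is precisely the compressibility witness with $G=G_O$.

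For the converse ("only if": $t$-compressible implies the stated Gaussian dimension) I would invoke Proposition~\ref{prop:corrMatSM}. Starting from the definition $\ket{\psi}=G(\ket{\phi}\otimes\ket{0^{n-t}})$, the correlation matrix block-decomposes as $Q\big(C(\ket{\phi})\oplus C(\ket{0^{n-t}})\big)Q^T$, and since $C(\ket{0^{n-t}})$ contributes $n-t$ normal eigenvalues equal to one, the normal form of $C(\ket{\psi})$ carries at least $n-t$ unit normal eigenvalues. Putting the two implications together yields the equivalence.

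The step requiring the most care — and the only genuine obstacle — is the quantifier in this converse direction: Proposition~\ref{prop:corrMatSM} delivers Gaussian dimension \emph{at least} $n-t$, not exactly $n-t$, because the $t$-qubit factor $\ket{\phi}$ may itself possess unit normal eigenvalues (in the extreme case $\ket{\phi}$ Gaussian, the whole state is Gaussian of dimension $n$). Consequently the sharp reading of the equivalence is "$\ket{\psi}$ is $t$-compressible if and only if its Gaussian dimension is at least $n-t$", equivalently that the minimal compression size equals $n$ minus the Gaussian dimension. I would make this explicit and note the consistency check that a state of Gaussian dimension $n-t^{\prime}$ with $t^{\prime}\le t$ is $t^{\prime}$-compressible and hence $t$-compressible (pad $\ket{\phi}$ with $\ket{0}$ on the remaining $t-t^{\prime}$ qubits), so that no contradiction arises with the stated form of Proposition~\ref{prop:equiv}.
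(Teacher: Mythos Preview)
Your proof is correct and follows exactly the paper's approach: the paper simply states that Proposition~\ref{prop:corrMatSM} and Lemma~\ref{prop:suffcond} together yield the equivalence, and you invoke precisely these two results for the two implications. Your additional remark about the quantifier is in fact a sharper reading than the paper's own terse formulation: Proposition~\ref{prop:corrMatSM} indeed only gives Gaussian dimension \emph{at least} $n-t$ (the paper even says ``at least'' in the sentence following that proposition), so the literal biconditional ``Gaussian dimension equals $n-t$'' is a mild overstatement that you correctly diagnose and resolve by noting that $t'$-compressible implies $t$-compressible for $t'\le t$.
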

Note that Proposition~\ref{prop:equiv} also proves that a quantum state is a pure Gaussian state if and only if its Gaussian dimension is $n$.
Furthermore, as a direct consequence of the proof of Lemma~\ref{prop:suffcond}, we establish that the Gaussian unitary associated with a $t$-compressible Gaussian state can be selected as the Gaussian unitary corresponding to any orthogonal matrix placing its correlation matrix in the normal form (Lemma~\ref{le:decSKSYM}). This is summarized as follows:
\begin{lemma}
\label{fact:formGQ}
    Every $t$-compressible Gaussian state $\ket{\psi}$ can be written as $\ket{\psi } \coloneqq  G_O (\ket{\phi}\otimes \ket{0^{n-t}})$, where $G_O$ is chosen as the Gaussian unitary associated with an orthogonal matrix $O \in \mathrm{O}(2n)$ that arranges its correlation matrix in the normal form described in Lemma~\ref{le:decSKSYM}, and $\ket{\phi}$ is a state supported on $t$ qubits.  
\end{lemma}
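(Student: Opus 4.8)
The plan is to recognize that this statement is a coordinate-free repackaging of the computation already performed in the proof of Lemma~\ref{prop:suffcond}, with the new emphasis being that \emph{any} orthogonal matrix placing the correlation matrix in normal form can serve as the compressing rotation. First I would extract the eigenvalue structure from the hypothesis: since $\ket{\psi}$ is $t$-compressible, Proposition~\ref{prop:corrMatSM} (equivalently Proposition~\ref{prop:equiv}) guarantees that $C(\ket{\psi})$ has at least $n-t$ normal eigenvalues equal to one. Crucially, the normal eigenvalues are the moduli of the eigenvalues of $C(\ket{\psi})$ and hence an intrinsic invariant of the matrix, independent of which orthogonal conjugation realizes the normal form. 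Ordering them increasingly as in Lemma~\ref{le:decSKSYM} and using $\lambda_j\le 1$, the unit eigenvalues are necessarily the largest, so $\lambda_{t+1}=\dots=\lambda_n=1$ for every normal-form decomposition.

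Next I would fix \emph{any} orthogonal matrix $O\in\mathrm{O}(2n)$ realizing $C(\ket{\psi})=O\Lambda O^T$ with $\Lambda\coloneqq\bigoplus_{j=1}^{n}\begin{pmatrix}0&\lambda_j\\-\lambda_j&0\end{pmatrix}$, and set $\ket{\psi'}\coloneqq G_O^{\dagger}\ket{\psi}$. By Lemma~\ref{prop:transfFGU} the correlation matrix transforms as $C(\ket{\psi'})=O^T C(\ket{\psi})O=\Lambda$. Reading off the relevant $2\times 2$ block and using $Z_k=-i\gamma_{2k-1}\gamma_{2k}$, this gives $\bra{\psi'}Z_k\ket{\psi'}=C(\ket{\psi'})_{2k-1,2k}=\lambda_k=1$ for every $k\in\{t+1,\dots,n\}$.

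Then I would conclude the factorization. For a pure state, $\bra{\psi'}Z_k\ket{\psi'}=1$ forces the $k$-th qubit into the $+1$ eigenspace of $Z_k$, i.e.\ the state carries no amplitude on any computational-basis string having a $1$ in position $k$. Imposing this simultaneously for all $k\in\{t+1,\dots,n\}$ yields $\ket{\psi'}=\ket{\phi}\otimes\ket{0^{n-t}}$ with $\ket{\phi}$ supported on the first $t$ qubits, whence $\ket{\psi}=G_O(\ket{\phi}\otimes\ket{0^{n-t}})$, exactly as claimed.

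I do not expect a genuine technical obstacle, since the calculation mirrors Lemma~\ref{prop:suffcond} verbatim; the one point demanding care is logical rather than computational. In Lemma~\ref{prop:suffcond} the eigenvalue condition was the \emph{hypothesis} and compressibility the conclusion, whereas here compressibility is the hypothesis and the eigenvalue condition must be derived first, which is supplied by Proposition~\ref{prop:corrMatSM}. The genuinely new content to flag is that the argument never invoked a specially engineered $O$: it applies unchanged to every normal-form rotation, which is precisely what legitimizes using the estimated $\hat O$ obtained from the measured correlation matrix in Algorithm~\ref{alg:algo}. I would therefore state the lemma with $G_O$ taken as the Gaussian unitary of \emph{any} such $O$.
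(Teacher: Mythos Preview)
Your proposal is correct and matches the paper's approach: the paper does not give a separate proof but states the lemma as ``a direct consequence of the proof of Lemma~\ref{prop:suffcond},'' which is exactly the computation you rehearse, supplemented by the eigenvalue structure from Proposition~\ref{prop:corrMatSM}. Your emphasis that the argument works for \emph{any} normal-form orthogonal matrix $O$ is precisely the point being made.
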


\section{Tomography algorithm}
In this section, we present a detailed and rigorous analysis of the tomography algorithm for $t$-compressible states outlined in the main text (Algorithm~\ref{alg:algo}). Throughout this section, we assume that $t\in [n]$.

\subsection{Useful lemmas and subroutines}
Let us start with a Lemma, which gives a sample complexity upper bound to estimate the correlation matrix of a state using single-qubit Pauli-basis measurements.  We recall that the correlation matrix of a state $\rho$ is a real antisymmetric matrix, defined as:
\begin{align}
    [C(\rho)]_{j,k} = \mathrm{Tr}( O^{(j,k)}\rho),
\end{align}
where $O^{(j,k)} \coloneqq  -i \gamma_j\gamma_k$, for $j<k \in [2n]$ (and the other elements are given by the antisymmetricity of the matrix). Note that $O^{(j,k)}$ are Pauli observables. Thus, we have a total of $M  \coloneqq   n(2n-1)$ Pauli expectation values to estimate.
\begin{lemma}[Sample complexity for estimating the correlation matrix by Pauli measurements]
\label{le:samplecompAPP}
Let $\varepsilon_c, \delta > 0$. Assume to have access to $N \ge N_c(n,\varepsilon_c,\delta)$, with
\begin{align}
N_c(n,\varepsilon_c,\delta) \coloneqq  \left\lceil\frac{8n^3(2n-1)}{\varepsilon_c^{2}}\log\!\left(\frac{2n(2n-1)}{\delta}\right)\right\rceil,
\end{align} copies of an $n$-qubit state $\rho$. Utilizing only $N$ single-copies measurements in the Pauli basis, with probability $\ge 1-\delta$, we can construct an anti-symmetric real matrix $\hat{C}$ such that it satisfies:
\begin{align}
    \norm{\hat{C}-C(\rho)}_\infty \le \varepsilon_c.
\end{align}
\end{lemma}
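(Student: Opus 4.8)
The plan is to estimate each of the $M \coloneqq n(2n-1)$ independent entries of the antisymmetric matrix $C(\rho)$ separately by a sample mean, and then convert the resulting entrywise accuracy into the claimed operator-norm accuracy. First I would observe that for each pair $j<k\in[2n]$ the observable $O^{(j,k)}=-i\gamma_j\gamma_k$ is Hermitian (since $\gamma_j,\gamma_k$ are Hermitian and anticommute), squares to the identity, and, by Definition~\ref{def:majo}, equals a tensor product of single-qubit Paulis up to sign. Hence it has spectrum $\{\pm 1\}$ and can be measured by fixing the appropriate single-qubit Pauli basis on each qubit and multiplying the $\pm1$ outcomes; one such measurement on one copy produces a single sample of $[C(\rho)]_{j,k}$ taking values in $[-1,1]$.

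Next I would allocate the $N$ copies evenly across the $M$ distinct Pauli-basis settings (the observables do not in general commute, so samples cannot be shared between settings), giving $N_{\mathrm{per}}\coloneqq N/M$ samples per entry. Defining $\hat{C}_{j,k}$ as the empirical mean of the $N_{\mathrm{per}}$ outcomes for a fixed $j<k$, Corollary~\ref{cor:hoff} applied with $[a,b]=[-1,1]$ (so $(b-a)^2=4$) shows that
\begin{align}
N_{\mathrm{per}} \ge \frac{2}{\varepsilon'^2}\log\!\left(\frac{2}{\delta'}\right)
\end{align}
samples suffice to guarantee $|\hat{C}_{j,k}-[C(\rho)]_{j,k}|<\varepsilon'$ with probability at least $1-\delta'$. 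I would then fix the remaining entries of $\hat{C}$ by antisymmetry, $\hat{C}_{k,j}\coloneqq-\hat{C}_{j,k}$ and $\hat{C}_{j,j}\coloneqq 0$, so that $\hat{C}$ is a genuine real antisymmetric matrix and the error $E\coloneqq\hat{C}-C(\rho)$ inherits the entrywise bound $|E_{j,k}|<\varepsilon'$ for all $j,k$.

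The crucial step is to pass from this entrywise bound to a bound on $\norm{E}_\infty$. Here I would use $\norm{E}_\infty\le\norm{E}_2\le 2n\,\max_{j,k}|E_{j,k}|<2n\,\varepsilon'$, where the first inequality is the standard domination of the operator norm by the Hilbert--Schmidt norm and the second overcounts the $(2n)^2$ entries. Choosing $\varepsilon'\coloneqq\varepsilon_c/(2n)$ then yields $\norm{E}_\infty<\varepsilon_c$. To make all $M$ entrywise estimates hold at once I would set $\delta'\coloneqq\delta/M$ and apply the union bound over the $M=n(2n-1)$ bad events, giving total failure probability at most $\delta$. Substituting $\varepsilon'=\varepsilon_c/(2n)$ and $\delta'=\delta/(n(2n-1))$ into the per-entry bound and multiplying by $M$ produces the budget
\begin{align}
N = M\,N_{\mathrm{per}} = \frac{8n^3(2n-1)}{\varepsilon_c^{2}}\log\!\left(\frac{2n(2n-1)}{\delta}\right)=N_c(n,\varepsilon_c,\delta),
\end{align}
as claimed.

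The argument is largely routine concentration-plus-union-bound bookkeeping; the one place demanding care is the operator-norm-versus-entrywise conversion, since a careless bound can introduce a spurious power of the dimension and break the stated constant. I therefore expect the main effort to be tracking the polynomial-in-$n$ factors precisely --- the $2n$ from the Hilbert--Schmidt domination together with the separate $M$-fold allocation of copies across the non-commuting settings --- so that they combine to exactly the $8n^3(2n-1)$ prefactor rather than a looser power of $n$.
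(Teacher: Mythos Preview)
Your proposal is correct and follows essentially the same approach as the paper: per-entry Hoeffding, a union bound over the $M=n(2n-1)$ independent entries, and the conversion $\|E\|_\infty\le 2n\max_{j,k}|E_{j,k}|$ with the choice $\varepsilon'=\varepsilon_c/(2n)$. The only cosmetic difference is that the paper derives the $2n$ factor by inserting resolutions of the identity into $|\bra{\psi}E^\dagger E\ket{\psi}|$ rather than via the Hilbert--Schmidt norm, but the resulting bound and the final constant $8n^3(2n-1)$ are identical.
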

\begin{proof}
Let $\varepsilon>0$ an accuracy parameter to be fixed.
For each $j<k \in [2n]$, we measure $N'$ copies of $\rho$ in the Pauli basis corresponding to $O^{(j,k)}$, obtaining outcomes $\{X^{(j,k)}_m\}^{N'}_{m=1}$, where $X^{(j,k)}_m \in \{-1,+1\}$. Let $\hat{C}_{j,k}  \coloneqq   \frac{1}{N'}\sum^{N'}_{m=1} X^{(j,k)}_m$. Hoeffding's inequality (specifically Corollary~\ref{cor:hoff}) implies that $N' \ge (4/(2 \varepsilon^2))\log(2M/\delta)$ suffices to guarantee that, with probability at least $1-\delta/M$, we have $\lvert \hat{C}_{j,k} -  \mathrm{Tr}(O^{(j,k)} \rho) \rvert < \varepsilon$.
By using the union bound, we conclude that the probability that this holds for any $j<k \in [2n]$ is at least $1-\delta$. More specifically: 
\begin{align}
    \operatorname{Pr}\!\left(\forall\, j<k \in [2n]: \lvert \hat{C}_{j,k} -  \mathrm{Tr}(O^{(j,k)} \rho) \rvert < \varepsilon \right) & = 1 - \operatorname{Pr}\!\left(\exists\, j<k \in [2n]: \lvert \hat{C}_{j,k} - \mathrm{Tr}(O^{(j,k)} \rho ) \rvert \ge \varepsilon \right) \\
    &\ge 1 - \sum_{j<k \in [2n]}\operatorname{Pr}\!\left(\lvert \hat{C}_{j,k} - \mathrm{Tr}(O^{(j,k)} \rho ) \rvert \ge \varepsilon \right) \\
    &\ge 1-\delta.
\end{align}
Therefore, the total number of measurements needed is $N=N^{\prime}M$. Now, we can conclude by transferring the error to the operator norm. Let $A \coloneqq  \hat{C}-C(\rho)$. For the definition of the operator norm, we have $\norm{A}_\infty \coloneqq  \sup_{\ket{\psi}}\sqrt{\left|\bra{\psi}A^{\dag}A\ket{\psi}\right|}$. Thus, we have:
\begin{align}  
    \left|\bra{\psi}A^{\dag}A\ket{\psi}\right|&\le\sum^{2n}_{i,j,k=1}\left|\bra{\psi}\ketbra{i}{i}A^{\dag}\ketbra{j}{j}A\ketbra{k}{k}\ket{\psi}\right|=2n\varepsilon^2 \sum^{2n}_{i,k=1}\left|\braket{\psi}{i}\right|\left|\braket{k}{\psi}\right|\le 4n^2\varepsilon^2 
\end{align}
where, in the first step, we inserted the resolution of the identity and applied the triangle inequality, in the second step, we applied the upper bound on each matrix element, and, in the last step, we used the Cauchy-Schwartz inequality. Hence, we have $\norm{\hat{C}-C(\rho)}_{\infty}\le 2n \varepsilon$. We conclude by choosing $\varepsilon_c \coloneqq  \varepsilon/2n$.
\end{proof}
While the sequential estimation of individual correlation matrix entries by measurements in the Pauli basis, as described above, may not be the most sample-efficient approach, it might be convenient to adopt in an experiment because of its easy implementation scheme. 
However, instead of independently estimating each correlation matrix entry, one could choose to simultaneously measure mutually commuting observables~\cite{PartitionBabbush} or utilize the fermionic classical shadow protocol introduced in~\cite{Zhao_2021,zhao2023PhD,wan2023matchgate}. This refinement would lead to a reduction in sample complexity by a factor of \(n\), at the cost of implementing a slightly more complicated measurement scheme.

For completeness, we present now a Lemma which gives a sample complexity upper bound for estimating the correlation matrix using a commuting observables measurement scheme. 
The idea is to partition the observables $O^{(j,k)} \coloneqq  -i \gamma_j\gamma_k$, for $j<k \in [2n]$ into disjoint sets of commuting observables. Subsequently, one employs the fact that commuting Pauli observables can be measured simultaneously via a Clifford measurement~\cite{PartitionBabbush,miller2022hardwaretailored}. 
A crucial observation is that two different Pauli observables of the form $-i \gamma_j\gamma_k$ commute if and only if they are associated with different Majorana operators. Using this observation, we can partition these $M=(2n-1)n$ observables into $2n-1$ disjoint sets, each containing $n$ commuting Pauli observables. We refer to~\cite{PartitionBabbush} Appendix C for details of such a partition, and we omit repeating the construction here. However, we point out that the required Clifford transformations can be chosen to be Gaussian as well. 
\begin{lemma}[Sample complexity for estimating the correlation matrix by grouping commuting observables]
\label{le:samplecompAPPcommuting}
Let $\varepsilon_c, \delta > 0$. Assume to have access to $N \ge N_c(n,\varepsilon_c,\delta)$, with
\begin{align}
N_c(n,\varepsilon_c,\delta) \coloneqq  \left\lceil \frac{8n^2(2n-1)}{\varepsilon_c^{2}}\log\!\left(\frac{2n(2n-1)}{\delta}\right)\right\rceil,
\end{align} copies of an $n$-qubit state $\rho$. Utilizing $N$ single-copy (Gaussian) measurements, with probability $\ge 1-\delta$, we can construct an anti-symmetric real matrix $\hat{C}$ such that it satisfies:
\begin{align}
    \norm{\hat{C}-C(\rho)}_\infty \le \varepsilon_c.
\end{align}
\end{lemma}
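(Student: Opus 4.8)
The plan is to follow the same three-stage template as the proof of Lemma~\ref{le:samplecompAPP} --- per-entry Hoeffding concentration, a union bound over all entries, and a transfer of the entrywise error to the operator norm --- but to replace the $M=n(2n-1)$ \emph{independent} single-observable measurements by $2n-1$ \emph{batched} measurements of mutually commuting groups. This batching is precisely where the factor-$n$ saving comes from, since $M/(2n-1)=n$.

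First I would set up the grouping. Each observable $O^{(j,k)}=-i\gamma_j\gamma_k$ is labelled by an unordered pair $\{j,k\}\subseteq[2n]$, and the key algebraic fact is that two distinct such observables $-i\gamma_j\gamma_k$ and $-i\gamma_{j'}\gamma_{k'}$ commute if and only if $\{j,k\}\cap\{j',k'\}=\emptyset$; they anticommute precisely when the pairs share exactly one Majorana index (two quadratic Majorana monomials commute iff they overlap in an even number of indices). Viewing the pairs as edges of the complete graph on the $2n$ Majorana indices, a pairwise-commuting set of observables is exactly a matching, and a partition of all $M$ observables into commuting groups is a proper edge colouring. Since the complete graph on $2n$ vertices admits a $1$-factorization into $2n-1$ perfect matchings, each of size $n$, the $M=n(2n-1)$ observables split into $2n-1$ groups of $n$ mutually commuting Pauli observables. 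I would invoke the explicit construction of~\cite{PartitionBabbush} (Appendix~C) here, noting, as remarked just above the statement, that the simultaneous-diagonalizing Clifford can be chosen Gaussian.

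Next I would run the estimation. For each of the $2n-1$ groups I measure the same batch of $N'$ copies of $\rho$ in the common eigenbasis of that group (a single Gaussian measurement per copy), which simultaneously returns a $\pm1$ outcome for all $n$ observables in the group. For a fixed entry $(j,k)$ this produces $N'$ i.i.d.\ outcomes $X^{(j,k)}_m\in\{-1,+1\}$ with mean $\Tr(O^{(j,k)}\rho)$, so Corollary~\ref{cor:hoff} guarantees that $N'\ge(2/\varepsilon^2)\log(2M/\delta)$ copies make $\lvert\hat C_{j,k}-\Tr(O^{(j,k)}\rho)\rvert<\varepsilon$ hold with probability at least $1-\delta/M$. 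A union bound over all $M$ entries --- valid regardless of the statistical dependence between entries sharing a batch --- then makes every estimate satisfy $\lvert\hat C_{j,k}-\Tr(O^{(j,k)}\rho)\rvert<\varepsilon$ simultaneously with probability $\ge1-\delta$. The total number of single-copy measurements is now $N=N'(2n-1)$ rather than $N'M$, the promised factor-$n$ improvement.

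Finally I would repeat the operator-norm transfer verbatim from Lemma~\ref{le:samplecompAPP}: writing $A\coloneqq\hat C-C(\rho)$, inserting resolutions of the identity and using Cauchy--Schwarz gives $\norm{A}_\infty\le 2n\varepsilon$, so setting $\varepsilon\coloneqq\varepsilon_c/(2n)$ yields $\norm{\hat C-C(\rho)}_\infty\le\varepsilon_c$ and substituting into $N=N'(2n-1)$ reproduces the stated $N_c(n,\varepsilon_c,\delta)$. The only step that genuinely differs from the previous lemma --- and hence the main thing to get right --- is the reduction to $2n-1$ commuting groups: I must confirm both the commutation criterion and that batching does not corrupt the per-entry marginals, since the $n$ estimators within a group are now correlated. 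That correlation is harmless because Hoeffding controls each marginal on its own and the union bound requires no independence; the real content lies in packaging the observables into $2n-1$ simultaneously measurable sets, which is exactly the combinatorial $1$-factorization deferred to~\cite{PartitionBabbush}.
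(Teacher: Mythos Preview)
Your proposal is correct and follows essentially the same approach as the paper: partition the $M=n(2n-1)$ observables into $2n-1$ commuting groups (via the $1$-factorization of~\cite{PartitionBabbush}), measure each group simultaneously with a Gaussian Clifford, apply Hoeffding plus a union bound to get entrywise accuracy $\varepsilon$ from $N'\ge(2/\varepsilon^2)\log(2M/\delta)$ copies per group, and transfer to the operator norm exactly as in Lemma~\ref{le:samplecompAPP}. Your added remarks on the commutation criterion and on why within-batch correlations do not spoil the union bound are correct and make the argument slightly more explicit than the paper's version.
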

\begin{proof}
For each of the $2n-1$ sets of commuting Pauli, we find the Clifford $U$ that allows us to simultaneously measure such commuting Pauli in the given set, i.e. we map each of the $n$ Pauli to $\{Z_k\}^{n}_{k=1}$.  Now this Clifford can also be chosen to be Gaussian. Indeed, the key constraint on $U$ is that each of the different Paulis of the form $-i \gamma_j\gamma_k$ with $j<k \in [2n]$ 
(where pairs $(j,k)$ are non-overlapping since these Paulis commute) 
is mapped to $\{Z_k\}^{n}_{k=1}$ with $Z_k:=-i \gamma_{2k-1}\gamma_{2k} $. This constraint can be satisfied by using a Gaussian operation associated to the orthogonal matrix which is a permutation of the Majorana indices from the commuting Paulis into the Majorana indices from the $Z$-Paulis. Consequently, we measure $N'$ copies of $U\rho U^{\dagger}$ in the computational basis.  Thus, for each $O^{(j,k)}$, we obtain outcomes $\{X^{(j,k)}_m\}^{N'}_{m=1}$, where $X^{(j,k)}_m \in \{-1,+1\}$. The unbiased estimators are $\hat{C}_{j,k}  \coloneqq   \frac{1}{N'}\sum^{N'}_{m=1} X^{(j,k)}_m$. As before, Hoeffding's inequality and union bound imply that $N' \ge (2/ \varepsilon^2)\log(2M/\delta)$ suffices to guarantee that the probability of $\lvert \hat{C}_{j,k} -  \mathrm{Tr}(O^{(j,k)} \rho) \rvert < \varepsilon $ holding for each $j<k\in [2n]$ is at least $1-\delta$.
Therefore, the total number of measurements needed is $N=N^{\prime} (2n-1)$. We can conclude as in the previous Lemma.
\end{proof}

\begin{lemma}[Perturbation bounds on the normal eigenvalues of correlation matrices]
\label{le:lbantisymm}
Let $A$ and $B$ be two $2n\times 2n$ anti-symmetric real matrices with normal eigenvalues $\{\lambda_{k}(A)\}^{n}_{k=1}$ and $\{\lambda_{k}(B)\}^{n}_{k=1}$ respectively ordered in increasing order. Then, we have:
\begin{align}
    |\lambda_{k}(A)- \lambda_{k}(B)|\le \norm{A-B}_{\infty},
\end{align}
for any $k\in [n]$.
\end{lemma}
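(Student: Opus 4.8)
The plan is to reduce this to the standard Weyl-type perturbation bound for singular values (in operator norm) by first showing that the normal eigenvalues of a real anti-symmetric matrix are exactly its singular values, each counted with multiplicity two.

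First I would use the normal form of Lemma~\ref{le:decSKSYM}. Writing $A = O\,\bigoplus_{j=1}^{n} \begin{pmatrix} 0 & \lambda_j \\ -\lambda_j & 0 \end{pmatrix}\, O^T$ with $O \in \mathrm{O}(2n)$ and $\lambda_j \ge 0$, the orthogonality of $O$ leaves the singular values unchanged, so they coincide with those of the block-diagonal middle matrix. A one-line computation on each block shows that $\begin{pmatrix} 0 & \lambda_j \\ -\lambda_j & 0 \end{pmatrix}^T\begin{pmatrix} 0 & \lambda_j \\ -\lambda_j & 0 \end{pmatrix} = \lambda_j^2\, I_2$, so each block contributes the two singular values $\lambda_j, \lambda_j$. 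Hence the multiset of singular values of $A$ is $\{\lambda_1,\lambda_1,\dots,\lambda_n,\lambda_n\}$; ordering the singular values increasingly as $\sigma_1(A) \le \cdots \le \sigma_{2n}(A)$ gives $\sigma_{2k-1}(A) = \sigma_{2k}(A) = \lambda_k(A)$ for every $k \in [n]$ (using that the $\lambda_k$ are increasingly ordered), and identically for $B$.

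Next I would invoke the Weyl perturbation inequality for singular values: for any two matrices of equal size, $|\sigma_i(A) - \sigma_i(B)| \le \norm{A-B}_\infty$ for all $i$, where both lists are sorted in the same order. This is a classical fact, provable by realizing the singular values as eigenvalues of the Hermitian dilation $\begin{pmatrix} 0 & A \\ A^\dagger & 0 \end{pmatrix}$ and applying Weyl's inequality for Hermitian eigenvalues, whose perturbation in operator norm is exactly $\norm{A-B}_\infty$. Taking $i = 2k$ and substituting $\sigma_{2k}(A) = \lambda_k(A)$, $\sigma_{2k}(B) = \lambda_k(B)$ then yields $|\lambda_k(A) - \lambda_k(B)| \le \norm{A-B}_\infty$, which is the claim.

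I expect the only delicate point to be the index bookkeeping: one must check that sorting both the normal eigenvalues and the singular values increasingly aligns $\lambda_k$ with $\sigma_{2k}$ consistently for $A$ and $B$ simultaneously, so that the double multiplicity does not spoil the matching. A cleaner route that sidesteps singular values altogether is to note that $A$ real anti-symmetric makes $iA$ Hermitian with eigenvalues $\{\pm \lambda_k\}_{k=1}^n$; applying Weyl's inequality directly to the Hermitian pair $iA, iB$ and observing $\norm{iA - iB}_\infty = \norm{A-B}_\infty$ immediately gives the bound on the nonnegative eigenvalues, which are precisely the normal eigenvalues. Either route reduces the lemma to a standard eigenvalue-perturbation estimate, so there is no substantial analytic obstacle beyond these identifications.
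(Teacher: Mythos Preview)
Your proposal is correct. Your primary route via singular values is sound and the index bookkeeping you flag is indeed harmless, but the ``cleaner route'' you sketch at the end --- passing to the Hermitian matrices $iA$ and $iB$ and applying Weyl's inequality directly to their eigenvalues $\{\pm\lambda_k\}$ --- is precisely the paper's own proof; so you have both reproduced and slightly generalized the intended argument.
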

\begin{proof}
This follows from the fact that $C \coloneqq  iA$ and $D \coloneqq  iB$ are Hermitian matrices. Applying Weyl's Perturbation Theorem (see Ref.~\cite{bhatia1996matrix}, section VI), which states that two $2n\times 2n$ Hermitian matrices $C$ and $D$, with eigenvalues $c_1\le\dots\le c_{2n}$ and $d_1\le \dots\le d_{2n}$, satisfy: 
\begin{align}
    \norm{C-D}_{\infty}\ge \max_{j\in [n]}|c_{j}- d_{j}|.
\end{align}
Since $A$ and $B$ are antisymmetric, their eigenvalues are $\{\pm i \lambda_{k}(A)\}^{n}_{k=1}$ and $\{\pm i \lambda_{k}(B)\}^{n}_{k=1}$ respectively. Hence, the eigenvalues of $C$ and $D$ are $\{\pm \lambda_{k}(A)\}^{n}_{k=1}$ and $\{\pm \lambda_{k}(B)\}^{n}_{k=1}$ respectively.
This implies that:
\begin{align}
    \norm{A-B}_{\infty}= \norm{C-D}_{\infty}\ge \max_{j\in [2n]}|c_{j}- d_{j}|=\max_{k\in [n]}|\lambda_{k}(A)- \lambda_{k}(B)|.
\end{align}
\end{proof}
To formalize our learning algorithm, it is useful to invoke the following well-known lemma.

\begin{lemma}[Quantum Union Bound \cite{Gao_2015,odonnell2021quantum,aaronson2006qmaqpoly}]
\label{le:qunionbound}
Let $\varepsilon_1, \dots, \varepsilon_M > 0$, where $M \in \mathbb{N}$. Let $\{P_i\}_{i=1}^{M}$ be projectors, and $\rho$ be a quantum state. If $\operatorname{Tr}(P_i \rho) \geq 1 - \varepsilon_i$ for all $i \in [M]$, then
\begin{align}
    \left\|\rho - \frac{P_M \dots P_1 \rho P_1 \dots P_M}{\operatorname{Tr}(P_M \dots P_1 \rho P_1 \dots P_M)}\right\|_{1} \leq 2\sqrt{\sum_{i\in [M]}\varepsilon_i}\,.
\end{align}
\end{lemma}
We now leverage this known lemma to prove the following.
\begin{lemma}
\label{le:learningLEMMA}
Let $\ket{\psi}$ be a $t$-compressible Gaussian state. Given an estimate $\hat{C}$ for the correlation matrix $C(\ket{\psi})$, there exists a Gaussian operation $\hat{G}$ such that:
\begin{align}
    d_{\mathrm{tr}}(\ket{\phi}\!\otimes\!\ket{0^{n-t}}\!,\hat{G}^{\dag}\!\ket{\psi}) \le \sqrt{(n-t)  \norm{\hat{C}-C(\ket{\psi})}_\infty\!},
\end{align}
where $\ket{\phi}\otimes \ket{0^{n-t}}$ corresponds to the post-measurement state obtained by measuring the last $n-t$ qubits of the state $\hat{G}^{\dag}\ket{\psi}$ in the computational basis and obtaining the outcome corresponding to $\ket{0^{n-t}}$. This event occurs with a probability of at least $1-(n-t)  \norm{\hat{C}-C}_\infty$.
\end{lemma}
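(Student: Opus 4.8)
The plan is to take $\hat{G}$ to be the Gaussian unitary $G_{\hat{O}}$ associated with the orthogonal matrix $\hat{O}$ that places the \emph{estimate} $\hat{C}$ into its normal form $\hat{C} = \hat{O}\hat{\Lambda}\hat{O}^T$ (Lemma~\ref{le:decSKSYM}), with normal eigenvalues $\hat{\lambda}_1 \le \dots \le \hat{\lambda}_n$. First I would set $\ket{\psi'} \coloneqq \hat{G}^{\dag}\ket{\psi}$ and analyze its correlation matrix. By the FGU transformation law (Lemma~\ref{prop:transfFGU}), $C(\ket{\psi'}) = \hat{O}^T C(\ket{\psi})\hat{O}$. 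Writing $C(\ket{\psi}) = \hat{C} - E$ with the error matrix $E \coloneqq \hat{C} - C(\ket{\psi})$, orthogonal conjugation diagonalizes the estimate and leaves $C(\ket{\psi'}) = \hat{\Lambda} - \hat{O}^T E \hat{O}$.

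The second step is to control the single-qubit magnetizations on the last $n-t$ qubits. Using the identity $\Tr(\ketbra{\psi'}Z_k) = [C(\ket{\psi'})]_{2k-1,2k}$ established in the proof of Lemma~\ref{prop:suffcond}, I would write, for $k \in \{t+1,\dots,n\}$, $\bra{\psi'}Z_k\ket{\psi'} = \hat{\lambda}_k - [\hat{O}^T E \hat{O}]_{2k-1,2k}$. Two bounds feed in. Since $\ket{\psi}$ is $t$-compressible, Proposition~\ref{prop:corrMatSM} guarantees that its \emph{true} normal eigenvalues satisfy $\lambda_k = 1$ for all $k \ge t+1$; combined with the perturbation bound $|\hat{\lambda}_k - \lambda_k| \le \norm{E}_\infty$ of Lemma~\ref{le:lbantisymm} (which matches eigenvalues index-by-index in increasing order), this yields $\hat{\lambda}_k \ge 1 - \norm{E}_\infty$. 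Separately, since $\hat{O}$ is orthogonal, any fixed entry of $\hat{O}^T E \hat{O}$ is bounded by its largest singular value, so $|[\hat{O}^T E \hat{O}]_{2k-1,2k}| \le \norm{\hat{O}^T E \hat{O}}_\infty = \norm{E}_\infty$. Hence $\bra{\psi'}Z_k\ket{\psi'} \ge 1 - 2\norm{E}_\infty$, which gives $\Tr(\ketbra{0}{0}_k\ketbra{\psi'}) = \tfrac{1}{2}\left(1 + \bra{\psi'}Z_k\ket{\psi'}\right) \ge 1 - \norm{E}_\infty$ for each of the last $n-t$ qubits.

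Finally I would invoke the Quantum Union Bound (Lemma~\ref{le:qunionbound}) with $\mathcal{Q} = \{t+1,\dots,n\}$ and each $\varepsilon_q = \norm{E}_\infty = \norm{\hat{C}-C(\ket{\psi})}_\infty$. This ensures that sequentially measuring the last $n-t$ qubits of $\ket{\psi'}$ in the computational basis returns the all-zeros outcome with probability at least $1-(n-t)\norm{\hat{C}-C(\ket{\psi})}_\infty$, and that, conditioned on this outcome, the post-measurement state lies within trace distance $\sqrt{\sum_{q\in\mathcal{Q}}\varepsilon_q} = \sqrt{(n-t)\norm{\hat{C}-C(\ket{\psi})}_\infty}$ of $\ket{\psi'}$. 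Projecting the last $n-t$ qubits onto $\ket{0}$ necessarily leaves a product state, so the post-measurement state is of the form $\ket{\phi}\otimes\ket{0^{n-t}}$ with $\ket{\phi}$ supported on the first $t$ qubits, which is precisely the asserted state, matching both the distance and probability bounds in the statement.

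The one step requiring genuine care is the index-matching in the perturbation argument: Lemma~\ref{le:lbantisymm} compares normal eigenvalues only after ordering them increasingly, so I must argue that the blocks indexed $k > t$ in the normal form of $\hat{G}^{\dag}\ket{\psi}$ really do line up with the true eigenvalues equal to one. The increasing-order convention makes this automatic, but it is the conceptual crux that converts an operator-norm bound on the \emph{global} estimation error $E$ into pointwise control over exactly the last $n-t$ single-qubit marginals; the remaining inequalities are routine.
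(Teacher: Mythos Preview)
Your proof is correct and follows essentially the same approach as the paper: choose $\hat{G}$ from the normal form of $\hat{C}$, combine the eigenvalue perturbation bound (Lemma~\ref{le:lbantisymm}) with the entrywise bound on $\hat{O}^T E\hat{O}$ to obtain $\bra{\psi'}Z_k\ket{\psi'}\ge 1-2\norm{E}_\infty$ for $k>t$, and then apply the Quantum Union Bound. Your explicit error-matrix decomposition $C(\ket{\psi'})=\hat{\Lambda}-\hat{O}^T E\hat{O}$ and the orthogonal-invariance argument for the entry bound are slightly more transparent than the paper's ``Cauchy--Schwarz and the definition of infinity norm'' phrasing, but the logic is identical.
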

\begin{proof}
    According to Proposition~\ref{prop:corrMatSM}, the correlation matrix $C \coloneqq   C(\ket{\psi})$ can be put in the form $C=O \Lambda O^{\rm T}$, where $O\in \mathrm{O}(2n)$ and $\Lambda=i\bigoplus^{n}_{j=1}\lambda_j(C) Y$. Here, $\lambda_j\le 1$ for $j \in [t]$ and $\lambda_j=1$ for $j \in \{t+1,\dots n\}$, and $Y$ represents the $Y$-Pauli matrix. Let $\varepsilon_c \coloneqq  \norm{\hat{C}-C}_\infty$, then we have (because of Lemma~\ref{le:lbantisymm}) that $|\lambda_{j}(\hat{C})-\lambda_{j}(C)|\le \varepsilon_c$, where $\{ \lambda_{j}(\hat{C})\}^{n}_{j=1}$ and $\{ \lambda_{j}(C)\}^{n}_{j=1}$ are the normal eigenvalues of the matrices $\hat{C}$ and $C$ respectively. Thus, we have:
    \begin{align}
        \lambda_{m}(\hat{C})\ge 1 - \varepsilon_c,
        \label{eq:lambdaineq}
    \end{align}
    for $m \in \{t+1,\dots n\}$.
    We can now express the real anti-symmetric matrix $\hat{C}$ in its normal form $\hat{C}=\hat{O}\hat{\Lambda} \hat{O}^{T}$, where $\hat{O}\in \mathrm{O}(2n)$ is an orthogonal matrix and $\hat{\Lambda}$ is a matrix of the form $\hat{\Lambda} = i\bigoplus^{n}_{j=1} \lambda_{j}(\hat{C}) Y$, with $\lambda_{j}(\hat{C})\in \mathbb{R}$ for any $j\in[n]$. 
    Next, consider the state $|\psi^{\prime}\rangle \coloneqq  \hat{G}^{\dag}\ket{\psi}$, where $\hat{G}$ is the Gaussian unitary associated to $\hat{O}^{T}$. It holds that $\lvert C(\psi^{\prime})_{j,k}-(\hat{\Lambda} )_{j,k} \rvert\le \varepsilon_c$, where we used that $C(\psi^{\prime})=\hat{O}^T C(\ket{\psi})\hat{O}$, $\hat{\Lambda}=\hat{O}^T \hat{C}\hat{O}$, Cauchy-Schwarz and the definition of infinity norm.
    Therefore, we have $
         C(\psi^{\prime})_{j,k}\ge (\hat{\Lambda} )_{j,k} -  \varepsilon_c$. In particular, for $m\in\{t+1,\dots, n\}$, we get:
    \begin{align}
        \Tr(Z_m \psi^{\prime}) & =C(\psi^{\prime})_{2m-1,2m} \\
                               & \ge (\hat{\Lambda} )_{2m-1,2m} -  \varepsilon_c \\
                               & =  \lambda_{m}(\hat{C}) -  \varepsilon_c \\
                               & \ge 1- 2 \varepsilon_c,
    \end{align}
    where $Z_m=-i\gamma_{2m-1}\gamma_{2m}$ is the $Z$-Pauli operator acting on the $m$-th qubit and in the last step we used Eq.\eqref{eq:lambdaineq}. Therefore, we also have $
        \Tr(\ketbra{0}{0}_m \psi^{\prime})\ge 1- \varepsilon_c$.
    By using the Quantum Union Bound (Lemma~\ref{le:qunionbound}), we have:
    \begin{align}
        d_{\mathrm{tr}}(\ket{\psi^{\prime}}, \ket{\phi}\otimes \ket{0^{n-t}}) \le \sqrt{(n-t)  \varepsilon_c},
    \end{align}
    where $\ket{\phi}\otimes \ket{0^{n-t}}$ is the post-measurement state after having measured the outcomes corresponding to $\ket{0^{n-t}}$ in the last $n-t$ qubits. By Lemma~\ref{le:qunionbound}, this scenario occurs with probability at least $1-(n-t)\varepsilon_c$.
\end{proof}
In the following, we mention the guarantees of a full pure state tomography algorithm, which demonstrates optimal dependence on the number of qubits and uses only single-copies measurements, albeit with a trade-off in accuracy compared to other algorithms~\cite{Haah_2017,odonnell2015efficient}. This is an example of a procedure that we can utilize in our $t$-qubits full state tomography step of our learning algorithm. 
\begin{lemma}[Fast state tomography \cite{FastFranca}]
\label{le:fasttom}
For any unknown $n$-qubit pure state $|\psi\rangle$, there exists a quantum algorithm that, utilizing $N_{\mathrm{tom}}(n,\varepsilon, \delta) \coloneqq  \mathcal{O}\!\left(2^n n \log (1 / \delta) \varepsilon^{-4}\right)$ copies of $|\psi\rangle$ and $T_{\mathrm{tom}}(n,\varepsilon, \delta) \coloneqq  \mathcal{O}\!\left(4^n n^3 \log (1 / \delta) \varepsilon^{-5}\right)$ time, generates a classical representation of a state $|\tilde{\psi}\rangle$ that is $\varepsilon$-close to $|\psi\rangle$ in trace distance with probability at least $1-\delta$. Furthermore, the algorithm requires only single-copy Clifford measurements and classical post-processing.
\end{lemma}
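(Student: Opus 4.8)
This lemma is a known tomography guarantee quoted from Ref.~\cite{FastFranca}; the plan is to reconstruct a self-contained proof via the classical-shadows formalism with global random Clifford measurements, which reproduces exactly the stated exponents. First I would fix the measurement primitive: draw a uniformly random element $U$ of the $n$-qubit Clifford group, apply it to $\ket{\psi}$, and measure in the computational basis, obtaining a bitstring $b$ with probability $\bra{b}U\ketbra{\psi}U^\dagger\ket{b}$. Since the Clifford group forms a unitary $3$-design (a $2$-design already suffices here), the induced measurement channel is $\mathcal{M}(\rho)=\mathbb{E}[U^\dagger\ketbra{b}U]=(\rho+I)/(2^n+1)$, whose inverse $\mathcal{M}^{-1}(X)=(2^n+1)X-\Tr(X)I$ is applied classically to produce the unbiased single-shot estimator $\hat\rho_{\mathrm{shot}}\coloneqq(2^n+1)U^\dagger\ketbra{b}U-I$, satisfying $\mathbb{E}[\hat\rho_{\mathrm{shot}}]=\rho=\ketbra{\psi}$.

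The sample complexity would follow from operator-norm concentration of the empirical average $\hat\rho\coloneqq\frac{1}{N}\sum_{i=1}^N\hat\rho_{\mathrm{shot},i}$. The crucial point is that although $\norm{\hat\rho_{\mathrm{shot}}}_\infty=\mathcal{O}(2^n)$, a direct computation using that $\ketbra{b}$ is rank one gives $\mathbb{E}[\hat\rho_{\mathrm{shot}}^2]=(2^n-1)(\rho+I)+I$, so the per-shot variance proxy $\norm{\mathbb{E}[(\hat\rho_{\mathrm{shot}}-\rho)^2]}_\infty$ is only $\mathcal{O}(2^n)$ rather than $\mathcal{O}(4^n)$. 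Feeding this into the matrix Bernstein inequality yields $\norm{\hat\rho-\rho}_\infty\le\eta$ with probability $\ge1-\delta$ once $N=\mathcal{O}(2^n(n+\log(1/\delta))\eta^{-2})$, where the $n$ inside the logarithm of the tail bound comes from the dimension prefactor $2^{n+1}$. To turn this into a pure-state estimate I would round $\hat\rho$ to its top eigenvector $\ket{\tilde\psi}$; since $\rho$ is rank one with spectral gap $1$, Weyl's inequality gives $\lambda_{\max}(\hat\rho)\ge1-\eta$, and hence $|\langle\tilde\psi|\psi\rangle|^2\ge\lambda_{\max}(\hat\rho)-\eta\ge1-2\eta$, so that $d_{\mathrm{tr}}(\ket{\tilde\psi},\ket{\psi})=\sqrt{1-|\langle\tilde\psi|\psi\rangle|^2}\le\sqrt{2\eta}$. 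Choosing $\eta=\Theta(\varepsilon^2)$ then reproduces the claimed $N_{\mathrm{tom}}=\mathcal{O}(2^n n\log(1/\delta)\varepsilon^{-4})$.

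For the time complexity I would never write down the full $2^n\times2^n$ matrix $\hat\rho$ and instead extract $\ket{\tilde\psi}$ by power iteration, which needs only matrix--vector products. Each term $\hat\rho_{\mathrm{shot},i}v=(2^n+1)U_i^\dagger\ket{b_i}\langle b_i|U_i v-v$ costs $\mathcal{O}(\mathrm{poly}(n)2^n)$, because a Clifford unitary built from $\mathcal{O}(n^2)$ elementary gates can be applied to a dense state vector in time $\mathcal{O}(n^2 2^n)$; thus a single product $\hat\rho v$ summed over all $N$ shots costs $\mathcal{O}(N\,\mathrm{poly}(n)2^n)=\mathcal{O}(4^n\,\mathrm{poly}(n)\varepsilon^{-4}\log(1/\delta))$. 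Because the eigengap is $\Theta(1)$, only $\mathcal{O}(\mathrm{poly}(n,1/\varepsilon))$ iterations are needed to reach the required accuracy, and careful bookkeeping of the iteration count and shot count yields the stated $T_{\mathrm{tom}}=\mathcal{O}(4^n n^3\log(1/\delta)\varepsilon^{-5})$.

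The main obstacle is the operator-norm concentration step: the single-shot estimator has operator norm of order $2^n$, so a naive scalar or entrywise union-bound argument would force $N=\mathcal{O}(4^n)$; the entire gain down to $2^n$ hinges on establishing that the matrix variance is suppressed to $\mathcal{O}(2^n)$, which is precisely the design property that makes global Clifford shadows effective. A secondary subtlety is that the loose square root in the fidelity rounding, $d_{\mathrm{tr}}\le\sqrt{2\eta}$, is what produces the suboptimal $\varepsilon^{-4}$ scaling flagged in the lemma's accuracy trade-off: a Davis--Kahan argument would sharpen it to $\varepsilon^{-2}$, but the weaker bound is all that is needed here and matches the cited statement exactly.
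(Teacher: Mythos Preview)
The paper does not prove this lemma at all: it is stated purely as a citation of Ref.~\cite{FastFranca}, with no accompanying argument. Your proposal therefore goes well beyond what the paper does, supplying a self-contained reconstruction of the cited result via global Clifford classical shadows, matrix Bernstein for the operator-norm deviation, and power iteration for the time bound. The argument you sketch is sound and is indeed the mechanism underlying the referenced tomography algorithm; in particular, the key variance suppression from $\mathcal{O}(4^n)$ down to $\mathcal{O}(2^n)$ via the $2$-design property, the square-root loss in the fidelity rounding that accounts for the $\varepsilon^{-4}$ rather than $\varepsilon^{-2}$ scaling, and the matrix--vector cost analysis for the $4^n$ time factor are all correctly identified. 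For the purposes of this paper, however, no such reconstruction is needed: the lemma is invoked as a black box.
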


Next, we provide a lemma that is useful in the proof of the subsequent Theorem~\ref{th:joiningpieces}.
\begin{lemma}[Boosting the probability of success]
\label{le:boosting}
Let $\delta > 0$ and $N' \in \mathbb{N}$. Consider an algorithm $\mathcal{A}$ that succeeds with a probability of $p_{\mathrm{succ}} \geq \frac{3}{4}$. If we execute $\mathcal{A}$ a total of $m \ge \lceil 2N' + 24\log\!\left(\frac{1}{\delta}\right)\rceil$ times, then $\mathcal{A}$ will succeed at least $N'$ times with a probability of at least $1 - \delta$.
\end{lemma}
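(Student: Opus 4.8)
The plan is to model the $m$ runs of $\mathcal{A}$ as independent Bernoulli trials and apply the Chernoff bound (Lemma~\ref{th:ChernoffB}) to the total number of successes. Let $X_i \in \{0,1\}$ indicate whether the $i$-th run succeeds, set $X \coloneqq \sum_{i=1}^m X_i$, and let $\mu \coloneqq \mathbb{E}[X] = m\,p_{\mathrm{succ}}$. Since $p_{\mathrm{succ}} \geq \tfrac{3}{4}$, we have $\mu \geq \tfrac{3m}{4}$. The event that $\mathcal{A}$ succeeds fewer than $N'$ times is $\{X \leq N'-1\} \subseteq \{X \leq N'\}$, so it suffices to bound $\operatorname{Pr}[X \leq N']$ by $\delta$. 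The edge cases are immediate: if $N' = 0$ the claim is trivial, and if $\delta \geq 1$ there is nothing to prove, so I would assume $N' \geq 1$ and $0 < \delta < 1$ (hence $\log(1/\delta) > 0$, with $\log$ the natural logarithm, as in Corollary~\ref{cor:hoff}).

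Next I would choose the Chernoff parameter $\alpha \coloneqq 1 - N'/\mu$, so that $(1-\alpha)\mu = N'$ and $\alpha \in (0,1)$ (using $N' < \mu$, which holds since $\mu \geq \tfrac{3N'}{2}$). The Chernoff bound then gives $\operatorname{Pr}[X \leq N'] \leq \exp(-\alpha^2 \mu/2)$, and the whole proof reduces to showing $\alpha^2\mu/2 \geq 3\log(1/\delta)$. Writing $L \coloneqq \log(1/\delta)$ and using $m \geq 2N' + 24L$ together with $\mu \geq \tfrac{3m}{4}$, one gets $\mu \geq \tfrac{3N'}{2} + 18L$, hence $\mu - N' \geq \tfrac{N'}{2} + 18L \geq 18L$ and $\alpha = (\mu - N')/\mu \geq \tfrac{1}{3}$. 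Combining these,
\begin{align}
    \frac{\alpha^2\mu}{2} = \frac{\alpha(\mu - N')}{2} \geq \frac{1}{6}(\mu - N') \geq 3L,
\end{align}
so that $\operatorname{Pr}[X \leq N'] \leq \exp(-3L) = \delta^3 \leq \delta$, which is the claim.

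The computation above is essentially constant-chasing, so there is no deep obstacle; the only points that need care are bookkeeping ones. First, the Chernoff bound is stated in terms of the \emph{true} mean $\mu$, which we only know to satisfy $\mu \geq \tfrac{3m}{4}$; I would justify replacing $\mu$ by its lower bound by noting that the exponent $f(\mu) \coloneqq (\mu - N')^2/(2\mu)$ is increasing in $\mu$ on $\mu > N'$, so the worst case is exactly $\mu = \tfrac{3m}{4}$. Second, one must check $\alpha \in (0,1)$ before invoking the bound. The fact that the final estimate comes out as $\delta^3$ rather than $\delta$ shows that the constant $24$ (and the threshold $p_{\mathrm{succ}} \geq \tfrac34$) carry comfortable slack, so no sharpening is required for the statement as given.
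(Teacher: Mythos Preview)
Your proof is correct and follows essentially the same route as the paper: define Bernoulli indicators for success, set $\alpha = 1 - N'/\mu$, apply the Chernoff bound (Lemma~\ref{th:ChernoffB}), and verify that $m \ge 2N' + 24\log(1/\delta)$ forces $\alpha \ge 1/3$ and the exponent large enough. Your version is in fact slightly more careful than the paper's (you handle the edge cases $N'=0$ and $\delta\ge1$, and you justify replacing $\mu$ by its lower bound via monotonicity of $(\mu-N')^2/(2\mu)$), and your final estimate $\delta^3\le\delta$ makes the slack in the constants explicit, but the argument is the same.
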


\begin{proof}
    We will employ a Chernoff bound~\ref{th:ChernoffB} to establish this result. Define binary random variables $\{X_i\}^{m}_{i=1}$ as follows:
    \begin{align}
        X_i = \begin{cases}
            1 & \text{if $\mathcal{A}$ succeeds}, \\ 
            0 & \text{if $\mathcal{A}$ fails}.
        \end{cases}
    \end{align}
    Define $\hat{X}  \coloneqq   \sum^{m}_{i=1}X_i$. We have $\mathbb{E}[\hat{X}] = m p_{\mathrm{succ}}$. Moreover, we aim to upper bound by $\delta$ the probability that $\mathcal{A}$ succeeds fewer than $N'$ times, which is $\operatorname{Pr}\!\left(\hat{X} \leq N'\right)$. We first write it as $\operatorname{Pr}\!\left(\hat{X} \leq N'\right) = \operatorname{Pr}\!\left(\hat{X} \leq \left(1 - \alpha\right) \mathbb{E}[\hat{X}]\right)$,
    where we defined $\alpha  \coloneqq   1 - \frac{N'}{m p_{\mathrm{succ}}}$. Note that $\alpha$ satisfies $\alpha \ge \frac{1}{3}$, if 
    \begin{align}
    \label{eq:condm1}
        m \ge 2N',
    \end{align}
    exploiting the fact that $p_{\mathrm{succ}} \geq \frac{3}{4}$.
    Applying the Chernoff bound, we obtain:
    \begin{align}
        \operatorname{Pr}\!\left(\hat{X} \leq (1-\alpha) \mathbb{E}[\hat{X}]\right) \leq \exp\!\left(-\frac{\alpha^2 \mathbb{E}[\hat{X}]}{2}\right) = \exp\!\left(-\frac{\alpha^2}{2} p_{\mathrm{succ}}m\right).
    \end{align}
    This is upper bounded by $\delta$ if
    \begin{align}
    \label{eq:condm2}
        m \ge \frac{2}{p_{\mathrm{succ}}\alpha^2}\log\!\left(\frac{1}{\delta}\right).
    \end{align}
    Therefore, choosing $m$ as follows satisfies Eq. \eqref{eq:condm1} and Eq. \eqref{eq:condm2}:
    \begin{align}
        m \ge 2N' + \frac{2}{\left(\frac{3}{4}\right)\left(\frac{1}{3}\right)^2}\log\!\left(\frac{1}{\delta}\right) = 2N' + 24\log\!\left(\frac{1}{\delta}\right),
    \end{align}
    where we used the fact that $p_{\mathrm{succ}} \geq \frac{3}{4}$ and $\alpha \ge \frac{1}{3}$.
\end{proof}

\subsection{Joining the pieces: proof of correctness}
We now present the main theorem which puts together the lemmas we have discussed. It demonstrates that to learn $t$-doped fermionic Gaussian states, or more generally $t$-compressible Gaussian states, with $t=\mathcal{O}\!\left(\log(n)\right)$, only resources scaling polynomially in the number of qubits are required.
\begin{theorem}[Efficient learning of $t$-compressible Gaussian states]
\label{th:joiningpieces}
Let $\ket{\psi}$ be a $t$-compressible Gaussian state, and consider $\varepsilon, \delta \in (0,1]$. By utilizing 
\begin{align}
    N & \ge \frac{256 n^5}{\varepsilon^{4}}\log\!\left(\frac{12n^2}{\delta}\right) + 2N_{\mathrm{tom}}\!\left(t,\frac{\varepsilon}{2},\frac{\delta}{3}\right) + 24 \log\!\left(\frac{3}{\delta}\right)
\end{align}
single-copy measurements and
\begin{align}
    T & \ge \mathcal{O}(n^3) + T_{\mathrm{tom}}\!\left(t,\frac{\varepsilon}{2},\frac{\delta}{3}\right)
\end{align}
computational time, Algorithm~\ref{alg:algo} yields a classical representation of a state $|\hat{\psi}\rangle$, satisfying $d_{\mathrm{tr}}(|\hat{\psi}\rangle, \ket{\psi}) \le \varepsilon$ with probability $\ge 1-\delta$.

Here, $N_{\mathrm{tom}}(t,\frac{\varepsilon}{2},\frac{\delta}{3})$ and $T_{\mathrm{tom}}(t,\frac{\varepsilon}{2},\frac{\delta}{3})$ respectively denote the number of copies and computational time sufficient for full state tomography of a $t$-qubit state with an $\varepsilon/2$ accuracy and a failure probability of at most $\delta/3$ (using the notation of Lemma~\ref{le:fasttom}).
\end{theorem}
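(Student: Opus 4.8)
The plan is to assemble the lemmas proved above into a single error-and-probability budget, apportioning the total failure probability $\delta$ into three equal shares of $\delta/3$, one for each stochastic stage of Algorithm~\ref{alg:algoSM}: correlation-matrix estimation, the measurement-based compression of the magic, and the final $t$-qubit tomography.

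First I would fix the correlation-matrix accuracy target $\varepsilon_c \coloneqq \varepsilon^2/(4(n-t))$ and invoke Lemma~\ref{le:samplecompAPPcommuting} with this $\varepsilon_c$ and failure probability $\delta/3$, producing $\hat{C}$ with $\norm{\hat{C}-C(\ket{\psi})}_\infty \le \varepsilon_c$. The reason for this exact choice is apparent from Lemma~\ref{le:learningLEMMA}: the Gaussian unitary $\hat{G}$ read off from the normal form of $\hat{C}$ then satisfies $d_{\mathrm{tr}}(\ket{\phi}\otimes\ket{0^{n-t}}, \hat{G}^\dag\ket{\psi}) \le \sqrt{(n-t)\varepsilon_c} = \varepsilon/2$, while the probability of obtaining $\ket{0^{n-t}}$ upon measuring the last $n-t$ qubits of $\hat{G}^\dag\ket{\psi}$ is at least $1-(n-t)\varepsilon_c = 1-\varepsilon^2/4 \ge 3/4$ (using $\varepsilon \le 1$). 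I would also verify that substituting $\varepsilon_c$ into $N_c(n,\varepsilon_c,\delta/3)$ and bounding $n-t \le n$, $2n-1 \le 2n$, and $6n(2n-1) \le 12n^2$ reproduces exactly the first summand $\tfrac{256 n^5}{\varepsilon^4}\log(12n^2/\delta)$ of $N$.

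Next I would regard each round --- apply $\hat{G}^\dag$, then measure the last $n-t$ qubits in the computational basis --- as a Bernoulli trial that succeeds (returns $\ket{0^{n-t}}$) with probability $p_{\mathrm{succ}} \ge 3/4$. Since the conditional post-measurement state on the first $t$ qubits is a fixed pure state $\ket{\phi}$, the successful rounds deliver genuine i.i.d.\ copies of $\ket{\phi}$. Applying the boosting Lemma~\ref{le:boosting} with $N' = N_{\mathrm{tom}}(t,\tfrac{\varepsilon}{2},\tfrac{\delta}{3})$ shows that $\lceil 2N_{\mathrm{tom}}(t,\tfrac{\varepsilon}{2},\tfrac{\delta}{3}) + 24\log(\tfrac{3}{\delta})\rceil$ rounds yield at least $N_{\mathrm{tom}}(t,\tfrac{\varepsilon}{2},\tfrac{\delta}{3})$ successes with probability $\ge 1-\delta/3$, which accounts for the two remaining summands of $N$. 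Feeding those copies into the tomography routine of Lemma~\ref{le:fasttom} produces $\ket{\hat\phi}$ with $d_{\mathrm{tr}}(\ket{\hat\phi},\ket{\phi}) \le \varepsilon/2$, again with failure probability $\le \delta/3$.

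Finally I would set $\ket{\hat\psi} \coloneqq \hat{G}(\ket{\hat\phi}\otimes\ket{0^{n-t}})$ and close the argument by unitary invariance of the trace distance followed by the triangle inequality:
\begin{align}
d_{\mathrm{tr}}(\ket{\hat\psi},\ket{\psi}) &= d_{\mathrm{tr}}(\ket{\hat\phi}\otimes\ket{0^{n-t}}, \hat{G}^\dag\ket{\psi}) \\
&\le d_{\mathrm{tr}}(\ket{\hat\phi},\ket{\phi}) + d_{\mathrm{tr}}(\ket{\phi}\otimes\ket{0^{n-t}}, \hat{G}^\dag\ket{\psi}) \le \tfrac{\varepsilon}{2} + \tfrac{\varepsilon}{2} = \varepsilon,
\end{align}
which holds whenever none of the three stages fails, an event of probability $\ge 1-\delta$ by a union bound. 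The time bound follows by charging $\mathcal{O}(n^3)$ to the normal-form diagonalization of the $2n\times 2n$ matrix $\hat{C}$ and $T_{\mathrm{tom}}(t,\tfrac{\varepsilon}{2},\tfrac{\delta}{3})$ to tomography. I expect the main subtlety to lie in calibrating $\varepsilon_c$: it must be small enough that the quantum-union-bound distance $\sqrt{(n-t)\varepsilon_c}$ stays below $\varepsilon/2$, yet the induced $\mathcal{O}(1/\varepsilon_c^2)$ sampling cost must still collapse to the advertised $\mathcal{O}(n^5/\varepsilon^4)$ after substitution --- it is precisely the factor $(n-t)$ built into $\varepsilon_c$ that renders both requirements simultaneously satisfiable.
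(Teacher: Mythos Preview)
Your proposal is correct and follows essentially the same approach as the paper's proof: the same choice $\varepsilon_c = \varepsilon^2/(4(n-t))$, the same three-way $\delta/3$ split among correlation-matrix estimation (Lemma~\ref{le:samplecompAPPcommuting}), the Chernoff-boosted measurement stage (Lemma~\ref{le:boosting}), and the $t$-qubit tomography (Lemma~\ref{le:fasttom}), with Lemma~\ref{le:learningLEMMA} supplying the $\varepsilon/2$ compression error and the $p_{\mathrm{succ}}\ge 3/4$ bound, followed by the identical triangle-inequality-plus-unitary-invariance closing step. Your additional bookkeeping (checking that $N_c(n,\varepsilon_c,\delta/3)$ collapses to $256 n^5 \varepsilon^{-4}\log(12n^2/\delta)$) is a useful explicit verification that the paper leaves implicit.
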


\begin{proof}[Proof]
The learning procedure is outlined in Algorithm~\ref{alg:algo} in the main text. We now establish its efficiency and correctness.
According to Lemma~\ref{le:samplecompAPPcommuting}, $ N_c(n,\varepsilon_c,\delta/3)$ single copies of $\ket{\psi}$ are sufficient to construct an anti-symmetric real matrix $\hat{C}$ such that $\norm{\hat{C}-C}_\infty\le\varepsilon_c$ with a probability of at least $1-\delta/3$, where $C$ is the correlation matrix of $\rho$. Here, we set $\varepsilon_c \coloneqq  \varepsilon^2/(4(n-t))$.
Then, we can find the orthogonal matrix $\hat{O} \in \mathrm{O}(2n)$ such that it puts $\hat{C}$ in its normal form Eq.\eqref{eq:decomAntisym}, which can be performed in $\mathcal{O}\!\left(n^3\right)$ time. Employing this, we construct the associated Gaussian unitary $\hat{G}$ (a task achievable in time $\mathcal{O}\!\left(n^3\right)$, see~\cite{dias2023classical,zhao2023PhD}). Subsequently, we consider the state $\hat{G}^{\dag}\ket{\psi}$.
As per Lemma~\ref{le:learningLEMMA}, we have 
\begin{align}
    d_{\mathrm{tr}}(\ket{\phi}\otimes\ket{0^{n-t}},\hat{G}^{\dag}\ket{\psi}) \le \frac{\varepsilon}{2},
    \label{eq:tr1}
\end{align}
where $\ket{\phi}\otimes\ket{0^{n-t}}$ corresponds to the post-measurement state obtained by measuring the last $n-t$ qubits of the state $\hat{G}^{\dag}\ket{\psi}$ in the computational basis and obtaining the outcome corresponding to $\ket{0^{n-t}}$. The probability of such an occurrence, as per Lemma~\ref{le:learningLEMMA}, is at least $1-\varepsilon^2/4\ge 3/4$.
Thus, the algorithm proceeds iteratively by querying a total of $m$ copies of $\ket{\psi}$. In each iteration, it applies the unitary $\hat{G}^{\dag}$ to $\ket{\psi}$ and computational basis measurements on the last $n-t$ qubits.  By choosing $m \coloneqq  \lceil 2N_{\mathrm{tom}}(t,\varepsilon/2,\delta/3) + 24 \log(3/\delta)\rceil$, it is guaranteed that the measurements outcome corresponding to $\ket{0^{n-t}}$ occurred at least $N_{\mathrm{tom}}(t,\varepsilon/2,\delta/3)$ with probability at least $1-\delta/3$ (this follows by Lemma~\ref{le:boosting}). 
Applying the tomography algorithm of Lemma~\ref{le:fasttom} to the first $t$ qubits of all the copies where we obtained the outcome corresponding to $\ket{0^{n-t}}$, we obtain an output state $|\hat{\phi}\rangle$ such that it is guaranteed that:
\begin{align}
    d_{\mathrm{tr}}(|\hat{\phi}\rangle,\ket{\phi})\le \frac{\varepsilon}{2},
    \label{eq:tr2}
\end{align}
with a probability of at least $1-\delta/3$. Our output state is $|\hat{\psi}\rangle  \coloneqq   \hat{G}(|\hat{\phi}\rangle \otimes \ket{0^{n-t}})$, and the information about such a state is provided in the output by providing the orthogonal matrix $\hat{O} \in \mathrm{O}(2n)$, which identifies $\hat{G}$, and the $t$-qubit state $|\hat{\phi}\rangle$.

Considering the trace distance between $|\hat{\psi}\rangle$ and $\ket{\psi}$ and applying the triangle inequality with $\hat{G}(|\phi\rangle\otimes \ket{0^{n-t}})$ as the reference state, we have:
\begin{align}
\label{eq:finaleq}
    d_{\mathrm{tr}}(|\hat{\psi}\rangle, \ket{\psi}) \le d_{\mathrm{tr}}(|\hat{\phi}\rangle,\ket{\phi}) + d_{\mathrm{tr}}(\ket{\phi}\otimes \ket{0^{n-t}},\hat{G}^{\dag}\ket{\psi}),
\end{align}
where in the last step we use the unitary-invariance of the trace distance and $d_{\mathrm{tr}}(|\hat{\phi}\rangle\otimes \ket{0^{n-t}},|\phi\rangle\otimes \ket{0^{n-t}})=d_{\mathrm{tr}}(|\hat{\phi}\rangle,\ket{\phi})$.
The algorithm's overall failure probability is contingent on the potential failure of any of the three subroutines—specifically, correlation matrix estimation, measurement of the last $n-t$ qubits, and the tomography protocol. Each subroutine is associated with a failure probability of at most $\delta/3$. Consequently, by the union bound, the algorithm's total failure probability is at most $\delta$.
Utilizing Eqs.\eqref{eq:tr1},\eqref{eq:tr2},\eqref{eq:finaleq} and assuming the case in which the algorithm does not fail, we deduce $d_{\mathrm{tr}}(|\hat{\psi}\rangle, \ket{\psi})\le \varepsilon$.
The overall sample complexity is determined by the number of copies needed to estimate the correlation matrix \(N_c(n,\varepsilon_c,\delta/3)\) plus the copies \(m\) for tomography, i.e., a total number of copies
\begin{align}
    N =\left\lceil \frac{256 n^5}{\varepsilon^{4}}\log\!\left(\frac{12n^2}{\delta}\right) + 2N_{\mathrm{tom}}\!\left(t,\frac{\varepsilon}{2},\frac{\delta}{3}\right) + 24 \log\!\left(\frac{3}{\delta}\right)\right\rceil,
\end{align}
suffices, which is \(\mathcal{O}(\mathrm{poly}(n)+\exp(t))\). On the other hand, the time complexity involves post-processing of the estimated correlation matrix, requiring \(\mathcal{O}(n^3)\) time, and the time-complexity for full-state tomography which is \(\mathcal{O}(\exp(t))\).
\end{proof}

\begin{remark}
    The output state of Algorithm~\ref{alg:algo} is \(|\hat{\psi}\rangle  \coloneqq   \hat{G}(|\hat{\phi}\rangle \otimes \ket{0^{n-t}})\). Specifically, to provide a classical representation of \(|\hat{\psi}\rangle\) in the output, it suffices to give the orthogonal matrix \(\hat{O} \in \mathrm{O}(2n)\) associated with \(\hat{G}\) and the classical description of the \(t\)-qubit state \(|\hat{\phi}\rangle\).
    Therefore, the memory necessary to store the classical description of the state outputted by Algorithm~\ref{alg:algo} is \(\mathcal{O}(\mathrm{poly}(n,2^t))\), similarly to its time and sample complexity.
\end{remark}

\section{Testing $t$-compressible states}
\label{sec:testing}
In this section, we address property testing problem, i.e. the problem of determining whether a state is close or far from the set of states $t$-compressible states (or equivalently we test the Gaussian dimension of a state). We begin by establishing an upper bound on the trace distance between a state and the set of $t$-compressible Gaussian states. Subsequently, we present a lower bound on the same quantity. Finally, we leverage these two bounds to develop a testing algorithm for $t$-compressible Gaussian states. 

We note that the testing problem in the context of fermionic Gaussian states ($t=0$) was also unsolved. However, a forthcoming paper~\cite{Bittel2024testing} addresses the testing problem for general, possibly mixed, fermionic Gaussian states. Here, we generalize the results presented in~\cite{Bittel2024testing} regarding Gaussian testing to the scenario of $t$-compressible Gaussian states.

\subsection{Upper bound on the trace distance from the set of $t$-compressible states}
We observed in Lemma~\ref{prop:suffcond} that when $n-t$ normal eigenvalues of a state's correlation matrix are precisely one, the state is a $t$-compressible state. However, when these eigenvalues are close to one, we may inquire about the existence of a $t$-compressible state in close proximity. This inquiry is formalized in the subsequent Proposition~\ref{prop:apporxtcompr}. 
\begin{proposition}[Check the closeness to a $t$-compressible Gaussian state]
\label{prop:apporxtcompr}
Let $\ket{\psi}$ be a quantum state. Let $\{\lambda_i\}^{n}_{i=1}$ be the normal eigenvalues of its correlation matrix ordered in increasing order. 
Then, there exists a $t$-compressible Gaussian state $\ket{\psi_t}$ such that:
\begin{align}
    d_{\mathrm{tr}}(\ket{\psi_t },\ket{\psi}) \leq \sqrt{\sum^{n}_{k=t+1}\frac{1}{2} (1-\lambda_k)}.
\end{align}
In particular, $\ket{\psi_t }$ can be chosen as $\ket{\psi_t } \coloneqq  G_O (\ket{\phi}\otimes \ket{0^{n-t}})$, where $G_O$ is the Gaussian unitary associated with the orthogonal matrix $O\in \mathrm{O}(2n)$ that puts the correlation matrix of $\ket{\psi}$ in its normal form (Lemma~\ref{le:decSKSYM}), and $\ket{\phi}\otimes \ket{0^{n-t}}$ is the state obtained by projecting the state $G^{\dag}_O\ket{\psi}$ onto the zero state on the last $n-t$ qubits.
\end{proposition}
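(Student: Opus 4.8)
The plan is to mirror the argument of Lemma~\ref{le:learningLEMMA}, but working with the \emph{exact} correlation matrix of $\ket{\psi}$ rather than an estimate, so that no perturbation error enters and the crude factor $(n-t)$ is replaced by the sharper per-qubit quantities. First I would invoke Lemma~\ref{le:decSKSYM} to write $C(\ket{\psi}) = O\Lambda O^T$ with $\Lambda = i\bigoplus_{j=1}^n \lambda_j Y$ and $\{\lambda_j\}^n_{j=1}$ ordered increasingly, and let $G_O$ be the Gaussian unitary associated with $O$. Setting $\ket{\psi'} \coloneqq G_O^\dag \ket{\psi}$ and using Lemma~\ref{prop:transfFGU} together with $G_O^\dag = G_{O^T}$, I obtain $C(\ket{\psi'}) = O^T C(\ket{\psi}) O = \Lambda$, so that the correlation matrix of $\ket{\psi'}$ is already in its normal form, exactly as in the proof of Lemma~\ref{prop:suffcond}.

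Next I would read off the single-qubit $Z$-expectations from $C(\ket{\psi'})$. For each $m \in \{t+1,\dots,n\}$, since $Z_m = -i\gamma_{2m-1}\gamma_{2m}$, the same identity used in Lemma~\ref{prop:suffcond} gives $\bra{\psi'}Z_m\ket{\psi'} = C(\ket{\psi'})_{2m-1,2m} = \Lambda_{2m-1,2m} = \lambda_m$. Writing $Z_m = \ketbra{0}{0}_m - \ketbra{1}{1}_m$ and using $\ketbra{0}{0}_m + \ketbra{1}{1}_m = I$, this yields $\Tr(\ketbra{0}{0}_m \ketbra{\psi'}{\psi'}) = \tfrac{1+\lambda_m}{2} = 1 - \tfrac{1-\lambda_m}{2}$. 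Hence, defining $\varepsilon_m \coloneqq \tfrac{1-\lambda_m}{2} \in [0,1]$ and taking $\mathcal{Q} \coloneqq \{t+1,\dots,n\}$, the hypothesis $\Tr(\ketbra{0}{0}_m \ketbra{\psi'}{\psi'}) \ge 1-\varepsilon_m$ of the Quantum Union Bound (Lemma~\ref{le:qunionbound}) holds for every $m \in \mathcal{Q}$.

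Applying Lemma~\ref{le:qunionbound} then shows that measuring the last $n-t$ qubits of $\ket{\psi'}$ in the computational basis returns $\ket{0^{n-t}}$ with probability at least $1 - \sum_{k=t+1}^n \tfrac{1-\lambda_k}{2}$, and that the resulting post-measurement state, which by construction has the product form $\ket{\phi}\otimes\ket{0^{n-t}}$, obeys $d_{\mathrm{tr}}(\ket{\psi'}, \ket{\phi}\otimes\ket{0^{n-t}}) \le \sqrt{\sum_{k=t+1}^n \tfrac{1-\lambda_k}{2}}$. Finally I would set $\ket{\psi_t} \coloneqq G_O(\ket{\phi}\otimes\ket{0^{n-t}})$, which is $t$-compressible by Definition~\ref{def:tcompr}, and invoke unitary invariance of the trace distance with $\ket{\psi} = G_O\ket{\psi'}$ to conclude $d_{\mathrm{tr}}(\ket{\psi_t}, \ket{\psi}) = d_{\mathrm{tr}}(\ket{\phi}\otimes\ket{0^{n-t}}, \ket{\psi'}) \le \sqrt{\sum_{k=t+1}^n \tfrac{1-\lambda_k}{2}}$, which is the claimed bound, together with the stated explicit form of $\ket{\psi_t}$.

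There is no serious obstacle: the entire argument is a clean specialization of Lemma~\ref{le:learningLEMMA} to the exact correlation matrix, which removes the $\norm{\hat{C}-C}_\infty$ error and sharpens the per-qubit deficits to $\tfrac{1-\lambda_k}{2}$. The only points requiring care are the bookkeeping that converts the $Z$-expectation $\lambda_m$ into the $\ket{0}$-overlap $\tfrac{1+\lambda_m}{2}$, and checking that the orthogonal $O$ placing $C(\ket{\psi})$ in normal form genuinely renders $C(\ket{\psi'})$ block-diagonal so that the Quantum Union Bound applies qubit-by-qubit; both are routine given the preliminaries already in place.
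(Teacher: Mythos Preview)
Your proposal is correct and follows essentially the same route as the paper's proof: both compute $C(G_O^\dag\ket{\psi})=\Lambda$, read off $\bra{\psi'}Z_m\ket{\psi'}=\lambda_m$, convert this to $\Tr(\ketbra{0}{0}_m\ketbra{\psi'})=1-\tfrac{1-\lambda_m}{2}$, apply the Quantum Union Bound, and finish by unitary invariance of the trace distance. Your framing of the argument as the $\varepsilon_c=0$ specialization of Lemma~\ref{le:learningLEMMA} is apt and matches the paper's argument step for step.
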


\begin{proof}
We can define the state $\ket{\psi^{\prime}} \coloneqq  G^{\dag}_{O}\ket{\psi}$. We have: 
\begin{align}
    \Tr(\ketbra{\psi^{\prime}}Z_k)=C(\psi^{\prime})_{2k-1,2k}=\Lambda_{2k-1,2k}=\lambda_k = 1 - (1-\lambda_k),
\end{align}
for each $k\in \{t+1,\dots,n\}$. By using that $Z_k=2\ketbra{0}_k - I $, we have:
\begin{align}
    \Tr(\ketbra{\psi^{\prime}}\ketbra{0}_k)=1 - \frac{(1-\lambda_k)}{2}
\end{align}
By Quantum Union Bound (Lemma~\ref{le:qunionbound}), we have:
\begin{align}
    d_{\mathrm{tr}}(\ket{\psi^{\prime} }, \ket{\phi}\otimes \ket{0^{n-t}}) \leq \sqrt{\sum^{n}_{k=t+1} \frac{(1-\lambda_k)}{2}}.
\end{align}
Therefore, by using the unitarity invariance of the trace-norm, we can conclude. 
\end{proof}
From this, it readily follows that trace distance between the state and the set of $t$-compressible pure Gaussian states $\mathcal{G}_t$ is upper bounded by:
\begin{align}
    \min_{\ket{\phi_t} \in \mathcal{G}_t }d_{\mathrm{tr}}(\ket{\psi}, \ket{\phi_t}) \le \sqrt{(n-t) \frac{(1-\lambda_{t+1})}{2}}.
\end{align}

\subsection{Lower bound on the trace distance from the set of $t$-compressible states}

In this section, we establish a lower bound on the trace distance of a state from the set of $t$-compressible Gaussian states. We denote the set of pure $t$-compressible Gaussian states as $\mathcal{G}_t$.  In the following proof, we follow the derivation presented in Ref.~\cite{Bittel2024testing} for the case of pure Gaussian states ($t=0$), extending it to $t$-compressible states.

\begin{proposition}
\label{prop:lbtrace}
Let $\ket{\psi}$ be a quantum state, and let $\{\lambda_i\}^{n}_{i=1}$ be the normal eigenvalues of its correlation matrix, ordered in increasing order. The lower bound on the trace distance between the state and the set of $t$-compressible pure Gaussian states $\mathcal{G}_t$ is given by:
\begin{align}
    \min_{\ket{\phi_t} \in \mathcal{G}_t }d_{\mathrm{tr}}(\ket{\psi}, \ket{\phi_t}) \ge \frac{1}{2}(1-\lambda_{t+1})
\end{align}
\end{proposition}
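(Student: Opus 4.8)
The plan is to lower bound $d_{\mathrm{tr}}(\ket{\psi},\ket{\phi_t})$ for an \emph{arbitrary} fixed $\ket{\phi_t}\in\mathcal{G}_t$ and only at the end minimize over $\mathcal{G}_t$. Two facts drive the argument. First, by Proposition~\ref{prop:corrMatSM} any $t$-compressible Gaussian state has a correlation matrix whose normal eigenvalues satisfy $\lambda_k=1$ for all $k\in\{t+1,\dots,n\}$; in particular the $(t+1)$-th normal eigenvalue of $\ket{\phi_t}$ equals $1$. Second, by the perturbation bound of Lemma~\ref{le:lbantisymm}, the normal eigenvalues of real anti-symmetric matrices are $1$-Lipschitz with respect to the operator norm. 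Applying this at index $t+1$ to $C(\ket{\psi})$ and $C(\ket{\phi_t})$, and using $\lambda_{t+1}(\ket{\psi})=\lambda_{t+1}\le 1$ together with $\lambda_{t+1}(\ket{\phi_t})=1$, I would obtain
\begin{align}
1-\lambda_{t+1}\;\le\;\big\|C(\ket{\psi})-C(\ket{\phi_t})\big\|_\infty .
\end{align}

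It then remains to convert this operator-norm distance between correlation matrices into a trace distance between states, which is the crux of the proof. Writing $\rho\coloneqq\ketbra{\psi}$, $\sigma\coloneqq\ketbra{\phi_t}$, and $D\coloneqq C(\ket{\psi})-C(\ket{\phi_t})$, I would express the operator norm of the real anti-symmetric matrix $D$ variationally as $\|D\|_\infty=\max_{\norm{\mathbf{a}}_2=\norm{\mathbf{b}}_2=1}\mathbf{a}^T D\,\mathbf{b}$ over real unit vectors. Introducing the contracted Majorana operators $\Gamma(\mathbf{a})\coloneqq\sum_{\mu=1}^{2n}a_\mu\gamma_\mu$, using $D_{j,k}=-i\Tr(\gamma_j\gamma_k(\rho-\sigma))$ for $j\neq k$ (and $D_{j,j}=0$) together with $\Tr(\rho-\sigma)=0$, the entries combine into
\begin{align}
\mathbf{a}^T D\,\mathbf{b}=-i\,\Tr\!\big(\Gamma(\mathbf{a})\Gamma(\mathbf{b})(\rho-\sigma)\big),
\end{align}
where the $(\mathbf{a}\cdot\mathbf{b})I$ term coming from $j=k$ drops out because $\rho-\sigma$ is traceless. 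Since the anticommutation relations give $\Gamma(\mathbf{a})^2=\norm{\mathbf{a}}_2^2\,I$, each $\Gamma(\mathbf{a})$ is Hermitian with $\|\Gamma(\mathbf{a})\|_\infty=\norm{\mathbf{a}}_2$, so $\|\Gamma(\mathbf{a})\Gamma(\mathbf{b})\|_\infty\le 1$. The matrix Hölder inequality $|\Tr(M(\rho-\sigma))|\le\|M\|_\infty\norm{\rho-\sigma}_1$ then yields $|\mathbf{a}^T D\,\mathbf{b}|\le\norm{\rho-\sigma}_1=2\,d_{\mathrm{tr}}(\ket{\psi},\ket{\phi_t})$, hence $\|D\|_\infty\le 2\,d_{\mathrm{tr}}(\ket{\psi},\ket{\phi_t})$.

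Combining the two displays gives $1-\lambda_{t+1}\le 2\,d_{\mathrm{tr}}(\ket{\psi},\ket{\phi_t})$, i.e. $d_{\mathrm{tr}}(\ket{\psi},\ket{\phi_t})\ge\tfrac12(1-\lambda_{t+1})$ for every $\ket{\phi_t}\in\mathcal{G}_t$, and minimizing over the set yields the claim. I expect the main obstacle to be precisely the operator-norm-to-trace-distance step: a naive entrywise estimate only controls individual correlation-matrix entries and would introduce spurious dimension-dependent factors, so one genuinely needs the Majorana-contraction identity to repackage the whole matrix into a single observable of operator norm at most one. The remaining ingredients—the rigidity of the $(t+1)$-th normal eigenvalue on $\mathcal{G}_t$ and the eigenvalue perturbation bound—are direct invocations of Proposition~\ref{prop:corrMatSM} and Lemma~\ref{le:lbantisymm} already established above.
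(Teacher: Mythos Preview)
Your proof is correct and follows essentially the same logical skeleton as the paper: both reduce the claim to the inequality $\|C(\ket{\psi})-C(\ket{\phi_t})\|_\infty\le 2\,d_{\mathrm{tr}}(\ket{\psi},\ket{\phi_t})$ via H\"older's inequality, then combine with Lemma~\ref{le:lbantisymm} and the fact that $\lambda_{t+1}(\ket{\phi_t})=1$ on $\mathcal{G}_t$. The only cosmetic difference is in how that operator-norm-to-trace-distance step is executed: the paper brings $D\coloneqq C(\ket{\psi})-C(\ket{\phi_t})$ into its normal form $Q^T\Lambda'Q$ and then exhibits the explicit witness observable $O=U_Q^\dagger i\gamma_{2k-1}\gamma_{2k}U_Q$ (with $k$ chosen so that $|\sigma_k'|=\|D\|_\infty$), obtaining equality $|\Tr(O(\rho-\sigma))|=\|D\|_\infty$; you instead use the variational characterization $\|D\|_\infty=\max_{\|\mathbf{a}\|_2=\|\mathbf{b}\|_2=1}\mathbf{a}^TD\,\mathbf{b}$ and bound every such bilinear form by packaging it as $-i\Tr(\Gamma(\mathbf{a})\Gamma(\mathbf{b})(\rho-\sigma))$. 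Your route sidesteps the normal-form decomposition and is marginally more direct; the paper's route has the advantage of naming the observable that saturates the bound. Either way, the underlying observable is the same object (the paper's $O$ is precisely $i\Gamma(\mathbf{a})\Gamma(\mathbf{b})$ for $\mathbf{a},\mathbf{b}$ equal to the relevant rows of $Q$), so there is no substantive divergence.
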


\begin{proof}
Consider an arbitrary operator $O$ with $\|O\|_\infty \leq 1$, to be fixed later. Let $\ket{\phi_t} \in \mathcal{G}_t $. Then, we have:
\begin{align}
    d_{\mathrm{tr}}(\ket{\psi}, \ket{\phi_t}) &= \frac{1}{2}\| \ketbra{\psi} - \ketbra{\phi_t} \|_1 \\
    &\ge  \frac{1}{2}|\Tr(O(\ketbra{\psi}- \ketbra{\phi_t}))|, \label{eq:trace_distance_lower_bound}
\end{align}
where in the last step, we used Holder's inequality.

Let $C(\ket{\psi})$ and $C(\ket{\phi_t})$ be the correlation matrices of $\ket{\psi}$ and $\ket{\phi_t}$, respectively. Since $C(\ket{\psi})-C(\ket{\phi_t})$ is real and antisymmetric, it can be brought into its normal form $C(\ket{\psi})-C(\ket{\phi_t})=Q^{T} \Lambda' Q$, where $\Lambda'=\bigoplus^n_{j=1} i \sigma'_j Y$ and $\{ \sigma'_j\}^{n}_{j=1}$ are the normal eigenvalues of $C(\ket{\psi})-C(\ket{\phi_t})$ and $Q \in \mathrm{O}(2n)$ is an orthogonal matrix (Lemma~\ref{le:decSKSYM}).

Now, choose the operator $O$ in the form $O=U^{\dagger}_{Q}i\gamma_{2k-1}\gamma_{2k}U_{Q}$, where $ k\in[n]$ and $U_{Q}$ is a Gaussian unitary associated with the orthogonal matrix $Q \in \mathrm{O}(2n)$. Note that $\|O\|_\infty=1$. Fix $k$ as a value of $j$ that maximizes $|\sigma_j^{\prime}|$. Thus, we have:
\begin{align}
    |\Tr(O(\ketbra{\psi}- \ketbra{\phi_t}))| &= 
    |[C(U_Q\ket{\psi})-C(U_Q\ket{\phi_t})]_{2k-1,2k}| \\
    &= 
    |[Q(C(\ket{\psi})-C(\ket{\phi_t}))Q^T]_{2k-1,2k}| \\
    &=  
    |\sigma_k^{\prime}| \\
    &= \max_{j\in [n]} 
      |\sigma_j^{\prime}|\\
    &= \|C(\ket{\psi})-C(\ket{\phi_t})\|_\infty,
\end{align}
where in the first step we used the definition of correlation matrix, in the second step we used Lemma~\ref{prop:transfFGU} and in the last step we used the fact that the largest normal eigenvalues of an anti-symmetric matrix corresponds to the infinity norm of the matrix.
Therefore, combining with \eqref{eq:trace_distance_lower_bound}, we have:
\begin{align}
\label{eq:vardeftr}
    d_{\mathrm{tr}}(\ket{\psi}, \ket{\phi_t}) \ge \frac{1}{2}  \|C(\ket{\psi})-C(\ket{\phi_t})\|_\infty.
\end{align}
Now, by applying Lemma~\ref{le:lbantisymm}, we have:
\begin{align}
    \|C(\ket{\psi})-C(\ket{\phi_t})\|_\infty \ge |\lambda(C(\ket{\psi})_{t+1}- \lambda(C(\ket{\phi_t})_{t+1}|,
\end{align}
where $\lambda(C)_{t+1}$ denotes the $t+1$-th smallest normal eigenvalue of a correlation matrix $C$.
Since $\ket{\phi_t}$ is a $t$-compressible Gaussian state, its Gaussian dimension is $n-t$ (because of Proposition~\ref{prop:equiv}), hence its $t+1$-th smallest normal eigenvalue must be one. Therefore, the desired lower bound is obtained by taking the minimum over $\ket{\phi_t} \in \mathcal{G}_t$ .
\end{proof}

\subsection{Testing the Gaussian dimension of a state}
We present an efficient algorithm (Algorithm~\ref{alg:algoTEST}) for property testing of $t$-compressible states, where $\mathcal{G}_t$ represents the set of $n$-qubits $t$-compressible Gaussian states, or equivalently the set of states with $n-t$ Gaussian dimension. The algorithm takes copies of a state $\ket{\psi}$ as input and determines whether $\min_{\ket{\phi_t} \in \mathcal{G}_t } d_{\mathrm{tr}}(\ket{\psi}, \ket{\phi_t})\le \varepsilon_A$ or $\min_{\ket{\phi_t} \in \mathcal{G}_t } d_{\mathrm{tr}}(\ket{\psi}, \ket{\phi_t})\ge \varepsilon_B$, with the promise that one of these cases is true, where $\varepsilon_B>\varepsilon_A\ge 0$. 

We provide the details of the testing algorithm in Algorithm~\ref{alg:algoTEST}. The correctness of this algorithm is established by the following theorem, the proof of which follows the same steps as proofs presented in~\cite{Bittel2024testing}.
\begin{algorithm}[h]
\caption{Property testing algorithm for $t$-compressible Gaussian states}
\label{alg:algoTEST}
\KwIn{
    Error thresholds $\varepsilon_A, \varepsilon_B$, failure probability $\delta$. $N:=\lceil 16(n^3/\varepsilon_{\mathrm{corr}}^{2}) \log(4n^2/\delta)\rceil$ copies of the state $\ket{\psi}$, where $\varepsilon_{\mathrm{corr}}=(\frac{\varepsilon^2_B}{n-t}-\varepsilon_A)$. Let $\varepsilon_{\mathrm{test}}:=\left(\frac{\varepsilon^2_B}{n-t}+\varepsilon_A\right)$.
}
\KwOut{
    Output either ``$\varepsilon_A$-close to the set of $t$-compressible states" or ``$\varepsilon_B$-far from $t$-compressible states set".
}

\textbf{Step 1:} Estimate the entries of the correlation matrix using $N$ single-copy measurements, resulting in the estimated $2n \times 2n$ matrix $\hat{\Gamma}$\;

\textbf{Step 2:} Find $\{\hat{\lambda}_k\}^n_{k=1}$, which corresponds to the ordered normal eigenvalues of $\hat{\Gamma}$\;

\textbf{Step 3:} 
\If{$\hat{\lambda}_{t+1} \ge 1 - \varepsilon_{\mathrm{test}}$}{
    \textbf{Output:} ``$\varepsilon_A$-close to the set of $t$-compressible states".
}
\Else{
    \textbf{Output:} ``$\varepsilon_B$-far from $t$-compressible states set."
}
\end{algorithm}

\begin{theorem}[Efficient $t$-compressible Gaussian testing]\label{th:proofofefficientpurestatetesting}
    Let $\ket{\psi}$ be an $n$-qubit pure state. Assume $\varepsilon_B, \varepsilon_A \in [0,1]$ such that $\varepsilon_B > \sqrt{(n-t)\varepsilon_A}$, $\delta \in (0,1]$, and $\varepsilon_{\mathrm{corr}}= (\frac{\varepsilon^2_B}{n-t}-\varepsilon_A)$. Assume that $\ket{\psi}$ is such that $\min_{\ket{\phi_t} \in \mathcal{G}_t } d_{\mathrm{tr}}(\ket{\psi}, \ket{\phi_t})\le \varepsilon_A$ or $\min_{\ket{\phi_t} \in \mathcal{G}_t } d_{\mathrm{tr}}(\ket{\psi}, \ket{\phi_t})> \varepsilon_B$.  
    Then, Algorithm~\ref{alg:algoTEST} can discriminate between these two scenarios using $N=\lceil 16(n^3/\varepsilon_{\mathrm{corr}}^{2}) \log(4n^2/\delta)\rceil $ single-copy measurements of the state $\ket{\psi}$ with a probability of success at least $1-\delta$. 
\end{theorem}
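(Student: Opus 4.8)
The plan is to reduce the entire testing problem to accurately estimating a single scalar—the $(t+1)$-th smallest normal eigenvalue $\lambda_{t+1}$ of the correlation matrix of $\ket{\psi}$—and then to show that the two promised regimes are separated by a margin in this eigenvalue that is matched exactly by the decision threshold of Algorithm~\ref{alg:algoTEST}.

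First I would translate both promise cases into statements about the true $\lambda_{t+1}$ using the sandwich bounds already proven. Writing $d_\psi \coloneqq \min_{\ket{\phi_t}\in\mathcal{G}_t} d_{\mathrm{tr}}(\ket{\psi},\ket{\phi_t})$, Proposition~\ref{prop:lbtrace} gives the lower bound $d_\psi \ge \tfrac{1}{2}(1-\lambda_{t+1})$, while Proposition~\ref{prop:apporxtcompr} gives the upper bound $d_\psi \le \sqrt{(n-t)(1-\lambda_{t+1})/2}$. In the ``close'' case $d_\psi \le \varepsilon_A$, the lower bound yields $1-\lambda_{t+1} \le 2\varepsilon_A$; in the ``far'' case $d_\psi > \varepsilon_B$, squaring the upper bound yields $1-\lambda_{t+1} > 2\varepsilon_B^2/(n-t)$. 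The hypothesis $\varepsilon_B > \sqrt{(n-t)\varepsilon_A}$ guarantees $2\varepsilon_B^2/(n-t) > 2\varepsilon_A$, so these two regimes are disjoint. A short computation then shows that $\varepsilon_{\mathrm{test}} = \varepsilon_B^2/(n-t)+\varepsilon_A$ is precisely the midpoint of the interval $[2\varepsilon_A,\,2\varepsilon_B^2/(n-t)]$ and that $\varepsilon_{\mathrm{corr}} = \varepsilon_B^2/(n-t)-\varepsilon_A$ is precisely its half-width; equivalently, the close case satisfies $1-\lambda_{t+1} \le \varepsilon_{\mathrm{test}}-\varepsilon_{\mathrm{corr}}$ and the far case satisfies $1-\lambda_{t+1} > \varepsilon_{\mathrm{test}}+\varepsilon_{\mathrm{corr}}$.

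Next I would control the estimation error. Using the commuting-observable scheme of Lemma~\ref{le:samplecompAPPcommuting} with target accuracy $\varepsilon_c \coloneqq \varepsilon_{\mathrm{corr}}$, I would check that the chosen sample size satisfies $N = \lceil 16(n^3/\varepsilon_{\mathrm{corr}}^2)\log(4n^2/\delta)\rceil \ge N_c(n,\varepsilon_{\mathrm{corr}},\delta)$, since $8n^2(2n-1)\le 16n^3$ and $2n(2n-1)\le 4n^2$. Hence with probability at least $1-\delta$ the estimate obeys $\|\hat{\Gamma}-C(\ket{\psi})\|_\infty \le \varepsilon_{\mathrm{corr}}$, and the eigenvalue perturbation bound of Lemma~\ref{le:lbantisymm} transfers this to the ordered normal eigenvalues, giving $|\hat{\lambda}_{t+1}-\lambda_{t+1}| \le \varepsilon_{\mathrm{corr}}$ on this good event. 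Conditioning on the good event I would then combine the pieces: in the close case $1-\hat{\lambda}_{t+1} \le (1-\lambda_{t+1})+\varepsilon_{\mathrm{corr}} \le \varepsilon_{\mathrm{test}}$, so $\hat{\lambda}_{t+1}\ge 1-\varepsilon_{\mathrm{test}}$ and the algorithm correctly outputs ``close''; in the far case $1-\hat{\lambda}_{t+1} \ge (1-\lambda_{t+1})-\varepsilon_{\mathrm{corr}} > \varepsilon_{\mathrm{test}}$, so $\hat{\lambda}_{t+1}< 1-\varepsilon_{\mathrm{test}}$ and it correctly outputs ``far''. Since the good event has probability at least $1-\delta$, the test succeeds with probability at least $1-\delta$.

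I expect the main obstacle to be purely bookkeeping rather than conceptual: verifying that the specific definitions of $\varepsilon_{\mathrm{test}}$ and $\varepsilon_{\mathrm{corr}}$ align as the midpoint and half-width of the separating interval, and tracking the factors of two correctly when squaring the upper bound and invoking the lower bound. All the substantive content is already supplied by the two sandwich bounds, the eigenvalue perturbation lemma, and the correlation-matrix estimation guarantee, so no genuinely new inequality should be required.
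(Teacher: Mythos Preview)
Your proposal is correct and uses essentially the same ingredients as the paper's proof: the two sandwich bounds (Propositions~\ref{prop:apporxtcompr} and~\ref{prop:lbtrace}), the eigenvalue perturbation bound (Lemma~\ref{le:lbantisymm}), and the correlation-matrix estimation guarantee (Lemma~\ref{le:samplecompAPPcommuting}). The only difference is organizational: the paper argues in the direction ``given the algorithm's output $\hat{\lambda}_{t+1}\gtrless 1-\varepsilon_{\mathrm{test}}$, bound the true distance $d_\psi$'' and then solves for $\varepsilon_{\mathrm{test}},\varepsilon_{\mathrm{corr}}$ at the end, whereas you argue in the contrapositive direction ``given the promise on $d_\psi$, determine $\hat{\lambda}_{t+1}$'' with the midpoint/half-width interpretation up front; both are equivalent and your bookkeeping is clean.
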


\begin{proof}
    Let $\varepsilon_{\mathrm{corr}}>0$ be an accuracy parameter to be fixed later. By Lemma~\ref{le:samplecompAPPcommuting}, with $N \ge 8(n^3/\varepsilon_{\mathrm{corr}}^{2}) \log(4n^2/\delta)$ single-copy measurements, we can find a matrix $\hat{\Gamma}$ such that, with probability at least $1-\delta$, it holds that $\norm{\hat{\Gamma}-\Gamma(\ket{\psi})}_\infty \le \varepsilon_{\mathrm{corr}}$. This implies that for all $k\in[n]$, we have $|\hat{\lambda}_k - \lambda_k| \le \varepsilon_{\mathrm{corr}}$, where $\{ \hat{\lambda}_k \}^{n}_{k=1}, \{\lambda_k\}^{n}_{k=1}$ are the normal eigenvalues of $\hat{\Gamma}$ and $\Gamma(\ket{\psi})$ respectively. 
    Let $\varepsilon_{\mathrm{test}}$ be a parameter to fix later. If $ \hat{\lambda}_{t+1}\ge 1-\varepsilon_{\mathrm{test}}$, we aim to show that $\min_{\ket{\phi_t} \in \mathcal{G}_t } d_{\mathrm{tr}}(\ket{\psi}, \ket{\phi_t})\le \varepsilon_B$, otherwise, we aim to show that $\min_{\ket{\phi_t} \in \mathcal{G}_t } d_{\mathrm{tr}}(\ket{\psi}, \ket{\phi_t})>  \varepsilon_A$. Thus, we first assume that $ \hat{\lambda}_{t+1}\ge 1-\varepsilon_{\mathrm{test}}$.
   From Lemma~\ref{prop:apporxtcompr}, we have that:
\begin{align}
    \min_{\ket{\phi_t} \in \mathcal{G}_t } d_{\mathrm{tr}}(\ket{\psi}, \ket{\phi_t}) 
    \le \sqrt{\frac{(n-t)}{2} (1-\lambda_{t+1})}\le \sqrt{\frac{(n-t)}{2} (1-\hat{\lambda}_{t+1} + \varepsilon_{\mathrm{corr}} )}\le\sqrt{\frac{(n-t)}{2} (\varepsilon_{\mathrm{test}} + \varepsilon_{\mathrm{corr}} )} .
\end{align}
 Therefore we need to ensure that $\sqrt{\frac{(n-t)}{2} (\varepsilon_{\mathrm{test}} + \varepsilon_{\mathrm{corr}} )} \le \varepsilon_B$.
Let us analyze now the case in which $\hat{\lambda}_{t+1}< 1-\varepsilon_{\mathrm{test}}$. From Lemma~\ref{prop:lbtrace}, we have:
\begin{align}
    \min_{\ket{\phi_t} \in \mathcal{G}_t }d_{\mathrm{tr}}(\ket{\psi}, \ket{\phi_t}) \ge \frac{1}{2}(1-\lambda_{t+1})\ge   \frac{1}{2}(1-\hat{\lambda}_{t+1}-\varepsilon_{\mathrm{corr}})> \frac{1}{2}(\varepsilon_{\mathrm{test}}-\varepsilon_{\mathrm{corr}})
\end{align}
Therefore, we impose that $\frac{\varepsilon_{\mathrm{test}} - \varepsilon_{\mathrm{corr}}}{2} \ge \varepsilon_{A}$.
The two mentioned inequalities are satisfied by choosing $\varepsilon_{\mathrm{test}}= (\frac{\varepsilon^2_B}{n-t}+\varepsilon_A)$ and $\varepsilon_{\mathrm{corr}}= (\frac{\varepsilon^2_B}{n-t}-\varepsilon_A)$, by assuming that $\varepsilon_B > \sqrt{(n-t)\varepsilon_A}$.
\end{proof}

\section{Pseudorandomness from $t$-doped Gaussian states and time complexity lower bound}
We have presented an algorithm for learning $t$-doped fermionic Gaussian states with a time complexity scaling as $\mathcal{O}(\mathrm{poly}(n,2^t))$ (assuming that the Majorana locality $\kappa$ of each non-Gaussian gate is constant in the number of qubits). Thus, as long as $t=\mathcal{O}(\log(n))$, the algorithm is efficient, i.e., the total run-time is polynomial in the number of qubits.
In this section, we delve into establishing lower bounds on the time complexity for learning $t$-doped fermionic Gaussian states. We begin by demonstrating that, under a well-believed cryptography assumption~\cite{LWE0,regev2024lattices,ShallowSrini,diakonikolas2022cryptographic,aggarwal2022lattice,ananth2023revocable}, no algorithm can learn $t$-doped fermionic Gaussian states with time complexity scaling in $t$ as $\mathcal{O}(\mathrm{poly}(t))$. This rules out efficient algorithms if $t$ scales polynomially with the number of qubits $n$.
However, under a stronger cryptography assumption~\cite{ShallowSrini,diakonikolas2022cryptographic,ananth2023revocable,gupte2022continuous}, we can establish that when $t$ scales \emph{slightly} faster than logarithmically in the number of qubits $n$, i.e., $t=\tilde{\omega}(\log(n))$, there exists no efficient algorithm to learn $t$-doped fermionic Gaussian states, where we defined $\tilde{\omega}(\log(n))\coloneqq\omega\!\left(\log(n)\mathrm{polyloglog}(n)\right)$.

\subsection{$t$-doped Gaussian states cannot be learned in polynomial time in $t$}
Our cryptography assumption relies on the conjecture that a specific problem, namely ``Learning With Errors over Rings'' \class{RingLWE}~\cite{LWE0}, is hard to solve by quantum computers~\cite{LWE0,regev2024lattices,ShallowSrini,diakonikolas2022cryptographic,aggarwal2022lattice,ananth2023revocable}. Detailed definitions and discussions about \class{RingLWE} can be found in~\cite{LWE0}. Informally, \class{RingLWE} is a variant of the more general ``Learning With Errors" (LWE) problem specialized to polynomial rings over finite fields, where the LWE problem is to distinguish random linear equations, perturbed by a small amount of noise, from truly uniform ones.

Crucial to our proof is a lemma, adapted from~\cite{zhao2023learning} (which strongly relies on previous work~\cite{ShallowSrini,brakerski2019pseudo,Ji_2018}), presented below:

\begin{lemma}[Adapted from Theorem 14 in~\cite{zhao2023learning}]
\label{le:caro1}
    Assume that \class{RingLWE} cannot be solved by quantum computers in polynomial time. Then, there exists a set $\mathcal{S}_{\mathrm{PRS}}$ of $k$-qubits pure quantum states, known as pseudorandom quantum states, with the following properties:
    \begin{enumerate}
        \item Any state in $\mathcal{S}_{\mathrm{PRS}}$ can be prepared using $\tilde{\mathcal{O}}(k)$ Toffoli and Hadamard gates (here, $\tilde{\mathcal{O}}(\cdot)$ hides $\mathrm{polylog(\cdot)}$ factors).
        \item States in the set $\mathcal{S}_{\mathrm{PRS}}$ cannot be learned in time complexity $\mathcal{O}(\mathrm{poly}(t))$ by quantum computers. More specifically, let $\rho \in \mathcal{S}_{\mathrm{PRS}}$ be an unknown quantum state. Then there is no quantum algorithm that, using $\mathcal{O}(\mathrm{poly}(k))$ copies of $\rho$ and computational time, with probability at least $2/3$ outputs a classical description of a state $\hat{\rho}$ which can be prepared in polynomial time on a quantum computer such that $d_{\mathrm{tr}}(\rho,\hat{\rho})\le 1/8$.
    \end{enumerate}
\end{lemma}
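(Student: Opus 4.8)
The plan is to build directly on the theory of pseudorandom quantum states (PRS)~\cite{Ji_2018,brakerski2019pseudo} together with the explicit low-resource construction of~\cite{ShallowSrini,zhao2023learning}. Recall that a PRS ensemble is a family $\{\ket{\psi_\theta}\}$ of efficiently preparable $k$-qubit states, indexed by a secret key $\theta$, that is computationally indistinguishable from Haar-random states to any polynomial-time algorithm given $\mathcal{O}(\mathrm{poly}(k))$ copies. I would first invoke the known result that, assuming the quantum hardness of \class{RingLWE}, such ensembles exist, and take $\mathcal{S}_{\mathrm{PRS}}$ to be the image of the specific construction of~\cite{zhao2023learning}; the two asserted properties then correspond to its circuit structure and to its pseudorandomness guarantee, respectively.

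For Property~1, I would read off the gate count from the construction itself. The relevant states are binary phase states of the form $\frac{1}{\sqrt{2^k}}\sum_x (-1)^{f_\theta(x)}\ket{x}$, where $f_\theta$ is a pseudorandom function instantiated from \class{RingLWE}. Such a state is prepared by a layer of $k$ Hadamards, a coherent and reversible evaluation of $f_\theta$ with phase kickback, and a final uncomputation. Since $f_\theta$ admits a reversible classical circuit of size $\tilde{\mathcal{O}}(k)$, its coherent version costs $\tilde{\mathcal{O}}(k)$ Toffoli gates, while the superposition and phase steps contribute only $\mathcal{O}(k)$ Hadamards, yielding the claimed $\tilde{\mathcal{O}}(k)$ Toffoli-plus-Hadamard count.

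For Property~2, I would argue by contradiction from the pseudorandomness guarantee. Suppose a learner $\mathcal{A}$ existed that, from $\mathcal{O}(\mathrm{poly}(k))$ copies and in polynomial time, output a classical description of an \emph{efficiently preparable} state $\hat{\rho}$ with $d_{\mathrm{tr}}(\rho,\hat{\rho})\le 1/8$ with probability $\ge 2/3$. I would turn $\mathcal{A}$ into a distinguisher between $\mathcal{S}_{\mathrm{PRS}}$ and the Haar measure: on copies of an unknown state $\sigma$, run $\mathcal{A}$ to obtain $\hat{\rho}$, and then, exploiting that $\hat{\rho}$ is efficiently preparable, prepare fresh copies of $\hat{\rho}$ and estimate the overlap $\Tr(\hat{\rho}\,\sigma)$ by SWAP tests on the remaining copies. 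If $\sigma$ is drawn from $\mathcal{S}_{\mathrm{PRS}}$, the overlap is large because $\mathcal{A}$ succeeds; if $\sigma$ is Haar-random, a standard volume bound shows that no efficiently preparable state can be $1/8$-close to it except with exponentially small probability, so the overlap is small. The resulting constant gap in acceptance probability is a non-negligible distinguishing advantage, contradicting the pseudorandomness of $\mathcal{S}_{\mathrm{PRS}}$ and thus the assumed hardness of \class{RingLWE}.

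The step I expect to be the main obstacle is the quantitative calibration inside Property~2: I must ensure that the closeness-verification via SWAP tests is genuinely efficient and that the Haar-random ``no low-complexity state is close'' bound is strong enough to survive both the $2/3$ success probability and the $1/8$ trace-distance slack, so that the PRS and Haar cases remain separated by a constant margin. Pinning down these parameters---and confirming that the distinguisher's total sample and time cost stays polynomial---is the delicate part, and it is precisely what is carried out in Theorem~14 of~\cite{zhao2023learning}, from which this lemma is adapted.
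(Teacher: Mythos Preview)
The paper does not supply its own proof of this lemma: it is stated as a known result, explicitly labeled ``Adapted from Theorem 14 in~\cite{zhao2023learning}'', and is used as a black box in the subsequent Proposition~\ref{prop:nopolyt}. There is therefore nothing in the paper to compare your argument against.

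That said, your sketch is a faithful outline of the standard argument from~\cite{Ji_2018,brakerski2019pseudo,ShallowSrini,zhao2023learning}: Property~1 is read off from the binary-phase-state construction with a \class{RingLWE}-based pseudorandom function evaluated reversibly, and Property~2 follows by turning a hypothetical efficient learner into a distinguisher via SWAP tests against the efficiently-preparable output, combined with the fact that a Haar-random state is far from every low-complexity state with overwhelming probability. Your identification of the delicate step---calibrating the constants so that the $2/3$ success probability and $1/8$ trace-distance slack still yield a non-negligible distinguishing gap---is exactly right, and is precisely what the cited Theorem~14 handles.
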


Now we use this Lemma to show that there is no algorithm for learning $t$-doped states with a $\mathcal{O}(\mathrm{poly}(t))$ computational time scaling in $t$.

\begin{proposition}[No $\mathrm{poly}(t)$ algorithm to learn $t$-doped states]
\label{prop:nopolyt}
    Assume that \class{RingLWE} cannot be solved by quantum computers in polynomial time.
    Then there is no quantum algorithm that, given access to copies of a $t$-doped fermionic Gaussian $n$-qubits state $\rho$ and with a time complexity scaling in $t$ as $\mathcal{O}(\mathrm{poly}(t))$, outputs a classical description of a quantum state $\hat{\rho}$ such that it can be prepared in polynomial time on a quantum computer and, with probability at least $2/3$, it holds that $d_{\mathrm{tr}}(\rho,\hat{\rho})\le 1/8$.
\end{proposition}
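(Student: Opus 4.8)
The plan is to prove the contrapositive by reducing the hardness of learning pseudorandom states, supplied by Lemma~\ref{le:caro1}, to the learnability of $t$-doped states. Concretely, I would assume toward a contradiction that some quantum algorithm $\mathcal{A}$ learns every $n$-qubit $(t,\kappa)$-doped fermionic Gaussian state (with $\kappa=\mathcal{O}(1)$) to trace distance $1/8$ with success probability $\ge 2/3$, outputs an efficiently preparable classical description, and has running time that depends polynomially on $t$ (and polynomially on $n$). Since copy complexity never exceeds time complexity, $\mathcal{A}$ also consumes at most $\mathrm{poly}(n,t)$ copies. The goal is to turn $\mathcal{A}$ into a learner for the pseudorandom set $\mathcal{S}_{\mathrm{PRS}}$ on $k$ qubits that runs in time and copies $\mathrm{poly}(k)$, contradicting property~2 of Lemma~\ref{le:caro1}.

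First I would recall from property~1 of Lemma~\ref{le:caro1} that each $\ket{\psi_{\mathrm{PRS}}}\in\mathcal{S}_{\mathrm{PRS}}$ is prepared by a circuit $U_{\mathrm{PRS}}$ on $k$ qubits consisting of $\tilde{\mathcal{O}}(k)$ Toffoli and Hadamard gates acting on $\ket{0^k}$. The central step is to fermionize this circuit using the local qubit-to-fermion mapping of~\cite{Kitaev_2006}: I would fix an efficiently computable, efficiently invertible isometry $V$ embedding the $k$ qubits into the Fock space of $n=\mathcal{O}(k)$ fermionic modes such that $V$ sends $\ket{0^k}$ to a computational-basis (hence Gaussian) fermionic state, and sends each local qubit gate to a fermionic gate supported on a constant number of modes. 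Because Toffoli and Hadamard are local and generically non-Gaussian, each maps to a non-Gaussian gate of constant Majorana locality $\kappa=\mathcal{O}(1)$, while the Gaussian ``glue'' required by the embedding does not increase the non-Gaussian count. Hence the image $\ket{\tilde\psi}\coloneqq V\ket{\psi_{\mathrm{PRS}}}$ is a $(t,\kappa)$-doped fermionic Gaussian state on $n=\mathcal{O}(k)$ modes with $t=\tilde{\mathcal{O}}(k)=\mathrm{poly}(k)$.

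With this encoding in hand, the reduction is routine. Given copies of an unknown $\ket{\psi_{\mathrm{PRS}}}$, I would apply the efficient circuit $V$ to each copy to obtain copies of the $t$-doped state $\tilde\rho=V\ketbra{\psi_{\mathrm{PRS}}}V^\dagger$, run $\mathcal{A}$ on them to obtain an efficiently preparable $\hat{\tilde\rho}$ with $d_{\mathrm{tr}}(\hat{\tilde\rho},\tilde\rho)\le 1/8$, and post-process by the inverse map, setting $\hat\rho\coloneqq V^\dagger\hat{\tilde\rho}V$ (renormalized). Since $V$ is an isometry we have $V^\dagger\tilde\rho V=\ketbra{\psi_{\mathrm{PRS}}}$, and the map $X\mapsto V^\dagger X V$ is trace-non-increasing, so $d_{\mathrm{tr}}(\hat\rho,\ketbra{\psi_{\mathrm{PRS}}})\le 1/8$; moreover $\hat\rho$ is efficiently preparable because $\hat{\tilde\rho}$ is and $V^\dagger$ is an efficient circuit. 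As $n=\mathcal{O}(k)$ and $t=\mathrm{poly}(k)$, the whole procedure runs in $\mathrm{poly}(k)$ time and copies, contradicting the unlearnability guaranteed by Lemma~\ref{le:caro1}.

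The main obstacle is the fermionization step: one must exhibit a concrete qubit-to-fermion mapping in which (i) each local qubit gate becomes a fermionic gate of constant Majorana locality, (ii) the overhead in the number of modes is only a constant factor, and (iii) the map is efficiently invertible and preserves the state up to the isometry $V$. The naive Jordan--Wigner mapping fails requirement~(i), since the Jordan--Wigner string turns local qubit gates into highly nonlocal Majorana products; the reason for invoking~\cite{Kitaev_2006} is precisely to obtain a mapping that preserves locality, and the real work lies in verifying that Toffoli and Hadamard map to $\mathcal{O}(1)$-local non-Gaussian generators while $\ket{0^k}$ maps to a Gaussian state.
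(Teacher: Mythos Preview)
Your reduction strategy is correct, but you take a more sophisticated route than the paper does for this particular proposition. The paper's proof of Proposition~\ref{prop:nopolyt} stays entirely within Jordan--Wigner: it embeds the $k$-qubit pseudorandom state as $\ket{\phi}\otimes\ket{0^{n-k}}$, decomposes the Toffoli gates into Hadamard, CNOT, and $T^{\pm 1}$, and then uses a cascade of nearest-neighbour SWAP gates to move every gate onto the first two qubits. Under Jordan--Wigner, gates on the first two qubits have Majorana locality $\le 4$, and the nearest-neighbour SWAPs themselves are $\kappa=4$ non-Gaussian gates; the SWAP cascades cost an $\mathcal{O}(k)$ overhead per gate, yielding $t=\tilde{\mathcal{O}}(k^2)$ non-Gaussian gates in total. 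Since $\mathrm{poly}(t)=\mathrm{poly}(k)$ either way, this already contradicts Lemma~\ref{le:caro1}.

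Your Kitaev-encoding route achieves the tighter $t=\tilde{\mathcal{O}}(k)$, and this is precisely what the paper needs---and uses---for the \emph{stronger} Proposition~\ref{prop:exp_lower_bound}, where the quadratic overhead would be fatal. So your argument is valid and in fact proves more, but at the cost of having to verify the encoding details you flag as ``the main obstacle''; for Proposition~\ref{prop:nopolyt} those details are unnecessary. Your claim that Jordan--Wigner ``fails requirement~(i)'' is therefore too strong: it fails to give constant Majorana locality for gates at arbitrary positions, but the SWAP trick restores locality at a merely polynomial price, which suffices here. One small technical point: when you invert the isometry you write $\hat\rho\coloneqq V^\dagger\hat{\tilde\rho}V$ ``renormalized'', but the cleaner way (and the paper's way, in the Kitaev-encoded proof) is to take the partial trace over the ancilla register after undoing the unitary part, and invoke monotonicity of trace distance under partial trace.
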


\begin{proof}
Consider an $n$-qubit state of the form $\ket{\phi}\otimes \ket{0^{n-k}}$, where $\ket{\phi}$ is a $k$-qubit state in the set $\mathcal{S}_{\mathrm{PRS}}$ defined in Lemma~\ref{le:caro1}. Let $\ket{\phi}=U_{\mathrm{PRS}}\ket{0^k}$, prepared by a unitary $U_{\mathrm{PRS}}$ that can be implemented using $\tilde{\mathcal{O}}(k)$ Hadamard and Toffoli gates, as per Lemma~\ref{le:caro1}.
The Toffoli gates can be implemented in turn using Hadamard, CNOT, T-gates, and T-gates inverse. For each gate in the system, utilizing a standard SWAP-exchange trick, we can make any gate acting on some qubits of the system to act only on the first two qubits through a cascade of SWAP gates. This incurs a total overhead factor of $\mathcal{O}(k)$ in the total number of gates.
The SWAP gates between nearest-neighbor qubits are local non-Gaussian gates, since they can be expressed as $\text{SWAP}_{i,i+1}=e^{-i\frac{\pi}{4}}\exp(i\frac{\pi}{4}(X_iX_{i+1}+Y_iY_{i+1}+Z_iZ_{i+1}))$, which have Majorana locality equal to $4$ (due to the Jordan-Wigner mapping). Also the other gates, now acting on the first two qubits, are (possibly) non-Gaussian gates with Majorana locality at most $4$. This is because the Pauli operators in the generator of each gate can be expressed in terms of Majorana operators via the Jordan-Wigner transformation.
This results in a total of $t=\tilde{\mathcal{O}}(k^2)$ local non-Gaussian gates required to prepare the state. Thus, the $\ket{\phi}$ is a $t$-doped Gaussian state with $\kappa=4$ Majorana local non-Gaussian gates. Now, due to Lemma~\ref{le:caro1}, there exists no learning algorithm to learn such a state in time $\mathcal{O}(\mathrm{poly}(k))=\mathcal{O}(\mathrm{poly}(t))$, with error less than $1/8$ and probability of success at least $2/3$.
\end{proof}

\subsection{Learning $\tilde{\omega}(\log(n))$-doped Gaussian states is hard}
The previous Proposition rules out efficient algorithms when $t=\Omega(\mathrm{poly}(n))$. However, our proposed algorithm is not anymore efficient when $t=\tilde{\omega}(\log(n))$, while it is efficient for $t=\mathcal{O}(\log(n))$. 
If we were to make the stronger assumption, namely that \class{RingLWE} cannot be solved by quantum computers in sub-exponential time~\cite{ShallowSrini,diakonikolas2022cryptographic,ananth2023revocable,gupte2022continuous}, then we can rule out that efficient algorithm for $t=\tilde{\omega}(\log(n))$ exists (which means faster than $\omega(\log(n))$ but hiding $\mathrm{polyloglog}$ factors). For showing this, we need first the following Lemma adapted by~\cite{zhao2023learning}.
\begin{lemma}[Adapted from Theorem 14 and 15 in~\cite{zhao2023learning}]
\label{le:caro2}
    Assume that \class{RingLWE} cannot be solved by quantum computers in sub-exponential time. Then, there exists a set $\mathcal{S}_{\mathrm{PRS}}$ of $k$-qubit pure quantum states, known as pseudorandom quantum states, with the following properties:
    \begin{enumerate}
        \item Any state in $\mathcal{S}_{\mathrm{PRS}}$ can be prepared using $\tilde{\mathcal{O}}(k)$ Toffoli and Hadamard gates (here, $\tilde{\mathcal{O}}(\cdot)$ hides $\mathrm{polylog}(\cdot)$ factors).
        \item Any algorithm to learn states from the set $\mathcal{S}_{\mathrm{PRS}}$ must have $\exp(\Omega(k))$ time complexity. More specifically, let $\rho \in \mathcal{S}_{\mathrm{PRS}}$ be an unknown quantum state. Then any quantum algorithm that, by querying copies of $\rho$, with probability at least $2/3$ outputs a classical description of a state $\hat{\rho}$ which can be prepared in sub-exponential time on a quantum computer such that $d_{\mathrm{tr}}(\rho,\hat{\rho})\le 1/8$, must have $\exp(\Omega(k))$ time-complexity.
    \end{enumerate}
\end{lemma}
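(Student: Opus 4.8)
The plan is to \emph{instantiate} the lemma rather than reprove pseudorandomness from scratch: I would exhibit the concrete \class{RingLWE}-based family $\mathcal{S}_{\mathrm{PRS}}$ of pseudorandom quantum states of~\cite{brakerski2019pseudo,Ji_2018,ShallowSrini} and then re-verify the two itemized properties under the strengthened (sub-exponential) hardness assumption. The structure of the argument mirrors that of the polynomial version in Lemma~\ref{le:caro1}; the only genuinely new ingredient is tracking the parameter dependence carefully enough to convert a sub-exponential distinguishing bound into an $\exp(\Omega(k))$ learning lower bound.

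For Property 1, I would recall that the standard pseudorandom-state families are prepared by applying Hadamards to create a uniform superposition and then coherently evaluating a post-quantum pseudorandom function, keyed by the label of the state, into a phase or an auxiliary register. When this pseudorandom function is built from \class{RingLWE}, its reversible classical arithmetic compiles into $\tilde{\mathcal{O}}(k)$ Toffoli gates and the quantum parts into $\tilde{\mathcal{O}}(k)$ Hadamards, which is exactly the claimed gate count.

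For Property 2, the core is a reduction from learning to distinguishing. Suppose toward a contradiction that an algorithm $\mathcal{A}$ uses sub-exponentially many copies of an unknown $\rho \in \mathcal{S}_{\mathrm{PRS}}$, runs in time $2^{o(k)}$, and with probability $\ge 2/3$ outputs a classical description (with preparation circuit) of a sub-exponentially preparable $\hat{\rho}$ satisfying $d_{\mathrm{tr}}(\rho,\hat{\rho})\le 1/8$. I would build a distinguisher $\mathcal{D}$ between $\mathcal{S}_{\mathrm{PRS}}$ and the Haar measure: $\mathcal{D}$ feeds its input copies to $\mathcal{A}$, obtains $\hat{\rho}$, and then estimates $\Tr(\rho\hat{\rho})$ via a SWAP test using fresh input copies together with the preparation circuit of $\hat{\rho}$. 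On pseudorandom inputs the estimated distance is small with probability $\ge 2/3$. On Haar-random inputs I would invoke a covering argument: there are at most $\exp(\tilde{\mathcal{O}}(s))$ distinct circuits of size $s=2^{o(k)}$ over a fixed gate set, while a Haar-random state in dimension $2^k$ lies within trace distance $1/8$ of any fixed state only with probability $\exp(-\Omega(2^k))$; a union bound then shows a Haar-random state is $\tfrac14$-far from \emph{every} sub-exponentially preparable state except with probability $\exp(-\Omega(2^k))$. This yields a constant distinguishing advantage, and since $\mathcal{D}$ inherits the runtime of $\mathcal{A}$, it runs in time $2^{o(k)}$, contradicting the sub-exponential security of the \class{RingLWE}-based ensemble. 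Hence every such learner must have $\exp(\Omega(k))$ time complexity.

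The main obstacle I anticipate is the quantitative bookkeeping in the covering step together with the matching of the pseudorandom-state security parameter to $k$. One must ensure that restricting $\hat{\rho}$ to be \emph{sub-exponentially} preparable (so $s=2^{o(k)}$) keeps the number of candidate outputs $\exp(\tilde{\mathcal{O}}(s))$ strictly below the $\exp(\Omega(2^k))$ concentration of Haar measure—this is precisely why the hypothesis must forbid genuinely exponential-time preparation of $\hat{\rho}$—and that the sub-exponential \class{RingLWE} assumption is applied in the parameter regime where a time-$2^{o(k)}$ distinguisher is actually ruled out. Once these parameter relations are pinned down, the remaining pieces (Property 1, the SWAP-test fidelity estimate, and the union bound) are routine.
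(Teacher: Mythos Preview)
The paper does not prove this lemma at all: it is stated as ``Adapted from Theorem 14 and 15 in~\cite{zhao2023learning}'' and then used as a black box in the proof of Proposition~\ref{prop:exp_lower_bound}. So there is no ``paper's own proof'' to compare against; the authors defer entirely to the cited reference.

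Your proposal goes well beyond what the paper does, by actually sketching the underlying argument from~\cite{zhao2023learning,brakerski2019pseudo,Ji_2018,ShallowSrini}. The sketch is correct and follows the standard route: instantiate the \class{RingLWE}-based PRS family, observe that the preparation circuit has $\tilde{\mathcal{O}}(k)$ Toffoli/Hadamard complexity, and then reduce learning to distinguishing via a SWAP-test plus a covering/union-bound argument over sub-exponentially preparable states. Your identification of the key quantitative point---that the number of circuits of size $2^{o(k)}$ must be beaten by the $\exp(-\Omega(2^k))$ Haar concentration, which is exactly why $\hat{\rho}$ is required to be sub-exponentially preparable---is the right thing to be careful about. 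For the purposes of this paper, simply citing~\cite{zhao2023learning} as the authors do would suffice; your expanded sketch is a correct unpacking of what that citation contains.
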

Now we use this to prove the following Proposition, also stated informally in the main text as Theorem~\ref{thm:lower_bound_informal}. It reaches stronger conclusion than Proposition~\ref{prop:nopolyt}, but at the cost of stronger cryptography assumption. The idea of the following proof is to use a more compact qubits-to-fermion mapping than Jordan-Wigner, namely a modification of the one introduced by Kitaev \cite{Kitaev_2006}, which would allow to create pseudorandom quantum states with no overhead in the number of non-Gaussian gates compared to the number of Toffoli and Hadamard gates. 
In the following, we recall that $\tilde{\omega}(\log(n))$ is defined as $\tilde{\omega}(\log(n)):=\omega(\log(n)\mathrm{polyloglog}(n))$).
\begin{proposition}[Learning $\tilde{\omega}(\log(n))$-doped Gaussian states is hard]
\label{prop:exp_lower_bound}
Assume that \class{RingLWE} cannot be solved by quantum computers in sub-exponential time. Then, there is no efficient (i.e., $\mathcal{O}(\mathrm{poly}(n))$ time) quantum algorithm that, by querying copies of a $\tilde{\omega}(\log(n))$-doped Gaussian state $\rho$, with probability at least $2/3$ outputs a description of a state $\hat{\rho}$ which can be prepared in polynomial time on a quantum computer such that $d_{\mathrm{tr}}(\rho,\hat{\rho})\le 1/8$. 
\end{proposition}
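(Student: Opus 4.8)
The plan is to argue by contradiction, reducing the learning of the $k$-qubit pseudorandom states of Lemma~\ref{le:caro2} to the learning of $\tilde\omega(\log n)$-doped Gaussian states. Suppose an efficient (i.e. $\mathcal{O}(\mathrm{poly}(n))$-time) learner $\mathcal{A}$ for $\tilde\omega(\log n)$-doped states existed; I would use it to learn a state $\ket{\phi}\in\mathcal{S}_{\mathrm{PRS}}$ in sub-exponential time in $k$, contradicting item~2 of Lemma~\ref{le:caro2}. The whole reduction hinges on encoding $\ket{\phi}$ as a $t$-doped Gaussian state whose doping $t$ is only $\tilde{\mathcal{O}}(k)$ — comparable to the gate count of the pseudorandom-state circuit — rather than the $\tilde{\mathcal{O}}(k^2)$ incurred by the naive Jordan--Wigner reduction of Proposition~\ref{prop:nopolyt}.

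First, the encoding. I would start from $\ket{\phi}=U_{\mathrm{PRS}}\ket{0^k}$, where $U_{\mathrm{PRS}}$ is built from $\tilde{\mathcal{O}}(k)$ Toffoli and Hadamard gates, and decompose each Toffoli into $\mathcal{O}(1)$ one- and two-qubit gates, so that $U_{\mathrm{PRS}}$ becomes a circuit of $\tilde{\mathcal{O}}(k)$ \emph{local} qubit gates. Then I would apply a compact qubits-to-fermions encoding, a modification of Kitaev's construction~\cite{Kitaev_2006}, which embeds the $k$ logical qubits into $n_0=\mathcal{O}(k)$ fermionic modes and sends every local qubit operator to a fermionic operator of constant Majorana locality. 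Under this encoding each gate maps to $\mathcal{O}(1)$ constant-$\kappa$ fermionic gates, so the encoded state is a $(t,\kappa)$-doped Gaussian state with $\kappa=\mathcal{O}(1)$ and $t=\tilde{\mathcal{O}}(k)$ on $n_0=\mathcal{O}(k)$ modes. Crucially, this encoding is a fixed, efficiently computable isometry whose inverse is a $\mathcal{O}(\mathrm{poly}(k))$-size circuit, so a classical description of a state close to the encoded PRS can be efficiently converted into a description of an efficiently preparable state close, in trace distance (which cannot increase under the CPTP inverse map), to $\ket{\phi}$.

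Second, matching the threshold and deriving the contradiction. I would pad the encoded state with ancillas $\ket{0^{n-n_0}}$ to obtain an $n$-mode state; padding adds no non-Gaussian gates, so it remains $t$-doped with $t=\tilde{\mathcal{O}}(k)$. I would then choose $n=n(k)$ growing with $k$ so that simultaneously $t=\tilde\omega(\log n)$ (so the guarantee of $\mathcal{A}$ applies) and $\mathrm{poly}(n)$ is sub-exponential in $k$; taking $\log n$ slightly below $k$, say $\log n \approx k/\mathrm{polylog}(k)$, makes both hold, the $\mathrm{polyloglog}$ slack in the definition $\tilde\omega(\log n)=\omega(\log n\,\mathrm{polyloglog}(n))$ being precisely what absorbs the $\mathrm{polylog}(k)$ overhead of the pseudorandom-state circuit. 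Running $\mathcal{A}$ on copies of this state and inverting the encoding would learn $\ket{\phi}$ to trace distance $1/8$ with probability $\geq 2/3$ in time $\mathrm{poly}(n)$, which is sub-exponential in $k$, contradicting the $\exp(\Omega(k))$ lower bound of Lemma~\ref{le:caro2}.

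The main obstacle is the encoding step: establishing that the compact Kitaev-type map genuinely sends local qubit operations to constant-locality fermionic operations with only constant mode overhead and \emph{no} blow-up in the number of non-Gaussian gates. This locality preservation is exactly what pins $t$ at $\tilde{\mathcal{O}}(k)$ instead of $\tilde{\mathcal{O}}(k^2)$, and hence what upgrades the hardness threshold from the $\tilde\omega((\log n)^2)$ obtainable via Jordan--Wigner to the essentially tight $\tilde\omega(\log n)$. A secondary, purely bookkeeping difficulty is verifying the asymptotics relating $t$, $k$, and $n$, so that the final runtime is truly sub-exponential in $k$ while $t$ stays within the $\tilde\omega(\log n)$ regime.
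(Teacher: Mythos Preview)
Your proposal is correct and follows essentially the same route as the paper: both reduce PRS learning to $t$-doped Gaussian learning via a Kitaev-type qubits-to-fermions encoding that preserves gate locality, so that $t=\tilde{\mathcal{O}}(k)$ rather than $\tilde{\mathcal{O}}(k^2)$, and then derive a contradiction with Lemma~\ref{le:caro2}. The paper spells out the encoding explicitly (the stabilizers $s_j$, the logical Paulis $X^{\rm KE}_j,Z^{\rm KE}_j$, the fact that $V_{\rm KE}\ket{0^{2k}}$ is Gaussian, and that among $\mathrm{H}^{\rm KE},\mathrm{T}^{\rm KE},\mathrm{CNOT}^{\rm KE}$ only the latter is non-Gaussian with $\kappa=4$), whereas you are somewhat more explicit about the padding and the choice of $n(k)$; these are complementary levels of detail on the same argument.
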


\begin{proof}
Following a similar approach as in the proof of Proposition~\ref{prop:nopolyt}, we begin by defining a state $\ket{\phi}\otimes \ket{0^{n-k}}$, where $\ket{\phi}=U_{\mathrm{PRS}}\ket{0^k}$ represents a $k$-qubit state in the set $\mathcal{S}_{\mathrm{PRS}}$ as defined in Lemma~\ref{le:caro2}. This state can be efficiently prepared using a unitary $U_{\mathrm{PRS}}$ with $\tilde{\mathcal{O}}(k)$ Hadamard and Toffoli gates, which, in turn, can be implemented using Hadamard gates, CNOTs, T-gates, and their inverses.
In contrast to the previous proof in Proposition~\ref{prop:nopolyt}, an application of the same argument would not yield the desired conclusion due to the unfavorable quadratic overhead introduced by the SWAP-exchange trick in the number of non-Gaussian gates. That trick was necessary due to the use of Jordan-Wigner transformation. However, we can employ a more efficient qubits-to-fermions mapping, specifically a modified version of the one introduced by Kitaev \cite{Kitaev_2006}. Hereafter, we will refer to it as the ``Kitaev encoding."
Our objective is to construct a fermionic state encoding $\ket{\phi}$ using a circuit of size $\tilde{\mathcal{O}}(k)$ composed of Gaussian and $\kappa=4$ local non-Gaussian gates.
We employ a mapping of $k$ qubits into $2k$ fermionic modes using Majorana operators $\{\gamma_{\alpha,j}\,|\, j\in[k],\, \alpha\in\{0,\,x,\,y,\,z\}\}$. These $4k$ Majorana operators are defined in terms of $4k$ Pauli strings via Jordan-Wigner, with an arbitrarily fixed operator ordering.
We are now going to leverage the formalism and basics of stabilizer codes, for more in-depth information, refer to~\cite{Roffe_2019}.
The Kitaev encoding involves defining a stabilizer code of $2k$ physical qubits encoded in $k$ logical qubits, characterized by the following $k$ stabilizer generators $\{s_j\}^k_{j=1}$ and logical Pauli operators $\{X^{\rm KE}_j,Z^{\rm KE}_j\}^k_{j=1}$:
\begin{align}
s_j&:=\gamma_{0,j} \gamma_{x,j} \gamma_{y,j} \gamma_{z,j},\\
X^{\rm KE}_j&:=i\gamma_{y,j} \gamma_{z,j},\\
Z^{\rm KE}_j&:=i\gamma_{x,j} \gamma_{y,j}.
\end{align}
for each $j\in[k]$. Note that these operators explicitly satisfy the algebraic conditions on stabilizer generators and logical Pauli operators.
The Kitaev encoding is associated to a Clifford transformation $V_{\rm KE}$ such that: 
\begin{align}
    V_{\rm KE}X_j V_{\rm KE}^{\dagger} &= X^{\rm KE}_j,\\
    V_{\rm KE}Z_j V_{\rm KE}^{\dagger} &= Z^{\rm KE}_j,
    \label{eq:Zmap}
\end{align}
for each $j\in[k]$, and:
\begin{align}
    V_{\rm KE}Z_j V_{\rm KE}^{\dagger} &= s_j,
\end{align}
for each $j\in \{k+1,\dots, 2k\}$. The last equation ensures that $V_{\rm KE}\ket{0^{2k}}$ is an eigenstate with $+1$ eigenvalue for each of the stabilizer generators $\{s_j\}^k_{j=1}$, while Eq.\eqref{eq:Zmap} implies that $V_{\rm KE}\ket{0^{2k}}$ is an eigenstate with $+1$ eigenvalue for each $\{Z^{\rm KE}_j\}^k_{j=1}$. Thus, $V_{\rm KE}\ket{0^{2k}}$ is a valid ``logical zero" stabilizer state. 
Exploiting the fact that $V_{\rm KE}\ket{0^{2k}}$ is an eigenstate with $+1$ eigenvalues of $\{Z^{\rm KE}_j\}^k_{j=1}$ and $\{s_j Z^{\rm KE}_j\}^k_{j=1}$, its density matrix can be written as:
\begin{align}
 V_{\rm KE}\ketbra{0^{2k}}V^{\dag}_{\rm KE}=\prod^{k}_{j=1} \left(\frac{I+ s_jZ^{\rm KE}_j }{2}\right)  \prod^{2k}_{j=k} \left(\frac{I+ Z^{\rm KE}_j }{2}\right)&=\prod^{k}_{j=1} \left(\frac{I- i\gamma_{0,j}   \gamma_{z,j} }{2}\right)\prod^{2k}_{j=k} \left(\frac{I+ i\gamma_{x,j} \gamma_{y,j} }{2}\right).
\end{align}
From this, we observe that $V_{\rm KE}\ket{0^{2k}}$ is a fermionic Gaussian state because it can be written in the form of Eq.\eqref{eq:densityGaus}, noting that signed permutation matrices are orthogonal matrices. Moreover, we denote $Y^{\rm KE}_j:=-iZ^{\rm KE}_j X^{\rm KE}_j=i\gamma_{x,j} \gamma_{z,j}$.
The Kitaev encoding, as defined, ensures that the local qubit gates are mapped onto local fermionic ones. In particular, the gates of the circuit $U_{\rm PRS}$ that prepares the pseudorandom state $\ket{\phi}=U_{\rm PRS}\ket{0^k}$--- the Hadamard $\operatorname{H}$, $\operatorname{CNOT}$, and $\operatorname{T}$-gate --- are, up to an overall phase, mapped onto
\begin{align}
   \mathrm{H}^{\rm KE}_{j}&:=  V_{\rm KE} \mathrm{H}_{j} V_{\rm KE}^{\dagger}=V_{\rm KE}\left( Z_j \frac{I+ i Y_j}{\sqrt{2}}\right) V_{\rm KE}^{\dagger}= Z^{\rm KE}_j \frac{I+ i Y^{\rm KE}_j}{\sqrt{2}}=e^{-\frac{\pi}{2} \gamma_{x,j} \gamma_{y,j}}e^{-\frac{\pi}{4} \gamma_{z,j} \gamma_{x,j}},\\
   \mathrm{CNOT}_{j,l}^{\rm KE}&:= V_{\rm KE} \mathrm{CNOT}_{j,l} V_{\rm KE}^{\dagger}= e^{i\frac{\pi}{4} (1-Z^{\rm KE}_j)(I-X^{\rm KE}_l)}=e^{i\frac{\pi}{4} (I-i\gamma_{x,j} \gamma_{y,j})(I-i\gamma_{y,l} \gamma_{z,l})},\\
    \mathrm{T}^{\rm KE}_{j}&:=V_{\rm KE} \mathrm{T}_{j} V_{\rm KE}^{\dagger}= e^{i\frac{\pi}{8} Z^{\rm KE}_j} = e^{-\frac{\pi}{8} \gamma_{x,j} \gamma_{y,j}},
\end{align}
for each $j \neq l \in [k]$.
The only non-Gaussian among these is the encoding of the $\mathrm{CNOT}$ gate, which has Majorana locality $\kappa=4$.
This implies that the encoding of the circuit $U_{\mathrm{PRS}}$, i.e. $  U^{\rm KE}_{\mathrm{PRS}}:= V_{\rm KE} U_{\mathrm{PRS}}V^{\dag}_{\rm KE}$, is a $t$-doped fermionic Gaussian unitary with local non-Gaussian gates and $t=\tilde{\mathcal{O}}(k)$.
Thus, we have that the Kitaev encoding of the pseudorandom state $\ket{\phi}=U_{\mathrm{PRS}}\ket{0^k}$ is:
\begin{align}
    \ket{\phi}_{\rm KE}:=V_{\rm KE}\ket{\phi}\otimes \ket{0^k}=V_{\rm KE}U_{\mathrm{PRS}} \ket{0^{2k}}= U^{\rm KE}_{\mathrm{PRS}}V_{\rm KE}\ket{0^{2k}}.
\end{align}
Since $V_{\rm KE}\ket{0^{2k}}$ is a Gaussian state and $U^{\rm KE}_{\mathrm{PRS}}$ is a $t$-doped Gaussian unitary with $t=\tilde{\mathcal{O}}(k)$, then $ V_{\rm KE}\ket{\phi}\otimes \ket{0^k}$ is a $t$-doped Gaussian state with $t=\tilde{\mathcal{O}}(k)$. The Majorana locality of each non-Gaussian gate is at most $4$.

Using an arbitrary algorithm $\mathcal{A}$ for learning a $t$-doped fermionic Gaussian state, we will now define a protocol for learning the pseudorandom state $\ket{\phi}$. Given a copy of a state $\ket{\phi}$ on $k$ qubits, we use $k$ auxiliary qubits in state $\ket{0}$ and apply the Clifford transformation $V_{\rm KE}$. This means it can be produced by a circuit with $O(k^2)$ $2$-qubit gates \cite{Dehaene_2003}. The resulting $\ket{\phi}_{\rm KE}$ can be input to $\mathcal{A}$ as a copy of a $t$-doped fermionic Gaussian state for $t=\tilde{\mathcal{O}}(k)$. Using the number of copies of $\ket{\phi}$ given by sample complexity of $\mathcal{A}$, we learn a description of a state $\hat{\rho}_{\rm KE}$ which, with probability at least $2/3$, satisfies:
\begin{align}
    d_{\mathrm{tr}}(\hat{\rho}_{\rm KE},\rho_{\rm KE})\le \frac{1}{8},
\end{align}
where we defined $\rho_{\rm KE}$ to be the density matrix associated with $\ket{\phi}_{\rm KE}$. 
By defining $\hat{\rho}:=\Tr_{\{k+1,\dots,2k\}}\left(V^{\dag}_{\rm KE}\hat{\rho}_{\rm KE}V_{\rm KE}\right)$, where $\Tr_{\{k+1,\dots,2k\}}(\cdot)$ indicates the partial trace with respect to the qubits $\{k+1,\dots,2k\}$, we also have:
\begin{align}
    d_{\mathrm{tr}}( \hat{\rho},\ketbra{\phi}) \le d_{\mathrm{tr}}(V^{\dag}_{\rm KE}\hat{\rho}_{\rm KE}V_{\rm KE},\ketbra{\phi}\otimes\ketbra{0^k}{0^k})=d_{\mathrm{tr}}(V^{\dag}_{\rm KE}\hat{\rho}_{\rm KE}V_{\rm KE},V^{\dag}_{\rm KE}\rho_{\rm KE}V_{\rm KE})=d_{\mathrm{tr}}(\hat{\rho}_{\rm KE},\rho_{\rm KE})\le \frac{1}{8},
\end{align}
where in the first step we used that the partial trace does not increase the trace distance between two states~\cite{Wilde_2017}. Hence, we found a state $\hat{\rho}$ which is in trace distance close to the target state $\ketbra{\phi}$.
To recap, we produced the learning algorithm for a pseudorandom state $\ket{\phi}$ from a learning algorithm for $t$-doped fermionic Gaussian states. The pseudorandom state learning algorithm has the same sample complexity as the fermionic one, and the time complexity $T_{\rm PRS}=S_{\rm f}\cdot O(k^2) + T_{\rm f}$ where $T_{\rm f}$ and $S_{\rm f}$ are time and sample complexity of the fermionic learner. If there is a fermionic learner whose time and sample complexity scale subexponentially in $k$, the same property carries over to the pseudorandom states learner. By Lemma~\ref{le:caro2}, this would contradict the cryptographic assumption that \class{RingLWE} cannot be solved by quantum computers in sub-exponential time. Hence, the time complexity of the fermionic learner needs to be $\exp(\Omega(k))$. If $k=\omega(\log(n))$, then this implies that any algorithm to learn $t$-doped fermionic Gaussian states with $t=\tilde{\mathcal{O}}(k)=\omega(\log(n)\mathrm{polyloglog}(n))$ must be inefficient, i.e., its time complexity must be $\omega(\mathrm{poly}(n))$.
\end{proof}

\section{Generalization to the mixed-state scenario and noise-robustness of the algorithm}
\label{sec:genmixed}
Many of the concepts introduced in our work can be extended to the more realistic mixed-state scenario. We start by providing the definition for $t$-compressible mixed states.
\begin{definition}[Mixed $t$-compressible state]
   A possibly mixed state $\rho$ is a $t$-compressible (Gaussian) state if and only if it can be represented as 
   \begin{align}
        \rho = G(\sigma \otimes \ketbra{0^{n-t}})G^{\dag},
   \end{align}
    where $G$ is a Gaussian operation, and $\sigma$ is a state supported solely on the first $t$ qubits.
\end{definition}

As in the pure-state scenario (Proposition~\ref{prop:equiv}), saying that a quantum state is $t$-compressible is equivalent to the fact that at least $n-t$ normal eigenvalues of the state's correlation matrix are equal to one (i.e., its Gaussian dimension is $\ge n-t$ and its Gaussian nullity is $\le t$).
\begin{proposition}[Equivalent definition of $t$-compressible state]
\label{prop:equiv2}
    A possibly mixed $n$-qubit state $\rho$ is $t$-compressible if and only if at least $n-t$ normal eigenvalues of the correlation matrix of $\rho$ are equal to one.
\end{proposition}
The proof is analogous to that of Proposition~\ref{prop:equiv}.

For the tomography algorithm, analogous guarantees extend from the $t$-compressible pure-case to the $t$-compressible mixed-state scenario. In this case, the unknown state is promised to be a $t$-compressible Gaussian (possibly mixed) state. However, in this mixed-state scenario, the algorithm is simpler, since measuring the last $n-t$ qubits in the computational basis is not necessary because we do not require the output state to be pure, i.e., step 5 in Algorithm~\ref{alg:algo} is not necessary. Instead, the full state tomography of the first $t$ qubits can be applied right after having applied the Gaussian operation to the state. We explicitly show this in Algorithm~\ref{alg:algoSMappr} of the following subsection, whose proof of correctness is shown in the (more general) Theorem~\ref{th:joiningpiecesMIXEDApp}.

Furthermore, the following inequalities can be helpful to test how much a possibly mixed state $\rho$ deviates (in trace distance) from the set of mixed $t$-compressible Gaussian states.
\begin{proposition}[Minimum distance between a possibly mixed state and the set of $t$-compressible states]
\label{prop:lbubmixedtrace}
Let $\rho$ be a possibly mixed quantum state, and let $\{\lambda_i\}_{i=1}^{n}$ be the normal eigenvalues of its correlation matrix, ordered in increasing order. The following trace distance upper and lower bounds between the state and the set of $t$-compressible mixed Gaussian states $\mathcal{G}_t$ hold:
\begin{align}
   \sqrt{\sum_{k=t+1}^{n} \frac{(1-\lambda_k)}{2}}  \ge \min_{\sigma_t \in \mathcal{G}_t } d_{\mathrm{tr}}(\rho, \sigma_t) \ge \frac{1}{2}(1-\lambda_{t+1}).
\end{align}
\end{proposition}
The proof of this theorem is analogous to those of Proposition~\ref{prop:apporxtcompr} and Proposition~\ref{prop:lbtrace}, which held for the pure-state scenario but readily apply to the more general mixed-state scenario.

Thus, with the previous inequalities in hand, we can also provide a property testing algorithm that applies to a possibly mixed quantum state.
\begin{theorem}[Efficient $t$-compressible Gaussian testing]\label{th:proofofefficientpurestatetestingMIXED}
    Let $\rho$ be an $n$-qubit possibly mixed quantum state. Assume $\varepsilon_B, \varepsilon_A \in [0,1]$ such that $\varepsilon_B > \sqrt{(n-t)\varepsilon_A}$, $\delta \in (0,1]$, and $\varepsilon_{\mathrm{corr}}= \left(\frac{\varepsilon^2_B}{n-t}-\varepsilon_A\right)$. Assume that $\rho$ is such that $\min_{\sigma_t \in \mathcal{G}_t } d_{\mathrm{tr}}(\rho, \sigma_t)\le \varepsilon_A$ or $\min_{\sigma_t \in \mathcal{G}_t } d_{\mathrm{tr}}(\rho, \sigma_t)> \varepsilon_B$.  
    Then, Algorithm~\ref{alg:algoTEST} can discriminate between these two scenarios using $N=\lceil 16(n^3/\varepsilon_{\mathrm{corr}}^{2}) \log(4n^2/\delta)\rceil$ single-copy measurements of the state $\rho$ with a probability of success at least $1-\delta$. 
\end{theorem}
The proof of this theorem is similar to that of Theorem~\ref{th:proofofefficientpurestatetesting}.
\subsection{Tomography of approximately $t$-compressible states}
In this subsection, we address the problem of tomography for `approximate $t$-compressible states', i.e., states with $n-t$ normal eigenvalues of their correlation matrix that are not exactly one, but \emph{almost}.

This approach is particularly useful for the tomography of states where most of the normal eigenvalues are very close to one, such as states obtained when considering impurity models~\cite{Bravyi_2017}, as we will explore in more detail in the next section. Moreover, when preparing a target $t$-compressible state in a quantum device (e.g., a target pure state prepared by a 1D matchgate circuit doped with a few SWAP gates), noise can cause an effective state $\rho$ to be prepared instead of the target $t$-compressible state. Therefore, an algorithm that allows for perturbations from the set of $t$-compressible states is experimentally motivated.

We provide such an algorithm in Table~\ref{alg:algoSMappr}. Note that this differs from Algorithm~\ref{alg:algo} because in this mixed-state scenario we do not need to measure the last $n-t$ qubits in the computational basis before performing full-state tomography on the first $t$ qubits.

\begin{algorithm}
\label{alg:algoSMappr}
\caption{Learning algorithm for approximately $t$-compressible fermionic Gaussian states (possibly mixed)}
\KwIn{Accuracy $\varepsilon$, failure probability $\delta$, $N = \mathcal{O}\!\left(\frac{n^5}{\varepsilon^{4}}\log\!\left(\frac{n^2}{\delta}\right)\right)+ N_{\mathrm{tom}}\!\left(t,\frac{\varepsilon}{2},\frac{\delta}{2}\right)$ copies of the state $\rho$, where $N_{\mathrm{tom}}$ is the number of copies needed for $t$-qubit state tomography with accuracy $\frac{\varepsilon}{2}$ and failure probability $\frac{\delta}{2}$.}
\KwOut{A classical description of $\hat{\rho}$, ensuring $d_{\mathrm{tr}}(\hat{\rho}, \rho)\le \varepsilon + \varepsilon_t$ with probability at least $1-\delta$. Here, $\varepsilon_t\coloneqq \sqrt{\sum^{n}_{j=t+1}\frac{(1-\lambda_j)}{2}}$, 
where $\{\lambda_j\}^n_{j=1}$ are the normal eigenvalues of the correlation matrix of $\rho$ ordered in increasing order.}

Estimate the correlation matrix of $\rho$ using $\mathcal{O}\!\left(\frac{n^5}{\varepsilon^{4}}\log\!\left(\frac{n^2}{\delta}\right)\right)$ single-copy measurements (see Lemma~\ref{le:samplecompAPPcommuting}), obtaining $\hat{C}$\;

Express $\hat{C}$ in its normal form $\hat{C}=\hat{O}\hat{\Lambda}\hat{O}^T$ (Eq.\eqref{eq:decomAntisym}) and find the Gaussian unitary $G_{\hat{O}}$ associated with $\hat{O} \in \mathrm{O}(2n)$\;

Using $N_{\mathrm{tom}}(t,\frac{\varepsilon}{2},\frac{\delta}{2})$ copies, perform full-state tomography on the first $t$ qubits of the state $G^{\dag}_{\hat{O}}\rho G_{\hat{O}}$, and let $\hat{\sigma}$ be the $t$-qubit state output by tomography\;

\Return $\hat{O}$ and $\hat{\sigma}$, which identify $\hat{\rho} \coloneqq G_{\hat{O}}(\hat{\sigma}\otimes \ketbra{0^{n-t}})G^{\dag}_{\hat{O}}$\;
\end{algorithm}

\begin{theorem}[Efficient learning of approximate $t$-compressible Gaussian states]
\label{th:joiningpiecesMIXEDApp}
Let $\varepsilon, \delta \in (0,1]$. Let $\rho$ be a quantum state. By utilizing $N = \mathcal{O}\!\left(\frac{n^5}{\varepsilon^{4}}\log\!\left(\frac{n^2}{\delta}\right)\right)+ N_{\mathrm{tom}}\!\left(t,\frac{\varepsilon}{2},\frac{\delta}{2}\right)$ single-copy measurements and $T = \mathcal{O}(n^3) + T_{\mathrm{tom}}\!\left(t,\frac{\varepsilon}{2},\frac{\delta}{2}\right)$ computational time, Algorithm~\ref{alg:algoSMappr} yields a classical representation of a state $\hat{\rho}$, satisfying 
\begin{align}
    d_{\mathrm{tr}}(\hat{\rho}, \rho) \le \varepsilon + \varepsilon_t,
\end{align}
with probability $\ge 1-\delta$. Here, $\varepsilon_t\coloneqq \sqrt{\sum^{n}_{j=t+1}(1-\lambda_j)/2}$, where $\{\lambda_j\}^n_{j=1}$ are the normal eigenvalues of the correlation matrix of $\rho$ ordered in increasing order. Moreover, $N_{\mathrm{tom}}(t,\frac{\varepsilon}{2},\frac{\delta}{2})$ and $T_{\mathrm{tom}}(t,\frac{\varepsilon}{2},\frac{\delta}{2})$ respectively denote the number of copies and computational time sufficient for full-state tomography of a mixed $t$-qubit state with an $\varepsilon/2$ accuracy and a failure probability of at most $\delta/2$.
\end{theorem}
\begin{proof}
The learning procedure is outlined in Algorithm~\ref{alg:algoSMappr} and we now establish its correctness.
According to Lemma~\ref{le:samplecompAPPcommuting}, $ N_c(n,\varepsilon_c,\delta/2)$ single copies of $\rho$ are sufficient to construct an anti-symmetric real matrix $\hat{C}$ such that $\norm{\hat{C}-C}_\infty\le\varepsilon_c$ with a probability of at least $1-\delta/2$, where $C$ is the correlation matrix of $\rho$. 
Then, we can put $\hat{C}$ in its normal form $\hat{C}=\hat{O}\hat{\Lambda} \hat{O}^{T}$, where $\hat{O}\in \mathrm{O}(2n)$ is an orthogonal matrix and $\hat{\Lambda}$ is a matrix of the form $\hat{\Lambda} = i\bigoplus^{n}_{j=1} \lambda_{j}(\hat{C}) Y$, where $\{ \lambda_{j}(\hat{C})\}^{n}_{j=1}$ are the normal eigenvalues of $\hat{C}$. We can then construct~\cite{dias2023classical,zhao2023PhD,Zhao_2024} the Gaussian unitary $\hat{G}$ associated to $\hat{O}$. Subsequently, we consider the state $\rho^{\prime}\coloneqq \hat{G}^{\dag}\rho \hat{G}$.
For $m\in\{t+1,\dots, n\}$, we have:
    \begin{align}
        \Tr(Z_m \rho^{\prime}) & =C(\rho^{\prime})_{2m-1,2m} \\
                               & \ge (\hat{\Lambda} )_{2m-1,2m} -  \varepsilon_c \\
                               & =  \lambda_{m}(\hat{C}) -  \varepsilon_c \\
                               & \ge \lambda_{m}(C)- 2 \varepsilon_c,
    \end{align}
    where $Z_m=-i\gamma_{2m-1}\gamma_{2m}$ is the $Z$-Pauli operator acting on the $m$-th qubit. In the second step we used that that $\lvert C(\psi^{\prime})_{j,k}-(\hat{\Lambda} )_{j,k} \rvert\le \varepsilon_c$, as follows by $C(\rho^{\prime})=\hat{O}^T C(\rho)\hat{O}$, $\hat{\Lambda}=\hat{O}^T \hat{C}\hat{O}$, Cauchy-Schwarz and the definition of infinity norm.
    In the last step we used that, because of Lemma~\ref{le:lbantisymm}, we have $|\lambda_{j}(\hat{C})-\lambda_{j}(C)|\le \varepsilon_c$ for each $j\in [n]$, where $\{ \lambda_{j}(C)\}^{n}_{j=1}$ are the normal eigenvalues of $C$. 
    Therefore, we also have 
    \begin{align}
        \Tr(\ketbra{0}{0}_m \rho^{\prime})=\frac{1}{2}+\frac{1}{2}\Tr(Z_m \rho^{\prime})\ge
        1 -  (\varepsilon_c + \frac{1-\lambda_{m}(C)}{2}).
    \end{align}
    By using the Quantum Union Bound (Lemma~\ref{le:qunionbound}), we have:
    \begin{align}
        d_{\mathrm{tr}}(\rho^{\prime}, \rho^{\prime}_{\mathrm{post}}\otimes \ketbra {0^{n-t}}) \le \sqrt{(n-t)  \varepsilon_c + \varepsilon^2_t},
        \label{eq:Qunbound}
    \end{align}
    where $\varepsilon_t\coloneqq \sqrt{\sum^{n}_{j=t+1}(1-\lambda_j(C))/2}$, and $ \rho^{\prime}_{\mathrm{post}} \otimes \ketbra {0^{n-t}}$ is the post-measurement state after having measured the outcomes corresponding to $\ket{0^{n-t}}$ in the last $n-t$ qubits. Let $\rho^{\prime}_t$ be the reduced state of $\rho^{\prime}$ over the first $t$-qubits. We have:
\begin{align}
    d_{\mathrm{tr}}(\rho^{\prime}, \rho^{\prime}_t\otimes \ketbra {0^{n-t}}) &\le 
    d_{\mathrm{tr}}(\rho^{\prime}, \rho^{\prime}_{\mathrm{post}}\otimes \ketbra {0^{n-t}}) + d_{\mathrm{tr}}(\rho^{\prime}_{\mathrm{post}}\otimes \ketbra{0^{n-t}}, \rho^{\prime}_t\otimes \ketbra {0^{n-t}})\\
    &= 
    d_{\mathrm{tr}}(\rho^{\prime}, \rho^{\prime}_{\mathrm{post}}\otimes \ketbra {0^{n-t}}) + d_{\mathrm{tr}}(\rho^{\prime}_{\mathrm{post}}, \rho^{\prime}_t)\\
    &\le  
    2 d_{\mathrm{tr}}(\rho^{\prime}, \rho^{\prime}_{\mathrm{post}}\otimes \ketbra {0^{n-t}}) \\
    &\le  
    2 \sqrt{(n-t)  \varepsilon_c + \varepsilon^2_t},
    \label{eq:chainAPP}
\end{align}
where in the first step we used the triangle inequality, in the second step we used the unitary invariance of the trace distance, in the third step we used the data processing inequality~\cite{Wilde_2017} (with respect to the operation of tracing out a subsystem) i.e., $d_{\mathrm{tr}}(\rho^{\prime}_{\mathrm{post}}, \rho^{\prime}_t)\le d_{\mathrm{tr}}(\rho^{\prime}_{\mathrm{post}}\otimes \ketbra{0^{n-t}}, \rho^{\prime})$, and in the last step we used Eq.\eqref{eq:Qunbound}.
Using now $N_{\mathrm{tom}}\!\left(t,\frac{\varepsilon}{2},\frac{\delta}{2}\right)$ copies of the state $\rho^{\prime}=\hat{G}^{\dag}\rho \hat{G}$, we perform full-state tomography on the first $t$-qubits, obtaining $\hat{\sigma}$ such that 
\begin{align}
d_{\mathrm{tr}}(\hat{\sigma}, \rho^{\prime}_t)\le \varepsilon/2,    
\label{eq:tomAPP}
\end{align}
with $\ge 1-\delta/2$ probability.
Our output state is $\hat{\rho}  \coloneqq   \hat{G}(\hat{\sigma} \otimes \ketbra{0^{n-t}})\hat{G}^{\dag}$, and the information about such a state is provided in the output by providing the orthogonal matrix $\hat{O} \in \mathrm{O}(2n)$, which identifies $\hat{G}$, and the $t$-qubit state $\hat{\sigma}$.
Considering the trace distance between $\hat{\rho}$ and $\rho$, we have:
\begin{align}
\label{eq:finaleqAPP}
    d_{\mathrm{tr}}(\hat{\rho}, \rho)&=d_{\mathrm{tr}}(\hat{\sigma} \otimes \ketbra{0^{n-t}}, \hat{G}^{\dag} \rho \hat{G}) \\ 
    &=d_{\mathrm{tr}}(\hat{\sigma} \otimes \ketbra{0^{n-t}}, \rho^{\prime})\\
    &\le d_{\mathrm{tr}}(\hat{\sigma} \otimes \ketbra{0^{n-t}}, \rho^{\prime}_t \otimes \ketbra{0^{n-t}}) + d_{\mathrm{tr}}(\rho^{\prime}_t \otimes \ketbra{0^{n-t}}, \rho^{\prime})\\
    &\le d_{\mathrm{tr}}(\hat{\sigma}, \rho^{\prime}_t) + d_{\mathrm{tr}}(\rho^{\prime}_t \otimes \ketbra{0^{n-t}}, \rho^{\prime})\\
    & \le \frac{\varepsilon}{2} + 2 \sqrt{(n-t)  \varepsilon_c + \varepsilon_t} ,
\end{align}
where in the last step we used Eq.\eqref{eq:chainAPP} and Eq.\eqref{eq:tomAPP}.
By setting $\varepsilon_c \coloneqq  \varepsilon^2/(16(n-t))$, we get:
\begin{align}
    d_{\mathrm{tr}}(\hat{\rho}, \rho) \le \frac{\varepsilon}{2} + 2 \sqrt{ \frac{\varepsilon^2}{16} + \varepsilon^2_t} \le \varepsilon +   \varepsilon_t,
\end{align}
where we used that $\sqrt{a+b}\le \sqrt{a}+ \sqrt{b}$ for each $a,b\ge 0$. The total failure probability of the algorithm, by union bound, is $\le \delta$.
\end{proof}

As a consequence of the previous theorem, we present the following proposition, which states that if the unknown quantum state is sufficiently close in trace distance to the set of $t$-compressible Gaussian states, then our learning algorithm (Algorithm~\ref{alg:algoSMappr}) can still be effectively applied.
\begin{theorem}[Noise robustness of learning $t$-compressible Gaussian states]
\label{th:joiningpiecesMIXEDApp2}
Let $\varepsilon, \delta \in (0,1]$. Let $\rho$ be a quantum state such that
\begin{align}
    \min_{\sigma_t \in \mathcal{G}_t } d_{\mathrm{tr}}(\rho, \sigma_t) \le \frac{\varepsilon^2}{4(n-t)},
\end{align}
i.e., it is sufficiently close to the set of possibly mixed $t$-compressible states $\mathcal{G}_t$.
Then, there exists a learning algorithm which, utilizing 
\begin{align}
    N = \mathcal{O}\left(\frac{n^5}{\varepsilon^{4}}\log\left(\frac{n^2}{\delta}\right)\right) + N_{\mathrm{tom}}\left(t,\frac{\varepsilon}{4},\frac{\delta}{2}\right)
\end{align}
single-copy measurements, yields a classical representation of a state $\hat{\rho}$, satisfying $d_{\mathrm{tr}}(\hat{\rho}, \rho) \le \varepsilon$ with probability $\ge 1-\delta$. 

Here, $N_{\mathrm{tom}}(t,\frac{\varepsilon}{4},\frac{\delta}{2})$ denotes the number of copies sufficient for full-state tomography of a mixed $t$-qubit state with an $\varepsilon/4$ accuracy and a failure probability of at most $\delta/2$.
\end{theorem}

\begin{proof}
By applying the previous Theorem~\ref{th:joiningpiecesMIXEDApp} with accuracy $\varepsilon/2$ (instead of $\varepsilon$), we get:
\begin{align}
    d_{\mathrm{tr}}(\hat{\rho}, \rho) \le \frac{\varepsilon}{2} + \varepsilon_t,
    \label{eq:nstart}
\end{align}
where $\varepsilon_t \coloneqq \sum^{n}_{j=t+1}(1-\lambda_j)/2$, and $\{\lambda_j\}_{j=1}^n$ are the normal eigenvalues of the correlation matrix of $\rho$ ordered in increasing order. Let us upper bound $\varepsilon_t$. 
First of all, by assumption, it follows that there exists a $t$-compressible Gaussian state $\sigma_t$ such that 
\begin{align}
    d_{\mathrm{tr}}(\sigma_t, \rho) \le \frac{\varepsilon^2}{4(n-t)}.
    \label{eq:noiss}
\end{align}
Then, we have:
\begin{align}
    \varepsilon_t &\coloneqq  \sqrt{ \sum^{n}_{j=t+1}\frac{(1-\lambda_j)}{2}} \\
    &\le \sqrt{(n-t)\frac{(1-\lambda_{t+1})}{2}} \\
    &\le \sqrt{\frac{(n-t)}{2} \norm{C(\sigma_t)-C(\rho)}_{\infty}} \\
    &\le \sqrt{(n-t) d_{\mathrm{tr}}(\sigma_t, \rho)} \\
    &\le  \frac{\varepsilon}{2},
    \label{eq:chain3}
\end{align}
where in the third step we used Lemma~\ref{le:lbantisymm}, in the fourth step we used that the one-norm difference between two (possibly non-Gaussian) quantum states upper bounds the infinity norm difference of their correlation matrix, i.e.:
\begin{align}
    \norm{C(\sigma_t)-C(\rho)}_{\infty} \le 2 d_{\mathrm{tr}}(\sigma_t, \rho),
\end{align}
as it can be seen by the variational definition of trace distance (see the derivation of Eq.~\eqref{eq:vardeftr} for an analogous and explicit proof of the latter inequality). In the last step we used Eq.~\eqref{eq:noiss}. Thus, substituting in Eq.~\eqref{eq:nstart}, we conclude.
\end{proof}

\section{Details of numerical simulations}
\label{sec:numerics}
This section concerns a few additional details about the numerical results on impurity model dynamics, presented in Section~\ref{sec:numericsMAIN} and Figure~\ref{fig:rank_dynamics} of the main text.

In the main text we touch upon the question: can the compressibility shown in Figure~\ref{fig:rank_dynamics} be derived from our Theorem~\ref{th:1compr} for $t$-doped circuits simply using the Trotter approximation? If true, this would render our numerical results less surprising; the reasoning for such a derivation could go as follows. The evolution unitary $e^{-iHT}$ can be approximated with a product formula using some number $t'$ of Trotter steps, which for any impurity model yields a $t'$-doped circuit. Such a Trotter-approximated state $\ket{\psi(T)}_{\mathrm{Trot}; t'}$ would then be $t$-compressible for $t = \kappa t'$ (where $\kappa$ is the locality of the impurity term; for $H_{\mathrm{graph}}$ and $H_{\mathrm{TFIM}}$, $\kappa = 4$) due to Theorem~\ref{th:1compr}. 
To test whether this explains the observed compressibility, we numerically check the number of Trotter steps needed to approximate $\ket{\psi(T)}$ such that $d_{\mathrm{tr}}(\ket{\psi(T)}, \ket{\psi(T)}_{\mathrm{Trot}; t'}) \le \varepsilon_{\mathrm{trunc}}^2/(4(n-\kappa t'))$. Because of Eq.~\eqref{eq:chain3}, this would imply that $\sqrt{\sum_{m=t+1}^{n} \frac{1}{2} (1 - \lambda_m)} \leq \varepsilon_{\mathrm{trunc}}$ (i.e., Eq.~\eqref{eq:epsilon_tr_defMAIN}) is satisfied.
We find that the prediction from the Trotter approximation and compression theorem, as explained above, does not capture the dynamics of the Gaussian nullity we have observed. For example, at minimal resolved time $T = 0.05$ for both impurity models, the Trotter approximation already predicts $t = 14$ as the minimal Gaussian nullity (see vertical dotted lines in Figure~\ref{fig:rank_dynamics}, left panel). In reality, however, the expander model dynamics still allows approximate $t$-compression with $t = 4$ all the way up to $T=0.4$ (and the TFIM model even up to $T=1.1$). 

One may ask, why exactly is this analysis, by Trotterization and Theorem~\ref{th:1compr}, coming so short of explaining the observed compressibility. In our opinion, this should be related to the fact that the non-Gaussian gates in the Trotterized evolution circuits are close to identity. One intuitively expects that states produced in such circuits would be much more compressible (approximately), than if the involved non-Gaussian gates were generic, far from identity. However, Theorem~\ref{th:1compr} does not account for this difference, and gives the same compressibility guarantee if the number of non-Gaussian gates is the same; also, importantly, Theorem~\ref{th:1compr} only deals with exact and not approximate compressibility. It is at present unclear how to account for the nature of non-Gaussian evolution beyond gate-counting, so that our numerical results on compressibility of impurity model dynamics could be explained analytically.

\begin{figure}[h]
    \centering 
    \includegraphics[width=0.69\linewidth]{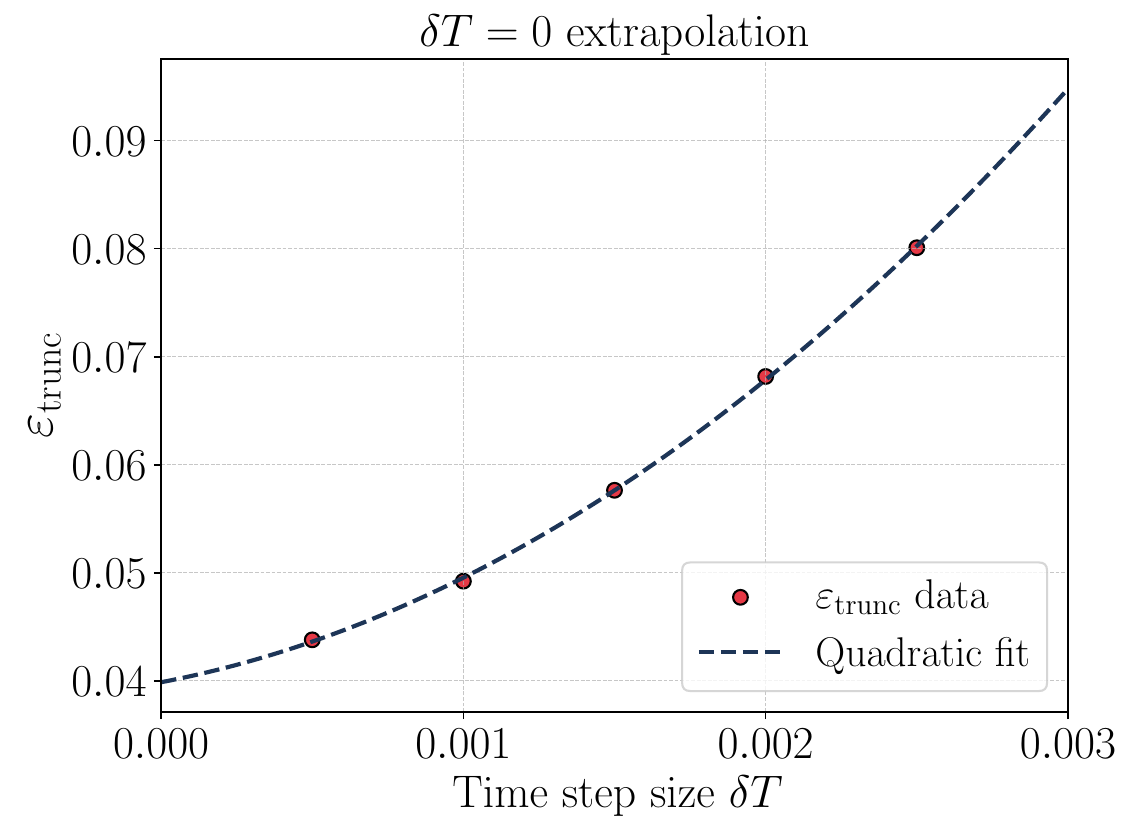}
    \caption{Extraction of truncation error $\varepsilon_{\mathrm{trunc}}$ from sparse simulation via $\delta T\rightarrow 0$ extrapolation, used to produce Figure~\ref{fig:rank_dynamics}. The displayed extrapolation is for the most challenging case of the expander model on $n=16$ qubits. Here the evolution time is $T=0.4$ and the truncation nullity is $t=4$. The relevant scale for the error is the truncation value $\varepsilon_{\mathrm{trunc}}=0.05$, used in Figure~\ref{fig:rank_dynamics}.}
    \label{fig:dt0_ext}
\end{figure}

To run the simulations of $\ket{\psi(T)}=e^{-iHT} \ket{0^n}$ presented in Figure~\ref{fig:rank_dynamics}, we employed sparse matrix multiplication routines. Specifically, we approximated the exponential operator as $e^{-iHT} \simeq (1-iH\delta T)^{T/\delta T}$, followed by state renormalization. To ensure that our results are as accurate as possible, the value of $\varepsilon_{\rm trunc}$ for each $\ket{\psi(T)}$ was obtained from a $\delta T\rightarrow 0$ extrapolation. In particular, it was computed $\varepsilon_{\rm trunc}$ for $\delta T\in \{ 0.0005,0.001, 0.0015, 0.002,0.0025 \}$ and then extended to $\delta T=0$ via a parabolic fit (see Figure~\ref{fig:dt0_ext}). This approach enabled simulations for systems up to $n=16$ qubits in Figure~\ref{fig:rank_dynamics}.


\end{document}